\title{Log-Sobolev inequality for the $\varphi^4_2$ and $\varphi^4_3$ measures}
\author{Roland Bauerschmidt\footnote{University of Cambridge, Statistical Laboratory, DPMMS. E-mail: {\tt rb812@cam.ac.uk}.}
\and
Benoit Dagallier\footnote{University of Cambridge, Statistical Laboratory, DPMMS. E-mail: {\tt bd444@cam.ac.uk}.}}
\date{\vspace*{-2em}} 
\begin{document}
\maketitle

\begin{abstract}
The continuum $\varphi^4_2$ and $\varphi^4_3$ measures are shown to satisfy a log-Sobolev
inequality uniformly in the lattice regularisation under the optimal assumption that their
susceptibility is bounded.
In particular, this applies to all coupling constants in any finite volume,
and uniformly in the volume
in the entire high temperature phases of the $\varphi^4_2$ and $\varphi^4_3$ models.

The proof uses a general criterion for the log-Sobolev inequality in terms of the
Polchinski (renormalisation group) equation, a recently proved remarkable
correlation inequality for Ising models with general external fields, the
Perron--Frobenius theorem, and bounds on the susceptibilities of the
$\varphi^4_2$ and $\varphi^4_3$ measures obtained using skeleton inequalities.
\end{abstract}

\section{The $\varphi^4_d$ measure}
Let $d=2$ or $d=3$, and
let $\Lambda_{\epsilon,L} = L\T^d\cap \epsilon\Z^d$ (and always assume $L$ is a multiple of $\epsilon$).
For $\lambda>0$ and $\mu \in \R$, the lattice regularised (continuum) $\varphi^4_d$ measure 
is defined as the probability measure
\begin{equation} 
  \nu^{\epsilon,L}_{\lambda,\mu}(d\varphi) \propto
  \exp\left[-\epsilon^d\sum_{x\in\Lambda_{\epsilon,L}} \left[ \frac12 \varphi_x(-\Delta^\epsilon\varphi)_x +\frac{\lambda}{4}\varphi_x^4 + \frac{\mu+a^\epsilon(\lambda)}{2} \varphi_x^2\right]\right]\, d\varphi,
  \label{eq_def_continuum_phi4_measure}
\end{equation}
where $d\varphi$ denotes the Lebesgue measure on $\R^{\Lambda_{\epsilon,L}}$,
the lattice Laplacian $\Delta^\epsilon$ is:
\begin{equation}
\forall \varphi\in\R^{\Lambda_{\epsilon,L}},
\qquad
(\Delta^\epsilon\varphi)_x =\epsilon^{-2}\sum_{y\sim x}\big[\varphi_y-\varphi_x\big],\label{eq_def_continuous_Laplacian}
\end{equation}
and $a^\epsilon(\lambda)$ is a dimension-dependent divergent counterterm which ensures that the $\epsilon\rightarrow 0$ limit of the measure
(on a suitable space of generalised functions on $L\T^d$) exists and is non-Gaussian.
The construction of the limiting measures has a long history, and we give some references further below.
The division of $\mu+a^\epsilon(\lambda)$ into a finite mass term $\mu$ (which plays the role of a temperature of the model)
and a divergent counterterm $a^\epsilon(\lambda)$ is only determined up to an additive bounded constant.
Explicitly, for an arbitrary fixed $m^2 > 0$, which we will usually take to be $m^2=1$, one can take $a^\epsilon(\lambda)=a^\epsilon(\lambda,m^2)$ with
\begin{equation} \label{e:counterterm1}
  a^\epsilon(\lambda,m^2) := -3\lambda \big(-\Delta^\epsilon+m^2)^{-1}(0,0) + 6\lambda^2 \big\|\big(-\Delta^\epsilon+m^2\big)^{-1}(0,\cdot)\big\|^3_{L^3}.
\end{equation}
In this definition and subsequently,
the matrix elements of $(-\Delta^\epsilon+m^2)^{-1}$ are normalised 
with respect to the inner product
$(u,v)_\epsilon = \epsilon^d \sum_{x\in\Lambda_{\epsilon,L}} u(x)v(x)$
so that $(-\Delta^\epsilon+m^2)^{-1}(x,y)$ converges to its continuum counterpart,
and $L^p$ norms are defined by $\|f\|_{L^p}^p = \|f\|_{L^p(\Lambda_{\epsilon,L})}^p = \epsilon^d \sum_{x\in\Lambda_{\epsilon,L}} |f(x)|^p$.
This implies the following scaling 
(with dimension-dependent constants $c_i>0$):
\begin{align}  \label{e:counterterm2}
  a^\epsilon(\lambda,m^2)  
  = 
  \begin{cases}
    -c_1\lambda \log(\epsilon^{-2}) + O_{m^2,\lambda}(1)  & (d=2),\\
    -c_1\lambda\epsilon^{-1}+c_2\lambda^2\log(\epsilon^{-2}) + O_{m^2,\lambda}(1) & (d=3).
  \end{cases} 
\end{align}
Note that when $d=2$ the $\lambda^2$ counterterm in \eqref{e:counterterm1} is bounded
and part of $O_{m^2,\lambda}(1)$ in \eqref{e:counterterm2},
and could thus be dropped, but we will keep it to unify the presentation of both dimensions.
Moreover, as discussed already,
any value of $m^2>0$ can be used,
and 
for this reason, we often do not explicitly mention $m^2$ in the notation $\nu^{\epsilon,L}_{\lambda,\mu}$.
Different choices of $m^2$ only correspond to a different choice of origin for $\mu$.
Expectation with respect to \eqref{eq_def_continuum_phi4_measure} 
is denoted by $\big<\cdot\big>^{\epsilon,L}_{\lambda,\mu}$.

The Glauber dynamics with invariant measure $\nu^{\epsilon,L}_{\lambda,\mu}$, also known as the dynamical $\varphi^4_d$ model, is the solution to the systems of SDEs
\begin{equation} \label{e:SDE}
  d\varphi_t = (\Delta^\epsilon \varphi_t - \lambda \varphi_t^3 - (\mu+a^\epsilon(\lambda))\varphi_t)\, dt + \sqrt{2}dW_t^{\epsilon,L}
\end{equation}
where $dW_t^{\epsilon,L}$ is space-time white noise on $\R_+ \times \Lambda_{\epsilon,L}$, i.e., $W^{\epsilon,L}$ is a Brownian motion on $L^2(\Lambda_{\epsilon,L})$
(or, in other words, the $W_t^{\epsilon,L}(x)$ are independent Brownian motions of variance $\epsilon^{-d}$ for $x\in \Lambda_{\epsilon,L}$).
For $d=3$, the pathwise existence of the $\epsilon\to 0$ limit is a main result of Hairer's theory of regularity structures
\cite{MR3785597,MR3274562,MR3468250}; for $d=2$ this existence is simpler \cite{MR2016604}; for $d=3$ see also \cite{MR3406823,MR3846835,MR3758734}
and \cite{MR3459120} for alternative approaches to the pathwise limit.
By standard results for SDEs, the solution to \eqref{e:SDE} determines a Markov process with Dirichlet form given by
\begin{equation} \label{e:dirichlet_continuum}
  D(F) = \epsilon^{-d} \sum_{x\in\Lambda_{\epsilon,L}} \avgB{ \Big(\frac{\partial F}{\partial\varphi_x}\Big)^2 }_{\lambda,\mu}^{\epsilon,L}.
\end{equation}
Our main result is a log-Sobolev inequality for the $\varphi^4_d$ measures, stated in Theorem~\ref{thm:lsiphi4d} below.
We formulate it for
$\epsilon>0$ with uniform $\epsilon$ dependence
(and also uniform $L$ dependence when the susceptiblity is bounded).
All constants that appear in its statement could be made explicit.
The log-Sobolev constant of \eqref{eq_def_continuum_phi4_measure} is the largest constant $\gamma = \gamma^{\epsilon,L}(\lambda,\mu)$ such that
\begin{equation} \label{e:lsi_continuum}
\forall F:\R^{\Lambda_{\epsilon,L}}\rightarrow\R_+,\qquad  
\ent(F) \leq \frac{2}{\gamma} D(\sqrt{F}),
\end{equation}
where, for $F$ nonnegative, the relative entropy is defined by
\begin{equation}
  \ent(F) = \avg{F \log F}_{\lambda,\mu}^{\epsilon,L} - \avg{F}_{\lambda,\mu}^{\epsilon,L}\log \avg{F}_{\lambda,\mu}^{\epsilon,L}.
\end{equation}
The log-Sobolev inequality has many general implications,
see the references~\cite{MR3155209,MR1971582} for reviews of these.
In particular, it is equivalent to the hypercontractivity of the Markov semigroup associated with the Dirichlet form,
and to its exponential relaxation to equilibrium in relative entropy sense for initial conditions of finite relative entropy.

For the continuum $\varphi^4_d$ measures, the log-Sobolev inequality was conjectured in~\cite{MR1370101}. 
Spectral gap inequalities 
(quantifying the relaxation in $L^2$ norm with respect to the invariant measure) and exponential ergodicity (in total variation)
for the $\varphi^4_d$ measures in finite volume have been proven in \cite{MR3825880} for $d=2$
and extended to $d=3$ in \cite{10.1017/fmp.2021.18}.
These results apply directly in the continuum  limit $\epsilon\to 0$.
The Dirichlet form in the continuum limit was identified in \cite{MR3858906}.
A main ingredient of the proof of the above spectral gap inequalities is a qualitative support theorem for the continuum SPDEs and a compactness argument.
In particular, the spectral gap obtained this way is not uniform in $L$ and not explicit, and the validity of the log-Sobolev inequality also remained open.

We remark that, for any $\epsilon>0$ and $L<\infty$, the regularised $\varphi^4_d$ measure is convex at infinity and
standard techniques (see, e.g., \cite{MR1837286})
show that $\gamma^{\epsilon,L}(\lambda,\mu)>0$, but the bound obtained in this way
tends to $0$ (extremely  quickly) as $\epsilon\to 0$ or $L\to\infty$,
as the counterterms \eqref{e:counterterm1}--\eqref{e:counterterm2}
make the microscopic measure very non-convex
as $\epsilon \to 0$ (for any $\lambda>0$ and $\mu \in \R$).

\begin{theorem} \label{thm:lsiphi4d}
  Let $d=2$ or $d=3$ and $\epsilon>0$. Let $L \geq 1$ assumed to be a multiple of $\epsilon>0$.
  
  (i)
  Let $\lambda>0$, $\mu \in \R$,
  and assume that there is a constant $\bar\chi \in (0,\infty)$ such that
  \begin{equation} \label{e:chi-ub}
    \chi^{\epsilon,L}(\lambda,\mu) :=
    \epsilon^d \sum_{x\in\Lambda_{\epsilon,L}} \langle \varphi_0\varphi_x\rangle^{\epsilon,L}_{\lambda,\mu} \leq \bar\chi.
  \end{equation}
  Then the log-Sobolev constant $\gamma^{\epsilon,L}(\lambda,\mu)$ of \eqref{eq_def_continuum_phi4_measure}
  is bounded below by a positive constant $\bar\gamma = \bar\gamma(\lambda,\mu,\bar\chi)$ uniformly in $\epsilon$
  and the bound depends only on $(\lambda,\mu,\bar\chi)$ and not directly on $L$ (or $\epsilon$):
  \begin{equation}
    \gamma^{\epsilon,L}(\lambda,\mu) \geq \bar \gamma(\lambda,\mu,\bar\chi).
  \end{equation}
  
  (ii)
  For any $\lambda,\mu>0$, there are $\mu(d,\lambda),\lambda(d,\mu)>0$ such that
  if the counterterms in \eqref{e:counterterm1}--\eqref{e:counterterm2} are chosen with $m^2=\mu$ instead of $m^2=1$,
  and if either $\mu>\mu(d,\lambda)$ or $\lambda \in[0,\lambda(d,\mu)]$, 
  then the log-Sobolev constant satisfies,
  uniformly in $\epsilon$ and $L$,
  \begin{equation} \label{e:mugamma-(ii)}
    C^{-1} \mu \leq \gamma^{\epsilon,L}(\lambda,\mu) \leq C \mu.
  \end{equation}
\end{theorem}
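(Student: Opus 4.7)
The plan is to prove the theorem by combining the Polchinski renormalisation group equation with the susceptibility input \eqref{e:chi-ub}. First, I would decompose the Gaussian with covariance $(-\Delta^\epsilon+m^2)^{-1}$ as $\int_0^\infty \dot C_t\,dt$ with each $\dot C_t$ a nonnegative kernel, introduce the renormalised potential $V_t$ (satisfying the Polchinski equation) and the corresponding renormalised measure $\nu_t$ at scale $t$, defined as the Gaussian with covariance $C_\infty - C_t$ tilted by $e^{-V_t}$. The strategy is then to apply a general criterion — derived by integration by parts along the Polchinski flow, which I would state as a black box — that lower-bounds the LSI constant in terms of scale-dependent operator norms of the covariance of $\nu_t$. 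Schematically,
\begin{equation*}
  \gamma^{-1} \lesssim \int_0^\infty \|\dot C_t\|_{\mathrm{op}}\,\|\mathrm{Cov}(\nu_t)\|_{\mathrm{op}}\,dt,
\end{equation*}
possibly with modified norms on the right-hand side. The task is thereby reduced to bounding $\|\mathrm{Cov}(\nu_t)\|_{\mathrm{op}}$ uniformly in $t$, $\epsilon$ and $L$.

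For part (i), I would approximate $\nu^{\epsilon,L}_{\lambda,\mu}$ by an Ising model with single-site spin distributions in a finite set (a Griffiths--Simon type construction) and then apply the recent correlation inequality for Ising models with general external fields to identify $\mathrm{Cov}(\nu_t)$ with a conditional expectation of correlations under the original measure, showing $0 \leq \mathrm{Cov}(\nu_t)(x,y) \leq \langle\varphi_x\varphi_y\rangle^{\epsilon,L}_{\lambda,\mu}$. Because $\mathrm{Cov}(\nu_t)$ is then a symmetric nonnegative matrix, Perron--Frobenius bounds its operator norm by its $\ell^\infty\to\ell^\infty$ norm, giving
\begin{equation*}
  \|\mathrm{Cov}(\nu_t)\|_{\mathrm{op}} \leq \sup_x \sum_y \langle\varphi_x\varphi_y\rangle^{\epsilon,L}_{\lambda,\mu} = \chi^{\epsilon,L}(\lambda,\mu) \leq \bar\chi.
\end{equation*}
Combining this with the finite integral $\int_0^\infty \|\dot C_t\|_{\mathrm{op}}\,dt<\infty$ for $m^2>0$ yields the $\epsilon,L$-uniform lower bound $\gamma \geq \bar\gamma(\lambda,\mu,\bar\chi)$.

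For part (ii), I would verify the hypothesis $\chi \leq C/\mu$ of part (i) in the high-temperature regimes (small $\lambda$ or large $\mu$) using skeleton inequalities, which compare the full $\varphi^4$ susceptibility to the Gaussian susceptibility $\mu^{-1}$. Choosing $m^2 = \mu$ rather than $m^2 = 1$ in \eqref{e:counterterm1} makes the $\mu$-dependence of $\dot C_t$ clean (each $\|\dot C_t\|_{\mathrm{op}}$ gains an exponential decay in $\mu t$), so that tracking constants carefully through the bound of part (i) produces $\gamma \geq c\mu$. The matching upper bound $\gamma \leq C\mu$ follows from testing \eqref{e:lsi_continuum} against a small perturbation $F = (1+\delta\varphi_0)^2$ of the constant function, which makes both sides explicit and uses the complementary Gaussian lower bound $\chi \gtrsim 1/\mu$ already implicit in the setup.

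The hard part will be the second step: preserving enough of the Ising/FKG structure under the Polchinski flow to legitimately apply the correlation inequality at every scale $t$, since the renormalised potential $V_t$ is far from a simple $\varphi^4$ form. This appears to require representing $\nu_t$ as a conditional expectation of the original $\varphi^4$ measure given the field on scales $>t$, so that the correlation inequality is applied before conditioning and only its monotonicity in the field is used. A secondary technical obstacle is ensuring that the divergent counterterm $a^\epsilon(\lambda)$ is absorbed naturally into the Polchinski flow so that the scale-by-scale bounds are uniform in $\epsilon$, which is the reason the Griffiths--Simon Ising approximation — rather than the bare continuum measure — is the right object to feed into the Ising correlation inequality.
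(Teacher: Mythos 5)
Your architecture matches the paper's in outline (Polchinski criterion, the Ding--Song--Sun inequality transferred to $\varphi^4$ via Griffiths--Simon, Perron--Frobenius to convert the pointwise bound into a spectral one), and your worry about "preserving the structure under the flow" is actually a non-issue: with the Pauli--Villars decomposition $C_t=(A+1/t)^{-1}$ the Hessian of the renormalised potential is \emph{exactly} $C_t^{-1}-C_t^{-1}\Sigma_t(\varphi)C_t^{-1}$, where $\Sigma_t(\varphi)$ is the covariance of the genuine $\varphi^4$ measure with additional mass $1/t$ and external field $C_t^{-1}\varphi$, so the correlation inequality applies directly with no conditional-expectation representation needed. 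The genuine gap is elsewhere: you bound the scale-$t$ covariance by the susceptibility $\bar\chi$ of the \emph{original} measure, uniformly in $t$. This kills the argument in the ultraviolet regime. The quantity entering the criterion is $\dot\kappa_t = 1/t - \chi_t/t^2$ where $\chi_t$ is the susceptibility of the measure with additional mass $1/t$; with only $\chi_t\le\bar\chi$ one gets $\int_0^t\dot\kappa_s\,ds=-\infty$ for every $t>0$ (the term $-\bar\chi/s^2$ is not integrable at $s=0$), so the exponential in the criterion is identically $+\infty$ and the bound is vacuous. What is actually required for small $t$ is $\chi_t\le \|C_t\|_{L^1}+p_t(\lambda) = t/(1+m^2t)+p_t(\lambda)$ with $\int_0^{t_0}t^{-2}p_t(\lambda)\,dt<\infty$, i.e.\ the interacting susceptibility at scale $t$ must be compared to the \emph{free} one with an error integrable against $t^{-2}dt$. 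That estimate is the main technical content of the paper (its Section~4 and appendix): it is proved with the Brydges--Fr\"ohlich--Sokal skeleton inequalities, tracking the differences between counterterms defined at mass $m^2$ and at mass $m^2+1/t$. You invoke skeleton inequalities only in part~(ii) to verify the hypothesis of part~(i), but that hypothesis concerns the $t=\infty$ susceptibility and is not where they are needed; they are indispensable for part~(i) itself, at small scales, for \emph{all} couplings. The hypothesis $\chi^{\epsilon,L}(\lambda,\mu)\le\bar\chi$ is used only for $t>t_0$, via Griffiths' second inequality ($\chi_t$ is increasing in $t$).

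Two secondary points. First, your schematic criterion $\gamma^{-1}\lesssim\int\|\dot C_t\|_{\mathrm{op}}\|\mathrm{Cov}(\nu_t)\|_{\mathrm{op}}\,dt$ is additive across scales, whereas the true criterion is multiplicative, $\gamma^{-1}\le\int_0^\infty\exp[-2\int_0^t\dot\ell_s\,ds]\,dt$; the positive part $1/t$ of $\dot\kappa_t$ (coming from $C_t^{-1}$ in the Hessian together with $-\tfrac12\ddot C_t$) is essential to make the outer integral converge, and your schematic form discards it. Second, in part~(ii) the claim that $\|\dot C_t\|_{\mathrm{op}}$ gains exponential decay in $\mu t$ is false for Pauli--Villars ($\dot C_t=t^{-2}(A+1/t)^{-2}$ decays only like $(1+\mu t)^{-2}$), though polynomial decay suffices; and for the lower bound $\gamma\ge c\mu$ one again needs the susceptibility estimate at \emph{all} scales $t$ (the paper's Proposition~4.2), not merely $\chi\le C/\mu$ at $t=\infty$. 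Your upper-bound trial function would work, though the linear function $\epsilon^dL^{-d/2}\sum_x\varphi_x$ used in the paper is simpler.
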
 

The choice $m^2=\mu$ instead of $m^2=1$ in item (ii) amounts to a different choice of origin for $\mu$,
  and \eqref{e:mugamma-(ii)} states that this choice ensures that the log-Sobolev constant is comparable to the mass term $\mu$.
Equivalently, this statement could be formulated in terms of $m^2=1$ when replacing $\mu$ in \eqref{e:mugamma-(ii)} on both sides by $\mu+a^{\epsilon}(\lambda,\mu)-a^{\epsilon}(\lambda,1)$, where $a^\epsilon(\lambda,\mu)-a^\epsilon(\lambda,\mu)$ converges as $\epsilon \to 0$.

Item (i) says that the log-Sobolev constant is bounded below as soon as the susceptibility is bounded above. 
This condition is optimal, in the sense that the converse implication also holds, i.e., the susceptibility is bounded above if the log-Sobolev constant is bounded below.
Indeed, by using the trial function $F(\varphi) = \epsilon^d L^{-d/2} \sum_{x\in\Lambda_{\epsilon,L}} \varphi_x$ and observing that
$\var(F) = \chi^{\epsilon,L}(\lambda,\mu)$ and $D(F) = 1$, the general fact that
the spectral gap is bounded below by the log-Sobolev constant implies that the optimal log-Sobolev constant $\gamma^{\epsilon,L}(\lambda,\mu)$
of  \eqref{eq_def_continuum_phi4_measure} satisfies
\begin{equation}
  \gamma^{\epsilon,L}(\lambda,\mu) \leq \frac{1}{\chi^{\epsilon,L}(\lambda,\mu)}.
\end{equation}
Together with Theorem~\ref{thm:lsiphi4d} (i)  this shows that $\gamma^{\epsilon,L}(\lambda,\mu)$ is bounded below if and only if $\chi^{\epsilon,L}(\lambda,\mu)$ is bounded above.

The boundedness of the susceptibility $\chi^{\epsilon,L}(\lambda,\mu)$ in both $\epsilon$ and $L$ provides a standard definition of the
critical temperature $\mu_c(\lambda)$. Thus, in other words, Theorem~\ref{thm:lsiphi4d} (i) shows that the log-Sobolev constant
of the $\varphi^4_d$ measure is bounded below throughout the high temperature phase $\mu>\mu_c(\lambda)$.
For background, we summarise the following known properties of the susceptibility of $\varphi^4_d$ models ($d=2,3$),
say with the counterterms \eqref{e:counterterm1} defined with $m^2=1$:
\begin{enumerate}
\item For any $\lambda>0$ and $\mu\in \R$, the finite volume susceptibility $\chi^{\epsilon,L}(\lambda,\mu)$ is bounded in $\epsilon$ for any fixed $L<\infty$.
  For example, this can be shown using the methods of the recent works \cite{MR4252872,MR4173157}.
\item For any $\lambda>0$, there is $\mu_0(\lambda) \in \R$ such that for $\mu>\mu_0(\lambda)$, the susceptibility is also bounded both in $\epsilon$ and $L$.
  For example, a very simple proof of this was given in \cite{MR723546} whose method of skeleton inequalities we also apply as an ingredient;
  see Section~\ref{sec:suscept-phi4d}.
\item For any $\lambda>0$, there is $\mu_1(\lambda) \in \R$ such that for $\mu<\mu_1(\lambda)$, the susceptibility is not bounded in $L$.
  For $d=2$, see \cite{MR0443695}, and for $d=3$, see \cite{MR0421531} and the recent work \cite{MR4426324}. 
\item
  By the second Griffiths inequality,
  the susceptibility is decreasing in $\mu$ for any fixed $\lambda>0$, provided the counterterms are defined in terms of $m^2$ independent of $\mu$, e.g., $m^2=1$.
  Thus with this choice (ii) and (iii) determine a unique $\mu_c(\lambda) \in \R$ that separates the $\mu$ for which the susceptibility is bounded or unbounded.
\end{enumerate}

While the above shows that the log-Sobolev constant is positive throughout the high temperature phase,
the bound obtained as $\mu \downarrow \mu_c(\lambda)$ is explicit in terms of the susceptibility but far from the expected truth.
For large $\mu$, item~(ii) of Theorem~\ref{thm:lsiphi4d} determines the log-Sobolev constant up to multiplicative constants.

For general background on the $\varphi^4_2$ measures, see the textbooks \cite{MR0489552,MR887102},
and for $\varphi^4_3$
the discussions of the existing references in the introductions of \cite{MR723546,MR4252872}.
Aside from the references already mentioned above,
recent progress of the SPDE perspective include \cite{MR3693966,MR3719541,MR4164267}.

Our proof of Theorem~\ref{thm:lsiphi4d} relies on four main ingredients: the criterion for the log-Sobolev inequality
in terms of the Polchinski (renormalisation group) equation from \cite{MR4303014},
the recently proved correlation inequality for the Ising model with general external fields from \cite{2107.09243},
the Perron--Frobenius theorem,
and improvements of the estimates for the $\varphi^4_d$ measures from \cite{MR723546}.
As a simple application of the first part of the proof, we also show (the also new result) that the log-Sobolev constant of the lattice $\varphi^4$ model
is bounded below uniformly in the volume up to the critical point (see Section~\ref{sec:latticephi4}).
In \cite{lsiising}, we apply related methods to the Ising model.

\section{Criterion for the log-Sobolev inequality}
\label{sec:criterion}

In this section, we formulate a sufficient criterion for a log-Sobolev inequality for $\varphi^4$ measures.
The criterion is first stated for $\varphi^4$ measures in a general setting (thus applying in particular to any graph and any dimension),
in Theorem~\ref{thm:criterion},  and
then specialised to the continuum $\varphi^4_d$ measure \eqref{eq_def_continuum_phi4_measure} in Theorem \ref{thm:criterion_continuum}.
The criterion could be further extended to more general spin systems in the Griffiths--Simon class \cite{MR428998}.
We do not pursue  this generalisation explicitly, but
in \cite{lsiising}, we present a variant of the criterion for the Ising model (with an additional ingredient needed to pass from continuous to discrete spins).

Let $\Lambda$ be a finite set, let $A = (A_{x,y})_{x,y\in\Lambda}\subset \R$ a symmetric matrix,
let $g>0$ and $\nu \in\R$,
and consider the general $\varphi^4$ measure 
with external field $h\in\R^\Lambda$ on $\Lambda$:
\begin{equation}
  \mu^{\Lambda,h}_{A,g,\nu}(d\varphi) \propto
  \exp\Big[-\frac{1}{2}\big(\varphi,A\varphi\big)- V_0(\varphi)+ (h,\varphi)\Big] \, d\varphi,\label{eq_def_lattice_phi4_measure}
\end{equation}
with $(u,v) = \sum_{x\in\Lambda}u_xv_x$ for $u,v\in\R^\Lambda$ (there is no $\epsilon$ here),
and $V_0$ the potential:
\begin{equation} \label{eq_V0general}
\forall \varphi\in\R^\Lambda,\qquad V_0(\varphi) =  \sum_{x\in\Lambda} (\frac14 g\varphi^4_x + \frac12 \nu\varphi^2_x).
\end{equation}
Since the set $\Lambda$ is fixed in the following, we omit it and write $\mu^{h}_{A,g,\nu}$ instead of $\mu^{\Lambda,h}_{A,g,\nu}$, 
and $\big<\cdot\big>^{h}_{A,g,\nu}$ for the associated expectation. 
Also, when $h=0$, we simply write $\mu_{A,g,\nu}$ and $\big<\cdot \big>_{A,g,\nu}$.
In this setting, 
the log-Sobolev constant of \eqref{eq_def_lattice_phi4_measure}   is the best constant $\gamma=\gamma(A,g,\nu)$ such that:
\begin{equation}
 \forall F:\R^\Lambda\rightarrow\R_+,\qquad 
 \ent(F) \leq \frac{2}{\gamma} D(\sqrt{F}),
\end{equation}
with:
\begin{equation}
  \ent(F) = \avg{F \log F}_{A,g,\nu} - \avg{F}_{A,g,\nu}\log \avg{F}_{A,g,\nu},
    \qquad 
    D(F) = \sum_{x\in\Lambda}\Big<\Big(\frac{\partial F}{\partial \varphi_x}\Big)^2\Big>_{A,g,\nu}.
\end{equation}
For $t>0$, consider the measure $\mu_{A,g,\nu+1/t}$ with additional mass $1/t$, 
and write $S_t = S_{A,g,\nu+1/t}$ for its two-point correlation function:
\begin{equation}
\forall x,y\in\Lambda,\qquad S_t(x,y) := \big<\varphi_x\varphi_y\big>_{A,g,\nu+1/t}.
\end{equation}
Introduce also the susceptibility $\chi_t = \chi_{A,g,\nu+1/t}$:
\begin{equation}
\chi_t := \max_{x\in\Lambda}\sum_{y\in\Lambda}S_t(x,y).\label{eq_susceptibility}
\end{equation}
In particular, when $\Lambda$ is a torus and $A_{x,y} = A_{0,x-y}$ for all $x,y \in\Lambda$, then
$S_t$ is translation invariant as well, and $\chi_t$ is the usual susceptibility.
Define then:
\begin{equation}
  \dot\kappa_t = \frac{1}{t} - \frac{\chi_t}{t^2}.\label{eq_dot_lambda_t}
\end{equation}
\begin{theorem} \label{thm:criterion}
Assume that $A$ is positive definite, 
and that offdiagonal coefficients of $A$ are nonpositive. 
The log-Sobolev constant is then bounded by:
\begin{equation}
  \frac{1}{\gamma}\leq \int_0^\infty\exp\Big[-2\int_0^t\dot\kappa_s \, ds\Big]\, dt.
\end{equation}
\end{theorem}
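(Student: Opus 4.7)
The plan is to apply the Polchinski renormalisation group criterion for the log-Sobolev inequality from \cite{MR4303014}, combined with the Ising correlation inequality of \cite{2107.09243} and Perron--Frobenius. The starting point is the Gaussian scale decomposition $C_t = (A+1/t)^{-1}$, with derivative $\dot C_t = t^{-2} C_t^2$. The assumption that $A$ is positive definite with nonpositive offdiagonal entries (i.e., $A$ is an $M$-matrix) ensures that $C_t$ and $\dot C_t$ have nonnegative entries, which is essential for the correlation inequality used below. The renormalised potential $V_t$ is defined so that the Polchinski flow $\mu_t \propto e^{-V_t}\, d\mathcal{N}(0,C_t)$ interpolates from a Dirac mass at $\varphi=0$ (at $t=0$) to the target $\mu_{A,g,\nu}$ (at $t=\infty$).

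The criterion from \cite{MR4303014} reduces the log-Sobolev constant to control of the convexity of $V_t$ along the flow: it yields
\[
\frac{1}{\gamma} \leq \int_0^\infty \exp\Big(-2\int_0^t \lambda_s\, ds\Big)\, dt
\]
whenever $\lambda_t$ is a pointwise-in-$\varphi$ lower bound on the appropriately normalised Hessian of $V_t$ (schematically, $\dot C_t^{1/2}\,\mathrm{Hess}\, V_t\, \dot C_t^{1/2} \geq -\lambda_t I$). It therefore suffices to produce the bound $\lambda_t \geq \dot\kappa_t$.

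Differentiating the defining Gaussian convolution gives the standard identity
\[
\mathrm{Hess}\, V_t(\varphi) = E_{t,\varphi}\big[\mathrm{Hess}\, V_0\big] - \mathrm{Cov}_{t,\varphi}\big[\nabla V_0\big],
\]
where $E_{t,\varphi}$ denotes expectation under a $\varphi^4$-type measure tilted by a $\varphi$-dependent linear external field. The first term is diagonal with nonnegative entries, equal to $\mathrm{diag}(3g\,\langle(\varphi+\zeta)_x^2\rangle + \nu)$, and after normalisation by $\dot C_t$ it contributes the $1/t$ piece of $\dot\kappa_t$. The tilted covariance in the second term depends on $\varphi$ and needs to be dominated by a $\varphi$-independent quantity; this is the main technical obstacle. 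It is here that the correlation inequality of \cite{2107.09243} enters: applied via a Griffiths--Simon approximation of $\varphi^4$ by Ising spins, it implies that the tilted two-point function is dominated entrywise by the untilted two-point function $S_t(x,y)$ of $\mu_{A,g,\nu+1/t}$.

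Once the Hessian is bounded below in terms of the nonnegative symmetric matrix $S_t$ (nonnegativity by the second Griffiths inequality, which holds because $\mu_{A,g,\nu}$ is ferromagnetic), the Perron--Frobenius theorem converts its operator norm to its maximum row sum, which is exactly $\chi_t$ by \eqref{eq_susceptibility}. After tracking the factors of $\dot C_t$, this produces the lower bound $\lambda_t \geq 1/t - \chi_t/t^2 = \dot\kappa_t$, and the theorem follows from the criterion above. The step I expect to be hardest is the tilted-to-untilted reduction, which rests on the recent Ising inequality of \cite{2107.09243}; in contrast, the Polchinski setup and the Perron--Frobenius step are essentially algebraic once that ingredient is in hand.
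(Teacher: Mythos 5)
Your overall architecture matches the paper's proof exactly: the Polchinski criterion from \cite{MR4303014} with the Pauli--Villars decomposition $C_t=(A+1/t)^{-1}$, the Ding--Song--Sun inequality transferred to $\varphi^4$ via Griffiths--Simon, Perron--Frobenius to pass from the entrywise bound to an operator-norm bound, and the row-sum bound giving $\chi_t$. However, the execution has a concrete gap in the Hessian step. You write $\He V_t(\varphi)=E_{t,\varphi}[\He V_0]-\mathrm{Cov}_{t,\varphi}[\nabla V_0]$. This identity is correct, but the covariance appearing in it is that of $\nabla V_0(\varphi+\zeta)$, i.e.\ of \emph{cubic} functions $g(\varphi+\zeta)_x^3+\nu(\varphi+\zeta)_x$ of the field, and the correlation inequality of \cite{2107.09243} controls only the two-point function $\langle\zeta_x;\zeta_y\rangle$ of the field itself. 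So the decomposition you state does not interface with the inequality you invoke. The paper instead completes the square in the Gaussian convolution to obtain
\begin{equation*}
\He V_t(\varphi)=C_t^{-1}-C_t^{-1}\Sigma_t(\varphi)C_t^{-1},\qquad
\Sigma_t(\varphi)_{x,y}=\big\langle\zeta_x;\zeta_y\big\rangle^{C_t^{-1}\varphi}_{A,g,\nu+1/t},
\end{equation*}
which exhibits precisely the field covariance under a $\varphi$-dependent external field; that is the object the Ding--Song--Sun bound dominates by $S_t(x,y)$.

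Relatedly, your accounting of where the $1/t$ in $\dot\kappa_t$ comes from does not work. The matrix $E_{t,\varphi}[\He V_0]=\mathrm{diag}(3g\langle(\varphi+\zeta)_x^2\rangle+\nu)$ need not have nonnegative entries since $\nu\in\R$ is allowed to be negative, and in any case it is not the source of the $1/t$. In the correct computation the $1/t$ arises from the $C_t^{-1}$ term above sandwiched between two factors of $\dot C_t=t^{-2}C_t^2$, \emph{combined with} the second-derivative term $-\tfrac12\ddot C_t=\tfrac1t AC_t\dot C_t$ from the criterion \eqref{eq_def_dot_l}: one gets $\tfrac1t(\tfrac1t+A)C_t\,\dot C_t=\tfrac1t\dot C_t$. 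Your schematic version of the criterion drops the $\ddot C_t$ term (and has its sign reversed), but that term is not optional: without it the first contribution is only $t^{-2}C_t\leq t^{-1}\,\mathrm{id}$ with a gap of order $\lambda_{\min}(C_t)^{-1}$, and $\int_0^t\dot\ell_s\,ds$ would no longer grow like $\log t$, so the outer integral would diverge. Both issues are repaired by adopting the paper's form of the Hessian and carrying the $\ddot C_t$ term through the algebra; the remaining steps of your proposal (entrywise domination, Perron--Frobenius, row sum equals $\chi_t$) then go through as you describe.
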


\begin{remark}
  The proof only uses that
  $\chi_t$ is an upper bound on the  spectral radius of the covariance matrix $(S_t(x,y))_{x,y\in\Lambda}$
  and one could therefore instead define $\chi_t$ in this way for a slightly more general statement.
\end{remark}

The proof of Theorem \ref{thm:criterion} has several ingredients.
The starting point is the general criterion for the log-Sobolev inequality derived from the Polchinski (renormalisation group) equation
in \cite{MR4303014}.
Let us recall this general criterion.
In \cite[Section~2]{MR4303014}, this criterion is formulated for a general covariance decomposition $(C_t)$.
We will make the particular choice (known as Pauli--Villars regularisation):
\begin{equation}
  C_t = (A+ 1/t)^{-1} \quad (t>0), \qquad C_0 = 0,
\end{equation}
and recall that the matrix $A$ is positive definite. 
We write $\dot C_t,\ddot C_t$ for the first and second derivatives of $C_t$. 
Let ${\bf E}_{C_t}$ denote expectation under the Gaussian measure with covariance $C_t$, 
and introduce a renormalised potential $V_t$ by:
\begin{equation} \label{e:Vt-def}
  \forall \varphi\in\R^\Lambda,\qquad V_t(\varphi) = -\log {\bf E}_{C_t}\Big[e^{-V_0(\varphi+\cdot)}\Big].
\end{equation}
Equivalently, $V_t$ satisfies the Polchinski (renormalisation group) equation
\begin{equation} \label{e:polchinski}
  \ddp{V_t}{t} = \frac12 \Delta_{\dot C_t} V_t - \frac12 (\nabla V_t)^2_{\dot C_t},
\end{equation}
see \cite{MR4303014} for notation and discussion of this equation,
and \cite{polchinski1984269} for its original use. 
The Polchinski equation is not used directly in this
paper, but it underlies the following proposition. 
\begin{proposition}[Theorem 2.5 in \cite{MR4303014}] \label{thm:polchinski}
The log-Sobolev constant $\gamma=\gamma(A,g,\nu)$ for the measure $\mu_{A,g,\nu}$ is bounded in terms of $V_t$ as follows:
\begin{equation}
\frac{1}{\gamma}\leq \int_0^\infty\exp\Big[-2\int_0^t\dot\ell_s\, ds\Big]\, dt,
\end{equation}
where, for $t>0$, $\dot\ell_t$ is any real number such that, 
for all $\varphi \in \R^\Lambda$:
\begin{equation}
\dot C_t\He V_t(\varphi)\dot C_t - \frac{1}{2}\ddot{C}_t\geq \dot\ell_t\dot C_t.\label{eq_def_dot_l}
\end{equation}
\end{proposition}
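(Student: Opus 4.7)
The strategy is to apply Proposition~\ref{thm:polchinski} with the Pauli--Villars covariance $C_t=(A+1/t)^{-1}$ and the specific choice $\dot\ell_t=\dot\kappa_t$, so the core task is to verify that this choice satisfies the Hessian inequality \eqref{eq_def_dot_l}. The first step is to derive a transparent formula for $\He V_t(\varphi)$. Performing the change of variable $\varphi'=\varphi+\zeta$ inside the Gaussian expectation in \eqref{e:Vt-def}, one rewrites $Z_t(\varphi)=e^{-V_t(\varphi)}$ as the Gaussian factor $e^{-\frac12(\varphi,C_t^{-1}\varphi)}$ times the moment generating function at the external field $\eta=C_t^{-1}\varphi$ of the tilted $\varphi^4$ measure $\mu_t^\varphi := \mu^{C_t^{-1}\varphi}_{A,g,\nu+1/t}$. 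Differentiating twice in $\varphi$ and using the chain rule together with the fact that the Hessian of a cumulant generating function is the covariance matrix yields
\[
\He V_t(\varphi) = C_t^{-1} - C_t^{-1}\,\Sigma_t(\varphi)\, C_t^{-1}, \qquad \Sigma_t(\varphi):=\mathrm{Cov}_{\mu_t^\varphi}(\varphi').
\]

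Next I would insert this expression into \eqref{eq_def_dot_l}. Since $\dot C_t=C_t^2/t^2$ and $\ddot C_t=2C_t^3/t^4-2C_t^2/t^3$, a direct algebraic computation gives
\[
\dot C_t\, \He V_t(\varphi)\, \dot C_t - \tfrac12\ddot C_t = \tfrac{1}{t}\dot C_t - \tfrac{1}{t^4}\,C_t\,\Sigma_t(\varphi)\,C_t.
\]
Conjugating by the positive definite matrix $C_t^{-1}$, the inequality \eqref{eq_def_dot_l} with $\dot\ell_t=\dot\kappa_t=1/t-\chi_t/t^2$ reduces to the single operator inequality $\Sigma_t(\varphi) \leq \chi_t\, I$ for every $\varphi\in\R^\Lambda$.

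The remaining step is to establish this operator bound. The measure $\mu_t^\varphi$ is a ferromagnetic $\varphi^4$ measure (because $A$ has nonpositive off-diagonal entries) with external field $C_t^{-1}\varphi$, and via the Griffiths--Simon representation it can be approximated by ferromagnetic Ising models with appropriate (possibly signed) external fields. The recent correlation inequality for the Ising model with arbitrary external field from \cite{2107.09243} transfers through this limit to yield the pointwise bound $|\Sigma_t(\varphi)(x,y)| \leq S_t(x,y)$ for all $x,y\in\Lambda$. Since $S_t$ is the zero-field covariance matrix of a ferromagnetic $\varphi^4$ measure, its entries are nonnegative by Griffiths' second inequality, and $S_t$ is symmetric and positive semidefinite. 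The Perron--Frobenius theorem identifies its spectral radius with its largest eigenvalue, and the standard maximal-row-sum estimate gives $\|S_t\|_{\mathrm{op}}\leq \chi_t$. Combining these facts, for any $v\in\R^\Lambda$,
\[
|v^\top \Sigma_t(\varphi)\, v| \leq |v|^\top S_t\, |v| \leq \chi_t\, \|v\|^2,
\]
where $|v|$ denotes the entrywise absolute value; hence $\|\Sigma_t(\varphi)\|_{\mathrm{op}}\leq \chi_t$, as required, and Proposition~\ref{thm:polchinski} delivers the theorem.

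The main obstacle is the correlation inequality step: transferring the Ising result of \cite{2107.09243} to the $\varphi^4$ setting via the Griffiths--Simon construction requires the Ising inequality to hold for arbitrary (not necessarily nonnegative) external fields, which is precisely the content of that recent work. The algebraic reductions and the Perron--Frobenius bound, by contrast, are direct consequences of the ferromagnetic structure.
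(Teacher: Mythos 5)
There is a genuine gap here, but it is a gap of target rather than of technique: you have not proved Proposition~\ref{thm:polchinski} at all. Your very first sentence is ``apply Proposition~\ref{thm:polchinski} with the Pauli--Villars covariance \ldots'', i.e.\ you take the statement you were asked to prove as given and then establish the downstream result that $\dot\ell_t$ may be chosen equal to $\dot\kappa_t$. That is a proof of Proposition~\ref{prop_Perron_Frobenius} and hence of Theorem~\ref{thm:criterion}, not of Proposition~\ref{thm:polchinski}; as an argument for the latter it is circular. The actual content of Proposition~\ref{thm:polchinski} --- why the pointwise Hessian condition \eqref{eq_def_dot_l} along the covariance decomposition $C_t$ implies the log-Sobolev inequality with constant $\big(\int_0^\infty e^{-2\int_0^t\dot\ell_s\,ds}\,dt\big)^{-1}$ --- is nowhere addressed. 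That step is the multiscale Bakry--\'Emery argument of \cite{MR4303014}: one decomposes the relative entropy along the renormalisation group flow generated by the Polchinski equation \eqref{e:polchinski}, proves a gradient/commutation estimate for the associated inhomogeneous semigroup in which \eqref{eq_def_dot_l} is exactly the convexity input, and integrates. None of this machinery appears in your write-up. (The paper itself does not reprove this either; it imports the statement as Theorem~2.5 of \cite{MR4303014}.)

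For what it is worth, the argument you did write is essentially correct and coincides with the paper's proof of Proposition~\ref{prop_Perron_Frobenius}: the formula $\He V_t = C_t^{-1}-C_t^{-1}\Sigma_t(\varphi)C_t^{-1}$, the computation of $\dot C_t$ and $\ddot C_t$, the reduction to the operator bound $\Sigma_t(\varphi)\leq \chi_t\,\mathrm{id}$, and the combination of the Ding--Song--Sun inequality (transferred via Griffiths--Simon) with Perron--Frobenius and the row-sum bound all match the paper, up to a cosmetic difference (you pass through $|v|^{\top}S_t|v|$ where the paper uses the nonnegative Perron eigenvector of $\Sigma_t(\varphi)$). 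If the intended target had been Theorem~\ref{thm:criterion}, this would be a faithful reproduction of the paper's route; as a proof of Proposition~\ref{thm:polchinski} it is missing the entire entropy-decomposition argument that the proposition encapsulates.
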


Thus we need to show that $\dot\ell_t$ may be chosen to be $\dot\kappa_t$.
To this end, observe that $V_t$ defined in \eqref{e:Vt-def} can be expressed,
by changing variables   from $\zeta$ to $\zeta-\varphi$, as
\begin{align}
  V_t(\varphi) + \text{const}
  &= -\log \int e^{-\frac12 (\zeta,C_t^{-1}\zeta)} e^{-V_0(\varphi+\zeta)} \, d\zeta
    \nnb
  &= \frac12 (\varphi,C_t^{-1}\varphi)-\log \int e^{-\frac12 (\zeta,C_t^{-1}\zeta)} e^{-V_0(\zeta)} e^{(C_t^{-1}\varphi,\zeta)}\, d\zeta,
\end{align}
and hence the Hessian of $V_t$ is given by:
\begin{equation} 
  \He V_t(\varphi)
  = C_t^{-1} - C_t^{-1}\Sigma_t(\varphi)C_t^{-1}
  ,  
  \label{eq_Hessian_V_t}
\end{equation}
where
\begin{equation} \label{e:Sigma-phi}
  \Sigma_t(\varphi) = \Big(\big<\zeta_x;\zeta_y \big>^{C_t^{-1}\varphi}_{A,g,\nu+1/t}\Big)_{x,y \in\Lambda},
\end{equation}
with the expectation $\big<\cdot\big>$ here acting on the field $\zeta$ and the following notation for the covariance:
\begin{equation}
\big<F;G\big>^h_{A,g,\nu} 
= \big<FG\big>^h_{A,g,\nu} - \big<F\big>^h_{A,g,\nu} \big<G\big>^h_{A,g,\nu},
\qquad 
F,G:\R^\Lambda\rightarrow\R,\quad h\in\R^\Lambda.
\end{equation}

The second ingredient of the proof of Theorem~\ref{thm:criterion} is the
remarkable correlation inequality
\cite[Corollary~1.3]{2107.09243}
which shows that
the truncated two-point function of ferromagnetic Ising models with general external field
is maximised pointwise when the external field is $0$,
recently proved by Ding--Song--Sun.
By the Griffiths--Simon construction \cite{MR428998}, this inequality extends to $\varphi^4$ models as follows 
(the proof of this extension is exactly as the extension of the other correlation inequalities from Ising models to $\varphi^4$ models in \cite{MR428998}).

\begin{proposition}\label{prop_correlation_Ding_Song_Sun}
  Let $g>0$, $\nu\in\R$, and let $A$  be a matrix with nonpositive offdiagonal coefficients. For any $h \in\R^\Lambda$, then
  \begin{equation}
\forall x,y\in\Lambda,\qquad    
0 \leq \big<\varphi_x;\varphi_y\big>^{h}_{A,g,\nu}
    \leq \big<\varphi_x;\varphi_y\big>^{0}_{A,g,\nu} = \big<\varphi_x\varphi_y\big>^{0}_{A,g,\nu}.
  \end{equation}  
\end{proposition}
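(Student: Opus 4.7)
The statement has three components: nonnegativity of the truncated two-point function, its upper bound by the value at $h=0$, and the equality of truncated and untruncated two-point functions at $h=0$. The plan is to deduce all three from the corresponding statements for a ferromagnetic Ising model in a general (signed) external field, and to transfer them to the $\varphi^4$ setting via the Griffiths--Simon construction in exactly the way \cite{MR428998} extends the other classical correlation inequalities from Ising to $\varphi^4$.

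First I would recall the Griffiths--Simon approximation: for fixed $g>0$ and $\nu\in\R$, the single-site measure $\exp(-\tfrac14 g\varphi^4-\tfrac12\nu\varphi^2)\,d\varphi$ is a weak limit, as $N\to\infty$, of laws of $a_N\sum_{i=1}^N\sigma^{(i)}$, where $\sigma^{(1)},\dots,\sigma^{(N)}\in\{\pm 1\}$ are Ising spins with a suitable ferromagnetic intra-block coupling depending on $(g,\nu,N)$ and $a_N>0$ is a scaling constant. Repeating this construction independently at each site $x\in\Lambda$, and keeping the quadratic form $-\tfrac12(\varphi,A\varphi)$ as an inter-block pair interaction, I would identify $\mu^{h}_{A,g,\nu}$ as a weak limit of Ising measures on the extended lattice $\Lambda\times\{1,\dots,N\}$ in which the external field $h_x$ becomes $a_Nh_x$ uniformly on the $x$-block. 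Because the offdiagonal entries of $A$ are nonpositive by hypothesis, all inter-block couplings are ferromagnetic, so the approximating model is fully ferromagnetic in an arbitrary signed external field.

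To this approximating Ising model I would apply three facts. Nonnegativity of the truncated two-point function $\langle\sigma;\tau\rangle^{h'}\geq 0$ for arbitrary signed $h'$ follows from the FKG inequality, which holds for ferromagnetic Ising models regardless of the sign of the external field. The pointwise upper bound $\langle\sigma;\tau\rangle^{h'}\leq\langle\sigma\tau\rangle^{0}$ is precisely the Ding--Song--Sun inequality \cite[Corollary~1.3]{2107.09243}. The identity $\langle\sigma;\tau\rangle^{0}=\langle\sigma\tau\rangle^{0}$ is immediate from spin-flip symmetry. Summing these three relations over the Ising spins in the $x$- and $y$-blocks, with the factor $a_N^2$, yields the same three relations for the block sums $a_N\sum_i\sigma^{(x,i)}$ and $a_N\sum_j\sigma^{(y,j)}$ approximating $\varphi_x$ and $\varphi_y$ under $\mu^h_{A,g,\nu}$.

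The remaining step is to take $N\to\infty$, and this is where the only real technical work sits: one needs convergence of the first and second moments of the Griffiths--Simon block sums to the corresponding moments of $\varphi_x,\varphi_y$ under $\mu^{h}_{A,g,\nu}$. The argument is the standard one from \cite{MR428998}: weak convergence of the Griffiths--Simon laws together with uniform integrability, which can be obtained from Gaussian-type moment bounds for the $\varphi^4$ measure (or equivalently FKG/Griffiths moment bounds on the Ising side). The main obstacle I anticipate is purely bookkeeping, namely checking that these moment bounds are uniform in $N$ and in the external field $h$ in the form needed to preserve the pointwise chain of inequalities in the limit; no essentially new ingredient beyond \cite{MR428998,2107.09243} should be required, which is why the paper remarks that the extension works ``exactly as'' for the earlier correlation inequalities.
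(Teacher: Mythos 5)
Your proposal is correct and follows exactly the route the paper takes: the paper gives no detailed proof but simply invokes the Griffiths--Simon block-spin approximation together with \cite[Corollary~1.3]{2107.09243}, FKG, and spin-flip symmetry at the Ising level, passing to the limit as for the other correlation inequalities in \cite{MR428998}. Your write-up just makes explicit the steps the paper leaves as a citation.
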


To use this inequality to bound the Hessian in \eqref{eq_Hessian_V_t} requires a third additional ingredient,
as a quadratic form instead of pointwise bound is needed.
This ingredient in the proof of Theorem~\ref{thm:criterion} is the Perron--Frobenius theorem,
which enables us to bound the eigenvalues of the Hessian of $V_t$ in \eqref{eq_Hessian_V_t} through Proposition \ref{prop_correlation_Ding_Song_Sun}.
The combination of the last two ingredients culminates in the following proposition
from which Theorem~\ref{thm:criterion} follows immediately.

\begin{proposition}\label{prop_Perron_Frobenius}
For each $t> 0$, one can take $\dot\ell_t = \dot\kappa_t$ in \eqref{eq_def_dot_l},
where  $\dot\kappa_t$ is defined by \eqref{eq_dot_lambda_t}.
\end{proposition}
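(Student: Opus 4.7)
The plan is to simplify the left-hand side of the matrix inequality \eqref{eq_def_dot_l} using the Hessian formula \eqref{eq_Hessian_V_t}, reduce it to a scalar eigenvalue bound on $\Sigma_t(\varphi)$, and then verify that bound by combining Proposition~\ref{prop_correlation_Ding_Song_Sun} with the Perron--Frobenius theorem. These last two ingredients are precisely those highlighted earlier as the heart of the argument; the rest is algebra with the Pauli--Villars covariance.

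From $C_t=(A+1/t)^{-1}$ I would first compute $\dot C_t=C_t^2/t^2$ and $\ddot C_t=2C_t^3/t^4-2C_t^2/t^3$. Substituting \eqref{eq_Hessian_V_t} then gives $\dot C_t\,\He V_t(\varphi)\,\dot C_t=C_t^3/t^4-C_t\,\Sigma_t(\varphi)\,C_t/t^4$, so the left-hand side of \eqref{eq_def_dot_l} simplifies to $C_t^2/t^3-C_t\,\Sigma_t(\varphi)\,C_t/t^4$. Because $C_t$ is symmetric and positive definite, conjugation by $C_t^{-1}$ preserves positive semidefiniteness, and turns the required inequality $\dot\ell_t\,\dot C_t\leq C_t^2/t^3-C_t\,\Sigma_t(\varphi)\,C_t/t^4$ into the equivalent operator inequality $(t-t^2\dot\ell_t)\,I\geq \Sigma_t(\varphi)$ on $\R^\Lambda$ (after multiplying through by $t^4$). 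The choice $\dot\ell_t=\dot\kappa_t=1/t-\chi_t/t^2$ makes the scalar prefactor on the left exactly $\chi_t$, so the proposition reduces to the single claim
\begin{equation*}
\lambda_{\max}\bigl(\Sigma_t(\varphi)\bigr)\leq \chi_t,\qquad \text{uniformly in } \varphi\in\R^\Lambda.
\end{equation*}

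For this eigenvalue bound, I would apply Proposition~\ref{prop_correlation_Ding_Song_Sun} to the $\varphi^4$ measure with parameters $(A,g,\nu+1/t)$ and external field $h=C_t^{-1}\varphi$ to obtain the pointwise estimate $0\leq \Sigma_t(\varphi)_{xy}\leq S_t(x,y)$ from \eqref{e:Sigma-phi}. Since $\Sigma_t(\varphi)$ is a covariance matrix it is symmetric and positive semidefinite, so $\lambda_{\max}(\Sigma_t(\varphi))$ equals its spectral radius. A standard consequence of Perron--Frobenius is that for nonnegative matrices $0\leq M\leq N$ entrywise one has $\rho(M)\leq \rho(N)$, for instance via $\rho(M)=\lim_n\|M^n\|^{1/n}$ in the max-row-sum norm and the entrywise bound $M^n\leq N^n$. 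Applied with $M=\Sigma_t(\varphi)$ and $N=S_t$, and combined with $\rho(S_t)\leq \max_x\sum_y S_t(x,y)=\chi_t$ (again by Perron--Frobenius applied to the nonnegative $S_t$), this gives the required bound and completes the proof.

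The most delicate step is the last paragraph: the Ding--Song--Sun inequality is what eliminates the $\varphi$-dependence pointwise, while Perron--Frobenius upgrades that pointwise domination into a spectral bound matching exactly the susceptibility $\chi_t$. Neither of these on its own would produce a bound of the right form, and the proof of the proposition is essentially a bookkeeping of how they combine with the Pauli--Villars decomposition.
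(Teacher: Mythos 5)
Your proposal is correct and follows essentially the same route as the paper: the Hessian formula \eqref{eq_Hessian_V_t} reduces the matrix inequality to the spectral bound $\lambda_{\max}(\Sigma_t(\varphi))\leq\chi_t$, which is then obtained from the Ding--Song--Sun pointwise domination combined with Perron--Frobenius and the row-sum bound on the spectral radius. The only (harmless) cosmetic differences are that you conjugate by $C_t^{-1}$ up front rather than manipulating the commuting quadratic forms at the end, and you invoke the entrywise monotonicity of the spectral radius for nonnegative matrices instead of the paper's explicit nonnegative Perron eigenvector $Y_0$ — these are equivalent uses of the same ingredient.
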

\begin{proof}[Proof of Proposition \ref{prop_Perron_Frobenius}]
Fix $t>0$, take $X\in\R^\Lambda$, 
and using the expression  \eqref{eq_Hessian_V_t} for the Hessian notice that
\begin{align}
X^T\He V_t(\varphi)X 
  \geq
  X^TC_t^{-1}X - \|C_t^{-1}X\|_2^2  \sup_{\|Y\|_2=1}Y^T\Sigma_t(\varphi)Y.\label{eq_bound_hessian_0}
\end{align}
The correlation matrix $\Sigma_t(\varphi)$ has nonnegative entries, 
as correlations under $\mu^h_{A,g,\nu}$ are nonnegative for any $h\in \R^\Lambda$ by the FKG inequality.
By the Perron--Frobenius theorem, the spectral radius of the correlation matrix is therefore an eigenvalue, 
and there is an associated (normalised) eigenvector $Y_0=Y_0(\varphi)$ with nonnegative entries. 
The pointwise correlation bound of Proposition~\ref{prop_correlation_Ding_Song_Sun} 
thus implies:
\begin{equation}
  \sup_{\|Y\|_2=1}Y^T\Sigma_t(\varphi)Y=Y^T_0\Sigma_t(\varphi)Y_0 \leq  Y^T_0\Sigma_t(0)Y_0 \leq \|\Sigma_t(0)\|
  = \sup_{\|Y\|_2=1} Y^T \Sigma_t(0)Y
  .
\end{equation}
The spectral radius $\|\Sigma\|$ of a matrix $\Sigma$ is a lower bound for any matrix norm. 
For a matrix $\Sigma$ with nonnegative entries, this implies:
\begin{equation}
\|\Sigma\|\leq \max_{x}\sum_{y}\Sigma_{x,y}.\label{eq_Hoelder_bound_spectral_radius}
\end{equation}
For $\Sigma=\Sigma_t(0)$ given by \eqref{e:Sigma-phi},
the right-hand side of \eqref{eq_Hoelder_bound_spectral_radius} is precisely the susceptibility~\eqref{eq_susceptibility}. 
This and \eqref{eq_bound_hessian_0} imply the following bound on the lowest eigenvalue of the Hessian of $V_t$:
\begin{align}
X^T\text{Hess}(V_t)(\varphi)X 
\geq 
X^T\big(C_t^{-1} - \chi_tC_t^{-2}\big)X.\label{eq_bound_Hessian_by_susceptibility}
\end{align}
To conclude the proof of Proposition \ref{prop_Perron_Frobenius}, 
notice that, for each $t>0$:
\begin{align}
\dot C_t = \frac{1}{t^2}(A+1/t)^{-2}  = \frac{C^2_t}{t^2},
\qquad 
\ddot C_t &= -\frac{2}{t^3}A(A+1/t)^{-3}\nonumber\\
 &= -\frac{2}{t}AC_t\dot C_t.
\end{align}
Injecting these expressions in the criterion \eqref{eq_def_dot_l} for the  log-Sobolev inequality
concludes the proof of Proposition~\ref{prop_Perron_Frobenius} ($C_t,\dot C_t,\ddot C_t$, and $A$ all commute):
\begin{align}
\dot C_t\text{Hess}(V_t)(\varphi)\dot C_t - \frac{1}{2}\ddot{C}_t
&\geq\Big[
\frac{C_t}{t} \big(C_t^{-1} - \chi_tC_t^{-2}\big)\frac{C_t}{t} +\frac{1}{t}AC_t 
\Big]\dot C_t\nonumber\\
&= \Big[\frac{1}{t}\Big(\frac{1}{t}+A\Big)C_t - \frac{\chi_t}{t^2} 
\Big]\dot C_t \nonumber\\
&= \Big[\frac{1}{t}- \frac{\chi_t}{t^2}\Big]\dot C_t
= \dot \kappa_t \dot C_t.
\end{align}
\end{proof}

\begin{proof}[Proof of Theorem~\ref{thm:criterion}]
  As already mentioned, the proof of the theorem is immediate from
  Propositions~\ref{thm:polchinski} and~\ref{prop_Perron_Frobenius}.
\end{proof}

Finally, we consider the special choice of $\Lambda,A,g,\nu$ corresponding to the continuum $\varphi^4_d$ measure $\nu^{\epsilon,L}_{\lambda,\mu}$
defined in \eqref{eq_def_continuum_phi4_measure}, with an arbitrary $m^2>0$ to define the counterterms \eqref{e:counterterm1}.
To emphasise the value of $m^2>0$ used to define the counterterms,
we write $\nu^{\epsilon,L}_{\lambda,\mu,m^2}$ and $\big<\cdot\big>^{\epsilon,L}_{\lambda,\mu,m^2}$
for the measure and its expectation.
This choice is:
\begin{equation}
\Lambda = \Lambda_{\epsilon,L},\quad 
A = \epsilon^d\big(-\Delta^\epsilon + m^2\big),\quad 
g = \epsilon^d\lambda,\quad
\nu = \epsilon^d(\mu-m^2 + a^\epsilon(\lambda,m^2)).\label{eq_choice_parameters_continuum_phi4}
\end{equation}
The factors of $\epsilon^d$ result from
the fact that we did not include factors $\epsilon^d$ in \eqref{eq_V0general}
and the use of the standard inner product $(u,v) = \sum_x u_xv_x$ in this section.
In the application to the continuum model,
it makes sense to normalise the scale parameter $t$ in continuum scaling also,
i.e., to rescale $t$ by $\epsilon^d$.
This means that we consider the measures $\nu^{\epsilon,L}_{\lambda,\mu+1/t,m^2}$ 
and the correspondingly normalised susceptibility:
\begin{equation}
\chi^{\epsilon,L}_t(\lambda,\mu,m^2) 
:= \max_{x\in\Lambda_{\epsilon,L}}\epsilon^d\sum_{y\in\Lambda_{\epsilon,L}}\big<\varphi_x\varphi_y\big>^{\epsilon,L}_{\lambda,\mu+1/t,m^2}
= \epsilon^d\sum_{x\in\Lambda_{\epsilon,L}}\big<\varphi_0\varphi_x\big>^{\epsilon,L}_{\lambda,\mu+1/t,m^2},
\end{equation}
where the last equality is by translation invariance, and we set:
\begin{equation}
\dot\kappa^{\epsilon,L}_t := \frac{1}{t} - \frac{\chi^{\epsilon,L}_t(\lambda,\mu,m^2)}{t^2}.
\end{equation}
Theorem \ref{thm:criterion} then takes the following form. The log-Sobolev constant in its statement is
normalised as in \eqref{e:lsi_continuum} with respect to the continuum normalised Dirichlet form \eqref{e:dirichlet_continuum}.
Note that $A$ in \eqref{eq_choice_parameters_continuum_phi4} is indeed positive definite with nonpositive offdiagonal entries. 
\begin{theorem}\label{thm:criterion_continuum}
Let $\lambda>0$ and $\mu\in\R$.  
The log-Sobolev constant $\gamma^{\epsilon,L}(\lambda,\mu,m^2)$ of \eqref{eq_def_continuum_phi4_measure}
normalised as in \eqref{e:lsi_continuum} and
with counterterms defined in terms of mass $m^2>0$
satisfies the bound:
\begin{equation}
  \frac{1}{\gamma^{\epsilon,L}(\lambda,\mu,m^2)} 
  \leq 
  \int_0^\infty\exp\Big[-2\int_0^t\dot\kappa^{\epsilon,L}_s\, ds\Big]\, dt. 
\end{equation}
\end{theorem}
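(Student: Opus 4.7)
The strategy is to deduce Theorem~\ref{thm:criterion_continuum} as a direct corollary of Theorem~\ref{thm:criterion} applied with the choice of parameters~\eqref{eq_choice_parameters_continuum_phi4}, followed by the rescaling $t\mapsto t/\epsilon^d$ in the scale parameter to match the continuum normalisation. The argument is essentially bookkeeping: there are three sources of $\epsilon^d$ factors (the continuum inner product used in \eqref{e:dirichlet_continuum} and \eqref{e:lsi_continuum}, the rescaling between lattice and continuum susceptibilities, and the Jacobian of the $t$-substitution), and the content of the proof is showing that they cancel.

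First I would check the hypotheses of Theorem~\ref{thm:criterion}: with $A=\epsilon^d(-\Delta^\epsilon+m^2)$, positive definiteness follows from $m^2>0$, and the off-diagonal entries are non-positive because they come only from $-\Delta^\epsilon$. Expanding $-\tfrac12(\varphi,A\varphi)-V_0(\varphi)$ with $(A,g,\nu)$ as in~\eqref{eq_choice_parameters_continuum_phi4} reproduces exactly the exponent in~\eqref{eq_def_continuum_phi4_measure}, so $\mu_{A,g,\nu}=\nu^{\epsilon,L}_{\lambda,\mu,m^2}$. Replacing $\nu$ by $\nu+1/t$ adds $\tfrac{1}{2t}\sum_x\varphi_x^2$ to the exponent, which on the continuum side amounts to shifting $\mu$ to $\mu+1/(\epsilon^d t)$; hence $\mu_{A,g,\nu+1/t}=\nu^{\epsilon,L}_{\lambda,\mu+1/(\epsilon^d t),m^2}$.

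With this correspondence in place, I would substitute $t=\tau/\epsilon^d$ in the bound from Theorem~\ref{thm:criterion}. The two susceptibility definitions are then related by $\chi_{\tau/\epsilon^d}=\epsilon^{-d}\chi^{\epsilon,L}_\tau(\lambda,\mu,m^2)$, whence $\epsilon^{-d}\dot\kappa_{\tau/\epsilon^d}=\dot\kappa^{\epsilon,L}_\tau$. Applying the corresponding substitution $s=\sigma/\epsilon^d$ in the inner integral gives $\int_0^t\dot\kappa_s\,ds=\int_0^\tau\dot\kappa^{\epsilon,L}_\sigma\,d\sigma$, and the Jacobian $dt=d\tau/\epsilon^d$ in the outer integral contributes an overall prefactor $1/\epsilon^d$:
\begin{equation}
\frac{1}{\gamma(A,g,\nu)}\leq\frac{1}{\epsilon^d}\int_0^\infty\exp\Big[-2\int_0^\tau\dot\kappa^{\epsilon,L}_\sigma\,d\sigma\Big]\,d\tau.
\end{equation}

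Finally, the log-Sobolev constant $\gamma(A,g,\nu)$ delivered by Theorem~\ref{thm:criterion} is defined relative to the unweighted Dirichlet form $\sum_x\langle(\partial_x F)^2\rangle$, whereas $\gamma^{\epsilon,L}(\lambda,\mu,m^2)$ in \eqref{e:lsi_continuum} uses $\epsilon^{-d}\sum_x\langle(\partial_x F)^2\rangle$; comparing the two LSIs gives $1/\gamma^{\epsilon,L}(\lambda,\mu,m^2)=\epsilon^d/\gamma(A,g,\nu)$. Multiplying the preceding display by $\epsilon^d$ absorbs the prefactor and produces the stated bound. No single step is the hard part: the only thing that can go wrong is miscounting the $\epsilon^d$ factors, so the main task is to verify that the three of them cancel as expected.
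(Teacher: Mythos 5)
Your proposal is correct and is essentially the paper's own (largely implicit) argument: the paper likewise obtains Theorem~\ref{thm:criterion_continuum} by applying Theorem~\ref{thm:criterion} with the parameter choice \eqref{eq_choice_parameters_continuum_phi4} and rescaling the Polchinski time $t$ by $\epsilon^d$, leaving the cancellation of the $\epsilon^d$ factors to the reader. Your accounting of the three sources of $\epsilon^d$ (Dirichlet-form normalisation, susceptibility normalisation, Jacobian of the time change) is accurate and they do cancel as you claim.
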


%

\section{Simple application: Log-Sobolev inequality for lattice $\varphi^4$ models}
\label{sec:latticephi4}

To illustrate the criterion for the log-Sobolev inequality from Theorem~\ref{thm:criterion},
we give a simple proof that the log-Sobolev constant of the lattice $\varphi^4$ spin model in any dimension is bounded
uniformly in the volume throughout its entire high temperature phase.
For unbounded spin systems including the lattice $\varphi^4$ measure,
the existence of a spectral gap uniformly in the volume was previously known only away from the critical temperature
\cite{MR1715549,MR1704666,MR1837286}.

Fix a dimension $d\geq 1$. For simplicity of the exposition, we assume that $\Lambda \subset \Z^d$ is a hypercube with periodic boundary conditions, i.e., a
discrete torus, although Dirichlet boundary conditions could be considered with essentially no change.
For $g>0$, $\nu\in\R$,
recall that the lattice $\varphi^4$ model on $\Lambda$ is defined by
\begin{equation} \label{e:lattice-phi4-measure}
  \mu^\Lambda_{g,\nu}(d\varphi) = e^{-\frac12(\varphi,-\Delta\varphi)-\sum_{x\in\Lambda} (\frac14 g \varphi^4 + \nu\varphi^2)},
\end{equation}
where $\Delta$ is the lattice Laplacian on $\Lambda$, and its susceptibility is defined by
\begin{equation}
  \chi^\Lambda(g,\nu) = \sum_{x\in\Lambda} \avg{\varphi_0\varphi_x}_{g,\nu}.
\end{equation}
The critical value of the $d$-dimensional lattice $\varphi^4$ model can be defined by
\begin{equation}
  \nu_c(g) = \inf \{\nu \in\R: \sup_\Lambda \chi^\Lambda(g,\nu) < +\infty\},
\end{equation}
where the supremum is over all $d$-dimensional discrete tori.
In fact, $\nu_c(g)<0$ for $g>0$.

\begin{example}
The log-Sobolev constant $\gamma = \gamma^\Lambda_{g,\nu}$ of the measure \eqref{e:lattice-phi4-measure} satisfies
\begin{equation}
  \frac{1}{\gamma} \leq \frac{e^2}{2|\nu|+1} + (2|\nu|+1)^3 e^{2+2(2|\nu|+1)\chi^\Lambda(g,\nu)}
  < \infty.
\end{equation}
\end{example}

Thus for any $g>0$ and $\nu \in\R$ such that $\chi^\Lambda(g,\nu)$ is bounded uniformly in $\Lambda$, the log-Sobolev constant $\gamma$
is bounded below uniformly in $\Lambda$. Since on the other hand the lower boundedness of the log-Sobolev constant also implies
that the susceptibility is bounded
(analogously to the discussion below Theorem~\ref{thm:lsiphi4d}),
it thus follows that the log-Sobolev constant is bounded below
if and only if the susceptibility is bounded.

\begin{proof}
Note that the interesting case is $\nu\leq 0$; otherwise the potential is convex and one could use the Bakry--Emery criterion
to show the log-Sobolev inequality.
In bounding the susceptibility $\chi_t$, let us separately treat the small and large $t$ cases as follows.

In the small $t$ case given by $t \leq 1/(2|\nu|+1)$, the measure $\mu_{g,\nu+1/t}$ is of the form $e^{-U(\varphi)}\, d\varphi$,
where $U = \frac12 (\varphi,(-\Delta+\nu+1/t) \varphi) + \text{convex}$ is a strictly convex potential, with Hessian bounded below by $(1/t +\nu)\id>0\id$.  
Therefore the Brascamp--Lieb inequality \cite[Theorem~4.1]{BraLieAppl} implies
\begin{equation} \label{e:BLappl}
\chi_t 
= \var_{\mu_{g,\nu+1/t}^\Lambda} \Big( (\varphi,{\bf 1})\Big)
\leq \frac{1}{\frac{1}{t}+\nu},
\qquad
{\bf 1}:= \frac{1}{|\Lambda|^{1/2}}(1,\dots,1).
\end{equation}
Recalling that $\dot\kappa_t = 1/t - \chi_t/t^2$ is the quantity appearing in the bound of the log-Sobolev constant in Theorem~\ref{thm:criterion}, 
one has:
\begin{equation}
\forall t\leq 1/(2|\nu|+1),\qquad 
\dot\kappa_t\geq \frac{1}{t}\Big(1-\frac{1}{1+t\nu}\Big) 
= \frac{\nu}{1+t\nu}.\label{eq_small_t_lattice_phi4}
\end{equation}
Consider now the large $t$ case where $t>1/(2|\nu|+1)$. Since $\chi^\Lambda(g,\nu)$ is decreasing in $\nu \in \R$ by the second Griffiths inequality,  
\begin{equation}
\forall t>0,\qquad  
\dot\kappa_t \geq \frac{1}{t} - \frac{\chi^\Lambda(g,\nu+1/t)}{t^2}
  \geq \frac{1}{t} - \frac{\chi^\Lambda(g,\nu)}{t^2}.
\end{equation}
Using this bound on $\dot\kappa_t$ for $t > 1/(2|\nu|+1)$ and the small $t$ bound \eqref{eq_small_t_lattice_phi4} for $t\leq 1/(2|\nu|+1)$:
\begin{align}
\forall t> 0,\qquad  
\kappa_t := \int_0^t\dot\kappa_s\, ds 
&\geq 
\int_0^{t\wedge 1/(2|\nu|+1)} (-2|\nu|) \, dt + \int_{1/(2|\nu|+1)}^{t\vee 1/(2|\nu|+1)} (\frac{1}{t}-\frac{\chi^\Lambda(g,\nu)}{t^2}) \, dt
              \nnb
  &\geq -1 + {\bf 1}_{t> 1/(2|\nu|+1)}\Big[\log(t/(2|\nu|+1)) - (2|\nu|+1)\chi^\Lambda(g,\nu)\Big]. 
\end{align}
This implies:
\begin{equation}
\forall t>0,\qquad  
e^{-2\kappa_t} 
  \leq e^2\bigg[
  {\bf 1}_{t\leq 1/(2|\nu|+1)} + \frac{(2|\nu|+1)^2}{t^2} e^{2(2|\nu|+1)\chi^\Lambda(g,\nu)} {\bf 1}_{t> 1/(2|\nu|+1)}
  \bigg].
\end{equation}
Therefore, by Theorem~\ref{thm:criterion}, 
the log-Sobolev constant satisfies
\begin{equation}
  \frac{1}{\gamma_\Lambda} \leq \frac{e^2}{2|\nu|+1} + (2|\nu|+1)^3 e^{2+2(2|\nu|+1)\chi^\Lambda(g,\nu)}
\end{equation}
as needed.
\end{proof}

\section{Bound on the susceptibility of the continuum $\varphi^4_d$ model} 
\label{sec:suscept-phi4d}
In dimensions $d \in \{2,3\}$,
consider the continuum $\varphi^4_d$ measure \eqref{eq_def_continuum_phi4_measure}
with parameters $\lambda>0$ and $\mu\in\R$. 
In this section, we will make the dependence on the mass $m^2$ of the counterterms \eqref{e:counterterm1} explicit.
The measure \eqref{eq_def_continuum_phi4_measure} is thus denoted by $\nu^{\epsilon,L}_{\lambda,\mu,m^2}$, 
and the associated expectation by $\big<\cdot\big>^{\epsilon,L}_{\lambda,\mu,m^2}$. 

\subsection{Statement of the result}

In order to apply Theorem~\ref{thm:criterion_continuum}, which will be done in Section~\ref{sec:proofs},
our goal in this section is to estimate, for $t> 0$, 
the susceptibility $\chi^{\epsilon,L}_t(\lambda,\mu,m^2)$ of the $\varphi^4_d$ measure with mass $\mu+1/t$ (and counterterm defined in terms of $m^2>0$), defined by
\begin{equation}
\chi^{\epsilon,L}_t(\lambda,\mu,m^2) := \epsilon^d \sum_{x\in\Lambda_{\epsilon,L}}\big<\varphi_0\varphi_x\big>^{\epsilon,L}_{\lambda,\mu+1/t,m^2}.
\end{equation}
The parameter $t$ is the scale parameter
in the Polchinski (renormalisation group) equation  \eqref{e:polchinski}. 
Informally, we speak of large scale when $t$ is large, and of small scale when $t$ is small. 

For $t>0$, further introduce the notation:
\begin{gather}
S_t(x) := \big<\varphi_0\varphi_x\big>^{\epsilon,L}_{\lambda,\mu+1/t,m^2},
\qquad x\in\Lambda_{\epsilon,L},
\\
  C_t(x) := \Big(-\Delta^\epsilon+m^2+\frac{1}{t}\Big)^{-1}(0,x),
  \qquad x\in\Lambda_{\epsilon,L},
\end{gather}
and note that both functions are positive (e.g., by the FKG inequality).
Define also the shorthand:
\begin{equation}
m^2_t := m^2 +\frac{1}{t},
\end{equation}
and recall the definition of the norms $\|\cdot\|_{L^p},\|\cdot\|_{L^\infty}$ norms for $p\geq 1$:
\begin{equation}
\|f\|_{L^p} := \Big(\epsilon^d\sum_{x\in\Lambda_{\epsilon,L}}|f(x)|^p\Big)^{1/p},
\qquad
\|f\|_{L^\infty} := \max_{x\in\Lambda_{\epsilon,L}}|f(x)|,
\qquad f\in\R^{\Lambda_{\epsilon,L}}.
\end{equation}
Finally, introduce for $t>0$ the differences $\eta_t,\gamma_t$
between counterterms defined at masses $m^2$ and $m^2_t$ 
(recall the definition \eqref{e:counterterm1} of the counterterm $a^\epsilon(\lambda,m^2)$):
\begin{equation}
  \eta_t = C_\infty(0)-C_t(0) \geq 0,
  \qquad
  \gamma_t = \|C_\infty^3\|_{L^1} - \|C_t^3\|_{L^1} \geq 0.
\end{equation}

In the following $c>0$ denotes a constant that is
independent from the parameters $\epsilon,L,\lambda,\mu,m^2$ of the model
and may change from line to line. 
To express dependence on a parameter, say $\mu$, we write $c(\mu)$. 

Since $\chi_t^{\epsilon,L}(\lambda,\mu,m^2) = \|S_t\|_{L^1}$ and $\|C_t\|_{L^1} = m_t^{-2}$ 
the desired estimates on the susceptibility
$\chi_t^{\epsilon,L}(\lambda,\mu,m^2)$ will be seen to be a direct consequence of the following propositions.

\begin{proposition}[Small scale, all couplings]\label{prop_bound_susceptibility}
Let $d\in\{2,3\}$, $\lambda>0$, $\mu\in\R$, and $m^2>0$.  
Then there is $t_0 = t_0(d,\lambda,\mu,m^2)>0$ and a polynomial $p_{d,t,\mu,m^2}(\lambda)$ in $\lambda$ of degree independent of the parameters, 
with coefficients functions of $\eta_t,\gamma_t,\mu,m^2$ times positive powers of $m_t^{-1}$ (in particular independent of $\epsilon,L$), such that:
\begin{equation}
\forall t\in(0,t_0],\qquad
\|S_t-C_t\|_{L^1}
\leq 
p_{d,t,\mu,m}(\lambda),
\end{equation}
and $t\mapsto t^{-2} p_{d,t,\mu,m}(\lambda)$ is integrable on $(0,t_0]$.
\end{proposition}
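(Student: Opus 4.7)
The guiding idea is that for $t$ small the effective mass $m_t^2 := m^2 + 1/t$ is very large, so the $\varphi^4$ interaction should be a small perturbation of the Gaussian measure with covariance $C_t$ and $S_t$ should be close to $C_t$ in $L^1$. The quantitative tool for this comparison will be the skeleton inequalities for $\varphi^4$ models of Brydges--Fröhlich--Sokal~\cite{MR723546}, which bound the truncated two-point function by a finite sum of renormalised Feynman diagrams built out of $C_t$ and $S_t$, modulo irreducible $n$-point remainders of controlled sign.

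The plan splits into three steps. First, a Gaussian integration by parts with respect to $C_t$ produces a Dyson--Schwinger relation of schematic form
\begin{equation*}
S_t(x) - C_t(x) = -\alpha\,(C_t \ast S_t)(x) - \lambda\,(C_t \ast \langle \varphi_\cdot^3; \varphi_x\rangle)(x),
\end{equation*}
where $\alpha := \mu - m^2 + a^\epsilon(\lambda,m^2)$ captures both the mismatch between the bare mass of the interacting measure and the reference mass $m_t^2$ of $C_t$ and the counterterm. Second, I would apply the skeleton expansion to $\langle \varphi_y^3;\varphi_x\rangle$: the low-order Wick contractions yield the tadpole $3\lambda C_t(0)$ and the bubble $6\lambda^2 \|C_t^3\|_{L^1}$, which are exactly the two divergent pieces that $a^\epsilon(\lambda, m^2)$ was designed to cancel. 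Collecting terms leaves only $\eta_t = C_\infty(0) - C_t(0)$ and $\gamma_t = \|C_\infty^3\|_{L^1} - \|C_t^3\|_{L^1}$ as finite residues, together with higher-order skeleton remainders which are UV convergent in dimensions $d\in\{2,3\}$. Third, take $L^1$ norms and use $\|C_t\|_{L^1} = m_t^{-2}$ together with Young's convolution inequality to control each convolution. The result is an inequality of the form
\begin{equation*}
\|S_t - C_t\|_{L^1} \le r_t(\lambda) + \beta_t\,\|S_t - C_t\|_{L^1},
\end{equation*}
with $\beta_t = |\mu - m^2|\,m_t^{-2} + O(\lambda)$ tending to $0$ with $t$, and $r_t(\lambda)$ a polynomial in $\lambda$ whose coefficients are functions of $(\eta_t,\gamma_t,\mu,m^2)$ times positive powers of $m_t^{-1}$. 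Choosing $t_0$ so that $\beta_t \le 1/2$ on $(0,t_0]$, inversion yields the stated bound with $p_{d,t,\mu,m}(\lambda) := 2\,r_t(\lambda)$.

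The main obstacle is the precise bookkeeping of the UV renormalisation in $d=3$: the tadpole $C_t(0)$ and bubble $\|C_t^3\|_{L^1}$ are individually $\epsilon$-divergent and must be paired with the two parts of $a^\epsilon(\lambda,m^2)$ inside $\alpha$ \emph{before} taking $L^1$ norms, so that the final bound is uniform in $\epsilon$. This is where the graph-by-graph structure of the skeleton expansion of~\cite{MR723546} is essential: it isolates each would-be divergent subdiagram and matches it to the corresponding counterterm, leaving only irreducible convergent remainders. Integrability of $t \mapsto t^{-2}p_{d,t,\mu,m}(\lambda)$ near $0$ then follows by dimensional counting: the leading surviving contributions are of order $m_t^{-4}$, which behaves like $t^2$ as $t\to 0$, so that $t^{-2}$ times it is bounded on $(0,t_0]$; higher-order skeleton terms carry strictly more powers of $m_t^{-2}$ and are thus even more integrable.
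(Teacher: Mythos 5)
Your overall strategy (skeleton inequalities, cancellation of the divergent tadpole and bubble against the two parts of $a^\epsilon(\lambda,m^2)$ leaving the finite residues $\eta_t,\gamma_t$, then a self-consistent bound on $\|S_t-C_t\|_{L^1}$ closed for small $t$) is the same as the paper's, but the closure step as you describe it does not work, for two reasons. First, the skeleton bound contains the term $3\lambda\, S_t(0)\,(C_t\star S_t)(x)$, and $S_t(0)-C_\infty(0)$ is a \emph{pointwise} quantity: it is controlled by $\|E\|_{L^\infty}$ ($E:=S_t-C_t$), not by $\|E\|_{L^1}$. So an inequality of the form $\|E\|_{L^1}\le r_t(\lambda)+\beta_t\|E\|_{L^1}$ cannot be derived from \eqref{eq_bound_BFS_3d}; the right-hand side unavoidably involves $\|E\|_{L^\infty}$ as well. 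The paper resolves this by a two-stage argument: first close a self-consistent bound for the combined norm $\|E\|_{L^1\cap L^\infty}$ (Lemma~\ref{lemm_bound_L1_Linfty_norm}), and only then rerun the estimate in $L^1$ alone using $\|E\|_{L^\infty}\le c\lambda$ as input (Lemma~\ref{lemm_first_bound_susceptibility}). Second, the self-consistent inequality is \emph{polynomial} in $\|E\|$ (the $\lambda^2$ and $\lambda^3$ diagrams produce terms like $\|E\|_{L^\infty}^3\|E\|_{L^1}^2$), so the linear ``choose $t_0$ with $\beta_t\le 1/2$ and invert'' step is not available. An inequality $x\le r+f(x)$ with $f$ a polynomial without constant term only forces $x$ to avoid an interval; to conclude that $x$ sits in the small branch one needs the a priori qualitative input that $t\mapsto\|E_t\|_{L^1\cap L^\infty}$ is continuous and tends to $0$ as $t\to 0$ (see \eqref{eq_properties_E}), which your proposal never invokes.

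Your integrability argument also oversimplifies the $t$-dependence. It is not true that every surviving term carries at least $m_t^{-4}$: in $d=3$ the renormalised bubble $\psi=C^3-{\bf 1}^\epsilon_0\|C^3\|_{L^1}$ contributes factors as weak as $m_t^{-1/2}$ (Lemma~\ref{lemm_psi_term}), and a single application of the $L^1$ bound with $\|E\|_{L^1}\le c\lambda$ inserted on the right would give terms behaving like $t^{1/4}$, which are \emph{not} integrable against $t^{-2}\,dt$. The paper obtains the integrable bound only after iterating the $L^1$ inequality finitely many times, exploiting that the inhomogeneous term $G_{d,t,\mu,m^2}(\lambda)$ has worst behaviour $\eta_t m_t^{-4}$ and that the polynomials $P^{(i)}_t$ have no constant term, so each iteration gains powers of $m_t^{-1}$. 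Without the two-norm bootstrap and this iteration, the stated integrability of $t\mapsto t^{-2}p_{d,t,\mu,m^2}(\lambda)$ does not follow from your scheme.
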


When $\mu>0$, one can make the special choice $m^2=\mu$ as in the statement of Theorem~\ref{thm:lsiphi4d} (ii).
In this case, the results of Proposition~\ref{prop_bound_susceptibility} are valid for large $t$ (i.e., large scale) provided either $\lambda$ is small enough, or $\mu=m^2$ is large enough.
\begin{proposition}[All scales, one small coupling]\label{prop_bound_susceptibility_long_time}
Let $d\in\{2,3\}$, $\lambda>0$, $\mu>0$, and set $m^2:=\mu$. 
There are then $\lambda(d,\mu),\mu(d,\lambda)>0$ and a polynomial $\tilde p_{d,t,\mu}(\lambda)$ in $\lambda$, 
again of degree independent of the parameters and 
with coefficients functions of $\eta_t,\gamma_t,\mu$ times positive powers of $m_t^{-1}$ (but independent of $\epsilon,L$), 
such that the following holds: 
if either $\lambda\in[0,\lambda(d,\mu))]$ or $\mu\geq \mu(d,\lambda)$, then:
\begin{equation}
\forall t>0,
\qquad \|S_t-C_t\|_{L^1}
\leq 
\tilde p_{d,t,\mu}(\lambda),
\end{equation}
the function $t\mapsto t^{-2}\tilde p_{d,t,\mu}(\lambda)$ is integrable on $(0,\infty)$, 
and the integral is independent of $\lambda$ and $\mu$ if either $\lambda\in[0,\lambda(d,\mu)]$ or $\mu\geq \mu(d,\lambda)$.
Under the same conditions, 
\begin{equation}
\limsup_{t\to\infty }\tilde p_{d,t,\mu}(\lambda) \underset{\mu\rightarrow\infty}{=} o(\mu^{-1}).
\end{equation}
\end{proposition}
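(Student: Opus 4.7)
The plan is to extend the small-scale bound of Proposition~\ref{prop_bound_susceptibility} to all scales $t > 0$, exploiting the fact that the choice $m^2 = \mu$ yields $m_t^2 = \mu + 1/t \geq \mu$ uniformly in $t$. All renormalised propagator quantities --- the norms $\|C_t\|_{L^p}$ for finite $p$ and the finite differences $\eta_t, \gamma_t$ --- therefore admit $t$-uniform bounds purely in terms of $\mu$, which is what allows skeleton inequalities to close on all scales at once. Since Proposition~\ref{prop_bound_susceptibility} already supplies the bound for $t \leq t_0$ with the specialisation $m^2 = \mu$, the only new work concerns the large-scale regime $t \geq t_0$.

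For this regime I would invoke a Glimm--Jaffe--Spencer style skeleton inequality following \cite{MR723546}, applied to the $\varphi^4_d$ model at effective mass $m_t^2$ with counterterm normalised at mass $m^2 = \mu$. Schematically this produces a bound
\[
0 \leq C_t(x) - S_t(x) \leq \sum_{k=1}^{K} \lambda^k R_t^{(k)}(x),
\]
in which each $R_t^{(k)}$ is a multilinear expression in $C_t$ and $S_t$ whose bare ultraviolet divergences have been absorbed by the counterterm, so that only renormalised objects (convolutions of $C_t$ together with the finite combinations $\eta_t, \gamma_t$) remain. Summing in $x$ with Young's inequality, and using $\|C_t\|_{L^1} = m_t^{-2} \leq \mu^{-1}$ together with uniform $L^p$ bounds on $C_t$, yields an estimate of the form $\|S_t - C_t\|_{L^1} \leq \tilde p_{d,t,\mu}(\lambda)(1 + \chi_t)$. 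A bootstrap then closes the estimate: substituting the a priori bound $\chi_t \leq m_t^{-2} + \|S_t - C_t\|_{L^1}$ produces a self-consistent inequality for $\|S_t - C_t\|_{L^1}$ which becomes contractive precisely when $\lambda \leq \lambda(d,\mu)$, or equivalently when $\mu \geq \mu(d,\lambda)$.

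The integrability of $t \mapsto t^{-2}\tilde p_{d,t,\mu}(\lambda)$ on $(0,\infty)$ splits at $t_0$: the interval $(0,t_0]$ is handled by Proposition~\ref{prop_bound_susceptibility}, while on $[t_0,\infty)$ the polynomial $\tilde p_{d,t,\mu}(\lambda)$ tends to a finite limit as $t\to\infty$ (since $\eta_t, \gamma_t, m_t^{-1}$ all converge) and $\int_{t_0}^\infty t^{-2}\, dt$ is finite. The $\lambda,\mu$-uniformity of the integral then comes from the contraction estimate, which controls all coefficients by constants depending only on $d$. Finally, the claim $\limsup_{t\to\infty}\tilde p_{d,t,\mu}(\lambda) = o(\mu^{-1})$ as $\mu\to\infty$ reflects the cancellation afforded by matching $m^2 = \mu$: each surviving term in the large-$t$ limit of $\tilde p$ carries at least one factor of $\lambda$ multiplied by an inverse power of $\mu$ strictly stronger than $\mu^{-1}$, which is precisely the renormalisation improvement of choosing the counterterm at the physical mass. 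The main obstacle will be the bookkeeping required to ensure that only the bounded renormalised quantities $\eta_t, \gamma_t, \|C_t\|_{L^p}$ appear in the skeleton bounds (never the ultraviolet-divergent $C_t(0)$ or $\|C_t\|_{L^\infty}$), and that the bootstrap contraction is uniform over $t \geq t_0$ as well as over $\epsilon$ and $L$ under the stated smallness hypothesis.
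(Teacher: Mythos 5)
Your overall strategy (skeleton inequalities plus a bootstrap, with all propagator quantities controlled uniformly in $t$ because $m_t^2=\mu+1/t\geq\mu$) is in the right spirit and close to the paper's, but the bootstrap as you set it up does not close. The skeleton bound \eqref{eq_bound_BFS_3d} contains the tadpole difference $S_t(0)-C_\infty(0)$, so summing over $x$ bounds $\|S_t-C_t\|_{L^1}$ in terms of \emph{both} $\|S_t-C_t\|_{L^1}$ and $\|S_t-C_t\|_{L^\infty}$; a self-consistent inequality in the $L^1$ norm alone, of the form $\|S_t-C_t\|_{L^1}\le \tilde p(\lambda)(1+\chi_t)$ with $\chi_t\le m_t^{-2}+\|S_t-C_t\|_{L^1}$, cannot be extracted from it. The paper therefore runs the argument in two passes: first a bootstrap for the combined norm $\|E\|_{L^1\cap L^\infty}$ (Lemma~\ref{lemm_bound_L1_Linfty_norm}(ii)), valid for all $t>0$ under the stated smallness hypothesis, and only then a second pass in $L^1$ with $\|E\|_{L^\infty}\le c(\mu)\lambda$ inserted as known input (Lemma~\ref{lemm_first_bound_susceptibility} followed by a finite iteration). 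You would need to add the $L^\infty$ pass before your $L^1$ bootstrap can make sense.

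Second, ``the bootstrap becomes contractive'' is not by itself a proof. The self-consistent inequality has the form $\|E\|\le f_t(\|E\|)$ with $f_t$ an increasing polynomial with small coefficients; this only shows that $\|E\|$ avoids an intermediate interval, not that it lies on the small branch. The paper selects the small branch using the continuity of $t\mapsto\|E_t\|_{L^1\cap L^\infty}$ together with $\lim_{t\to0}\|E_t\|_{L^1\cap L^\infty}=0$ (properties \eqref{eq_properties_E}), which propagates the smallness from $t=0^+$ to all $t>0$ by a contradiction argument. Your plan of starting the large-scale argument only at $t\ge t_0$ loses this anchor unless you carry the $L^\infty$ information across $t=t_0$, which Proposition~\ref{prop_bound_susceptibility} (an $L^1$ statement) does not supply. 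A minor further point: the skeleton inequalities give two-sided bounds on $S_t-C_t$ (Proposition~\ref{prop_BFS}), not the one-sided bound $0\le C_t-S_t$ you write; the sign of $S_t-C_t$ is not controlled, since the first-order term involves $S_t(0)-C_\infty(0)$, which can have either sign. Your discussion of the integrability at $t=\infty$ and of the $o(\mu^{-1})$ claim is essentially correct once the above is repaired, since $\eta_t,\gamma_t\to0$ as $t\to\infty$ and every surviving coefficient carries a strictly negative power of $m_t$ with $m_t^2\geq\mu$.
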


\subsection{Skeleton inequalities}

To prove Propositions \ref{prop_bound_susceptibility}--\ref{prop_bound_susceptibility_long_time}, 
the starting point is a bound on correlations presented in Proposition~\ref{prop_BFS} below, 
obtained by Brydges, Fr\"ohlich, and Sokal \cite{MR723546},
using the method of skeleton inequalities \cite{MR719815} (which is based on the random walk representation \cite{MR648362}).
To state it, let $\star$ denote the convolution:
\begin{equation}
(f\star g)(x) = \epsilon^d\sum_{y\in\Lambda_{\epsilon,L}}f(x-y)g(y),
\qquad f,g\in\R^{\Lambda_{\epsilon,L}}.
\end{equation}
For a function $f\in \R^{\Lambda_{\epsilon,L}}$, write also:
\begin{equation}
({\bf 1}^\epsilon_0  f)(x) := \epsilon^{-d}{\bf 1}_{x=0}f(0).
\end{equation}
The following inequalities are exact upper and lower bounds on the two-point function $S$ of the $\varphi^4_d$ model,
consistent with naive perturbation theory except that the right-hand sides also involve the
interacting two-point function (or propagator) $S$ rather than only the noninteracting one $C$. 

\begin{proposition}[Equations (5.12)--(5.13) in \cite{MR723546}]\label{prop_BFS}
Consider the $\varphi^4_d$ measure \eqref{eq_def_continuum_phi4_measure} with parameters $\mu+1/t\in\R$ (in place of $\mu$), $\lambda>0$, and $m^2>0$. 
Then, for each $x\in\Lambda_{\epsilon,L}$:
\begin{align}
S_t(x)-C_t(x) &\geq -3\lambda S_t(0)(C_t\star S_t)(x) + 6\lambda^2(C_t \star S_t^3\star S_t)(x) \nnb
&\quad -54\lambda^3 (C_t\star Q_t\star S_t)(x) - \big(a^\epsilon(\lambda,m^2)+\mu-m^2\big) (S_t\star C_t)(x),
\end{align}
and:
\begin{align}
S_t(x)-C_t(x) &\leq -3\lambda S_t(0)(C_t\star S_t)(x) + 6\lambda^2(C_t \star S_t^3\star S_t)(x) \nnb
&\quad - \big(a^\epsilon(\lambda,m^2)+\mu-m^2\big) (S_t\star C_t)(x),
\end{align}
where:
\begin{equation} 
Q_t := S_t\big(S_t^2\star S_t^2\big).\label{eq_def_Q}
\end{equation}
\end{proposition}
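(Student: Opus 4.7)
My plan is to derive these skeleton inequalities from the random walk (BFS) representation of the two-point function \cite{MR648362,MR719815,MR723546} rather than reprove them from scratch. The starting point is an identity expressing $S_t(x)$ as a weighted sum over continuous-time random walks $\omega$ from $0$ to $x$, killed at rate $m_t^2 = m^2 + 1/t$ and carrying a factor
\begin{equation}
\exp\Big[-\lambda\, \Phi(\omega) - (a^\epsilon(\lambda,m^2)+\mu-m^2)\,|\omega|\Big],
\end{equation}
where $\Phi(\omega)$ is a self-intersection-type functional of $\omega$ coupled to an independent loop soup representing the remaining $\varphi^4$ degrees of freedom, and $|\omega|$ denotes the total time length of $\omega$.

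Next I would expand $e^{-\lambda \Phi(\omega)}$ order by order in $\lambda$, producing a sum of Feynman diagrams with a backbone walk from $0$ to $x$ decorated by degree-four vertices. The crucial ``skeleton'' reorganization is that at each vertex the attached loops can be integrated out and the resulting loop factor bounded from both sides by the \emph{full} two-point function $S_t$ in place of the bare propagator $C_t$, via Ginibre's inequality applied to a duplicated system. Truncating the resulting expansion at order $n$ in $\lambda$, the remainder has a definite sign that alternates with the parity of $n$: truncation at $n=2$ yields the stated upper bound, and at $n=3$ the lower bound.

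The three explicit diagrams correspond to the simplest skeleton topologies. The order-$\lambda$ tadpole contributes $-3\lambda\, S_t(0)(C_t \star S_t)(x)$, with combinatorial factor $3$ coming from the Wick contractions at a $\varphi^4$ vertex; the order-$\lambda^2$ diagram with two vertices connected by three internal propagators contributes $+6\lambda^2(C_t \star S_t^3 \star S_t)(x)$; the order-$\lambda^3$ diagram whose internal loop equals $Q_t = S_t(S_t^2 \star S_t^2)$ contributes $-54\lambda^3(C_t \star Q_t \star S_t)(x)$. The counterterm $a^\epsilon(\lambda,m^2)$ together with the mass shift $\mu - m^2$ enter the Lagrangian as a quadratic term and therefore yield the additional linear-in-$S_t$ contribution $-(a^\epsilon(\lambda,m^2)+\mu-m^2)(S_t \star C_t)(x)$, appearing identically in both bounds.

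The main obstacle is the sign analysis of the remainder at each truncation order: one must show that after subtracting the first two (respectively three) skeleton diagrams, what remains is pointwise nonnegative (respectively nonpositive). This is the technical heart of the theory of skeleton inequalities in \cite{MR719815}, and it relies on a careful reorganization of the walk-and-loop-soup expansion into a form where correlation inequalities between duplicated random walk systems apply. Since the regulator mass $1/t$ enters only through $m_t^2$ and the specific choice of $a^\epsilon(\lambda,m^2)$ enters only through the coefficient of the linear term, the derivation of \cite[\S 5]{MR723546} applies verbatim and yields the stated inequalities.
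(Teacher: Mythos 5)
The paper itself gives no proof of this proposition: it is imported verbatim from Brydges--Fr\"ohlich--Sokal \cite{MR723546}, eqs.\ (5.12)--(5.13), applied with Gaussian reference mass $m_t^2=m^2+1/t$ and mass shift $a^\epsilon(\lambda,m^2)+\mu-m^2$, and your proposal does essentially the same thing, with an added (broadly accurate) narrative of the random-walk/skeleton machinery behind that citation. One caution: your sentence claiming the remainder is ``nonnegative (respectively nonpositive)'' after subtracting two (respectively three) skeleton diagrams reverses the correct correspondence implied by your own earlier statement --- the truncation at order $\lambda^2$ (the second skeleton inequality) bounds $S_t-C_t$ from \emph{above}, so that remainder is nonpositive, while the truncation including the $-54\lambda^3$ term (the third skeleton inequality) bounds it from \emph{below}, so that remainder is nonnegative.
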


By writing the counterterms $a^\epsilon(\lambda,m^2)$ from \eqref{e:counterterm1} 
in terms of $C_\infty(x) = (-\Delta^\epsilon+m^2)^{-1}(0,x)$,
the last proposition implies:
\begin{align}
\big|S_t(x)-C_t(x)\big| &\leq 3\lambda|S_t(0)-C_\infty(0)| (C_t\star S_t)(x) 
+ 6\lambda^2\big|(C_t \star \big(S_t^3 - {\bf 1}^\epsilon_0  \|C_\infty^3\|_{L^1}\big) \star S_t)(x)\big| \nnb
&\quad + 
54\lambda^3 (C_t\star Q_t\star S_t)(x) + |\mu-m^2|\big(C_t\star S_t\big)(x).\label{eq_bound_BFS_3d}
\end{align}
Equation \eqref{eq_bound_BFS_3d} is the starting point for the proof of Propositions \ref{prop_bound_susceptibility}--\ref{prop_bound_susceptibility_long_time}, 
which takes the rest of Section \ref{sec:suscept-phi4d}. 
We use the short-hands:
\begin{equation}
C:= C_t,\quad S:= S_t,\quad Q:= Q_t,\quad E:= S-C.
\end{equation}
When using \eqref{eq_bound_BFS_3d} to prove Proposition \ref{prop_bound_susceptibility}--\ref{prop_bound_susceptibility_long_time}, 
there are two main differences to the computations done in \cite{MR723546}. 
The first (minor) difference is that we allow a possibly negative mass $\mu+1/t$, and the counterterms are defined at mass $m^2>0$. In contrast, the counterterms in \cite{MR723546} are defined at the same mass $m^2>0$ as the one defining the massive Gaussian measure to which the $\varphi^4$ measure is compared to. 
The second (more significant) difference is that we need to pay attention to the $t$-dependence in all bounds:
Firstly, the counterterms are defined at mass $m^2$, not $m^2_t = m^2+1/t$. Secondly, the resulting bounds on $\|S_t-C_t\|_{L^1}$ need to be integrable
with respect to $t^{-2} \, dt$
since we are after an estimate on the susceptibility $\|S_t\|_{L^1}$ in order to estimate the log-Sobolev constant using Theorem~\ref{thm:criterion_continuum}.  \\
  
  Due to the above observations, it is convenient to split the the differences $S(0)-C_\infty(0)$ and $S^3-{\bf 1}^\epsilon_0 \|C_\infty^3\|_{L^1}$ as follows:
\begin{align}
&S(0)-C_\infty(0) = S(0)-C(0) + C(0)-C_\infty(0) =: S(0)-C(0) - \eta_t\nnb
  &S^3-{\bf 1}^\epsilon_0\|C_\infty^3\|_{L^1} = \pb{S^3 - C^3} + \pb{C^3 - {\bf 1}^\epsilon_0  \|C^3\|_{L^1}}
    - {\bf 1}^\epsilon_0  \gamma_t,
    \label{eq_splitting_counterterms}
\end{align}
where we recall:
\begin{equation}
  \eta_t = C_\infty(0)-C(0) \geq 0,
  \qquad
  \gamma_t = \|C_\infty^3\|_{L^1} - \|C^3\|_{L^1} \geq 0.\label{eq_def_eta_gamma}
\end{equation}

Let us lastly remark that, while proving Proposition \ref{prop_bound_susceptibility}--\ref{prop_bound_susceptibility_long_time} ultimately requires  bounding $\|E\|_{L^1}$,
 due to the term $S(0)-C_\infty(0)$ in \eqref{eq_bound_BFS_3d}, 
$\|E\|_{L^1}$ can be bounded through \eqref{eq_bound_BFS_3d} only in terms of both $\|E\|_{L^1}$ and $\|E\|_{L^\infty}$. 
To prove Propositions \ref{prop_bound_susceptibility}--\ref{prop_bound_susceptibility_long_time}, we thus first estimate:
\begin{equation}
\|E\|_{L^1\cap L^\infty}:= \|E\|_{L^1}+\|E\|_{L^\infty}.
\end{equation}
The resulting bound on $\|E\|_{L^1 \cap L^\infty}$ of course implies a bound on $\|E\|_{L^1}$, 
but the $t$-dependence of this bound turns out to be neither optimal, nor sufficient for our purposes. 
However, using that $\|E\|_{L^\infty}$ is already controlled as input, we improve the bound by rerunning the argument in the $L^1$ norm only, 
which then yields the claim of both propositions.
\subsection{Estimate in $\|\cdot\|_{L^1\cap L^\infty}$ norm}
We separately compute the terms related to each power of $\lambda$ appearing in \eqref{eq_bound_BFS_3d}. 
The next lemma gathers useful properties and estimates. 
The bounds on moments of $C$ are proven in Appendix \ref{app_bounds_on_diagram}.
\begin{lemma}\label{lemm_bound_first_two_moments_C}
(i) For $t>0$, recall the shorthand $m^2_t:= m^2 +1/t$ and $C:= C_t$. 
One has:
\begin{equation}
\|C\|_{L^1} = \frac{1}{m^2_t},
\qquad 
\|C^2\|_{L^1} \leq \frac{c}{m^2_t}{\bf 1}_{d=2} + \frac{c}{m_t}{\bf 1}_{d=3}.
\end{equation}
(ii) If $f,g\in\R^{\Lambda_{\epsilon,L}}$, then:
\begin{equation}
\|f\star g\|_{L^1}\leq \|f\|_{L^1}\|g\|_{L^1},
\quad
\|f\star g\|_{L^\infty} \leq \|f\|_{L^2}\|g\|_{L^2},
\quad \|f\star g\|_{L^\infty} \leq \|f\|_{L^1}\|g\|_{L^\infty}.
\end{equation}
In fact $\|f\star g\|_{L^1} = \|f\|_{L^1}\|g\|_{L^1}$ when $f,g\geq 0$.
\end{lemma}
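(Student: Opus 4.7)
The plan is to handle the two parts separately, since they require different techniques: elementary manipulations for the convolution inequalities of (ii), and a short spectral/Fourier analysis for the bounds in (i).

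For (ii), all three inequalities are direct consequences of the pointwise formula
\begin{equation}
  (f \star g)(x) = \epsilon^d \sum_{y \in \Lambda_{\epsilon,L}} f(x-y)\, g(y).
\end{equation}
The bound $\|f\star g\|_{L^1} \leq \|f\|_{L^1}\|g\|_{L^1}$ follows from the triangle inequality together with Fubini, and becomes an equality when $f,g \geq 0$ because no absolute values intervene. The bound $\|f\star g\|_{L^\infty} \leq \|f\|_{L^2}\|g\|_{L^2}$ is Cauchy--Schwarz applied to the sum over $y$ for fixed $x$, and $\|f\star g\|_{L^\infty} \leq \|f\|_{L^1}\|g\|_{L^\infty}$ follows by factoring out $\|g\|_{L^\infty}$.

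For the identity $\|C\|_{L^1} = 1/m_t^2$, the idea is to exploit that the constant function $\mathbf{1}$ on the torus $\Lambda_{\epsilon,L}$ is the zero mode of $\Delta^\epsilon$, so that $(-\Delta^\epsilon + m_t^2)\mathbf{1} = m_t^2 \mathbf{1}$. Inverting and evaluating at any site gives $\epsilon^d \sum_{y} C_t(x,y) = m_t^{-2}$, and this equals $\|C\|_{L^1}$ because $C_t \geq 0$ (either from the Neumann series in $\Delta^\epsilon/m_t^2$ or from the random walk representation).

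The only nontrivial point is the bound on $\|C^2\|_{L^1}$. I plan to diagonalise $-\Delta^\epsilon$ via discrete Fourier transform on $\Lambda_{\epsilon,L}$: its eigenvalues are $\hat p^\epsilon(k)^2 := 2\epsilon^{-2}\sum_{j=1}^d(1-\cos(\epsilon k_j))$, indexed by $k$ in the dual lattice $(2\pi/L)\mathbb{Z}^d \cap (-\pi/\epsilon,\pi/\epsilon]^d$, and on this Brillouin zone the elementary bound $1-\cos\theta \geq c\theta^2$ for $|\theta|\leq \pi$ gives $\hat p^\epsilon(k)^2 \geq c|k|^2$ uniformly in $\epsilon$. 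Parseval then yields
\begin{equation}
  \|C^2\|_{L^1} = \frac{1}{L^d}\sum_{k} \frac{1}{(\hat p^\epsilon(k)^2 + m_t^2)^2} \leq \frac{c}{L^d}\sum_{k}\frac{1}{(|k|^2 + m_t^2)^2}.
\end{equation}
Comparing this Riemann sum to the continuum integral and rescaling $k = m_t u$ reduces the estimate to the dimensional analysis of $\int_{\mathbb{R}^d}(|u|^2+1)^{-2}\,du$, which is finite for both $d=2$ and $d=3$ (no UV divergence) and leaves an overall factor $m_t^{d-4}$, equal to $c/m_t^2$ in $d=2$ and $c/m_t$ in $d=3$. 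The one technical step, which I would relegate to the appendix, is to make the Riemann sum vs.\ integral comparison uniform in $\epsilon$ and $L$: since the integrand is radially decreasing and UV-integrable, a standard box-comparison bounds each summand by the integral of $(|\cdot|^2+m_t^2)^{-2}$ over a neighbouring cell of volume $(2\pi/L)^d$, and this is the main (routine) obstacle.
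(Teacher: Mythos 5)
Your proof is correct and follows essentially the same route as the paper: part (ii) is the standard Young/Cauchy--Schwarz argument, the identity $\|C\|_{L^1}=m_t^{-2}$ is the zero Fourier mode together with positivity of $C$, and the bound on $\|C^2\|_{L^1}=\|C\|_{L^2}^2$ is obtained exactly as in the paper's Appendix~\ref{app_bounds_on_diagram}, via Parseval, the lower bound $\theta(k)\geq 4\|k\|_2^2/\pi^2$ from convexity of the sine, comparison of the Riemann sum to a continuum integral, and the rescaling $k\mapsto m_t u$ producing the factor $m_t^{d-4}$. The sum-to-integral comparison you flag as the remaining technical step is treated at the same level of detail in the paper's appendix, so there is no gap relative to the reference argument.
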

\subsubsection{Order 0 term}
The only term independent of $\lambda$ in the bound \eqref{eq_bound_BFS_3d} is:
\begin{equation}
T_{\lambda^0} := |\mu-m^2|\, C\star S
= |\mu-m^2| \big(C\star C + C\star E\big).
\end{equation}
Using Lemma~\ref{lemm_bound_first_two_moments_C}, 
its $\|\cdot \|_{L^1}$ and $\|\cdot \|_{L^\infty}$ norms satisfy:
\begin{align}
\|T_{\lambda^0}\|_{L^\infty}
&\leq 
|\mu-m^2| 
\Big(\|C\|_{L^1}\|E\|_{L^\infty} + \|C\|_{L^2}^2\Big)\nnb
&\leq 
|\mu-m^2|\Big(
\frac{\|E\|_{L^\infty}}{m^2_t} + \frac{c}{m^2_t}{\bf 1}_{d=2} + \frac{c}{m_t}{\bf 1}_{d=3}\Big),
\end{align}
and:
\begin{align}
\|T_{\lambda^0}\|_{L^1}
&\leq
|\mu-m^2| \Big(
\frac{\|E\|_{L^1}}{m^2_t} + \frac{1}{m^4_t}\Big).
\end{align}
As a result,
\begin{equation}
\|T_{\lambda^0}\|_{L^1\cap L^\infty} 
\leq 
|\mu-m^2|
	\Big[
	\frac{\|E\|_{L^1\cap L^\infty}}{m^2_t}
	 + \frac{1}{m^4_t}+\frac{c}{m^2_t}{\bf 1}_{d=2} + \frac{c}{m_t}{\bf 1}_{d=3} 
	 \Big].
\end{equation}
\subsubsection{Order 1 term}
The first order term is defined by:
\begin{equation}
T_{\lambda} :=
3\lambda \big[S(0)-C(0) + C(0)-C_\infty(0)\big] S\star C
= 3\lambda \big[S(0)-C(0) - \eta_t\big] S\star C,
\end{equation}
where $\eta_t$ is the difference \eqref{eq_def_eta_gamma} between counterterms with mass $m^2$ and $m^2_t$. 
It is computed in Lemma \ref{lemm_counterterm}. 
As for $T_{\lambda^0}$, 
the $\|\cdot\|_{L^\infty}$ norm of the first order term $T_{\lambda}$ reads:
\begin{align}
\big\|T_\lambda\|_{L^\infty} 
&\leq 
3\lambda\big(\|E\|_{L^\infty} + \eta_t\big)\Big(\|C\|_{L^1}\|E\|_{L^\infty} + \|C\|_{L^2}^2\Big)\nonumber\\
&\leq 
3\lambda \big(\eta_t + \|E\|_{L^\infty}\big)\Big(\frac{\|E\|_{L^\infty}}{m^2_t} + \frac{c}{m^2_t}{\bf 1}_{d=2} + \frac{c}{m_t}{\bf 1}_{d=3}\Big),
\end{align}
In contrast, its $\|\cdot\|_{L^1}$ norm is simply:
\begin{equation}
\|T_{\lambda}\|_{L^1}
\leq 
3\lambda\big(\eta_t + \|E\|_{L^\infty}\big)\|C\star S\|_{L^1}
\leq \frac{3\lambda\big(\eta_t + \|E\|_{L^\infty}\big)}{m^2_t}\Big(\|E\|_{L^1}+\frac{1}{m^2_t}\Big).\label{eq_L1_norm_lambda_term}
\end{equation}
We conclude on the first order term:
\begin{equation}
\|T_{\lambda}\|_{L^1\cap L^\infty}
\leq 
c\lambda\big(\eta_t + \|E\|_{L^1\cap L^\infty})
	\Big[
	\frac{\|E\|_{L^1\cap L^\infty}}{m^2_t}
	 + \frac{1}{m^4_t}+\frac{1}{m^2_t}{\bf 1}_{d=2} + \frac{1}{m_t}{\bf 1}_{d=3} 
	 \Big].
\end{equation}
\subsubsection{Order 2 terms}\label{sec_L1Linfty_bound_T2}
The second order contribution $T_{\lambda^2}$ corresponds to the following terms:
\begin{align}
T_{\lambda^2}
:= 
6\lambda^2 \Big[C\star \big(S^3-C^3\big)\star S 
 + C\star \big({\bf 1}^\epsilon_0\|C^3\|_{L^1} - {\bf 1}^\epsilon_0\|C_\infty^3\|_{L^1}\big)\star S 
 +  C\star \psi \star S\Big], 
\end{align}
where $\psi$ is the function:
\begin{equation}
\psi : = C^3-{\bf 1}^\epsilon_0\|C^3\|_{L^1}.
\end{equation}
To compute the term involving $S^3-C^3$, write:
\begin{align}
C \star \big(S^3 - C^3\big) \star S = C\star E\big(3C^2 + 3EC + E^2\big)\star(E+C). \label{eq_decomp_Scube_minus_C3}
\end{align}
The $\|\cdot\|_{L^\infty}$ norm of this term then reads:
\begin{align}
\big\|C \star \big(S^3 - C^3\big) \star S\|_{L^\infty}
&\leq 
3\|E\|_{L^\infty} \|C\star C^2\star C\|_{L^\infty}
+ 3\|E\|_{L^\infty}^2\Big(\|C\star C\star C\|_{L^\infty} + \|C\star C^2\|_{L^1}\Big) \nonumber\\
  &\qquad + 4\|E\|_{L^\infty}^3\|C\|_{L^1}^2 + \|E\|_{L^\infty}^3\|E\|_{L^1}\|C\|_{L^1}\label{eq_bound_Scube_minus_Ccube_term_3d}
    .
\end{align}
Cauchy-Schwarz inequality and Lemma~\ref{lemm_bound_first_two_moments_C} yield an estimate of the terms involving $C$:
\begin{align}
\|C\star C\star C\|_{L^\infty}
\leq 
\|C\|_{L^1}\|C\|^2_{L^2} = \frac{1}{m^2_t}\cdot&
\begin{cases}
\frac{c}{m^2_t}\quad &\text{if }d=2,\\
\frac{c}{m_t}\quad &\text{if }d=3,
\end{cases}
\\
\|C\star C^2\star C\|_{L^\infty}
\leq  
\|C\star C\|_{L^\infty} \|C^2\|_{L^1}
\leq  
\|C^2\|^2_{L^1}
\leq &\begin{cases}
\frac{c}{m^4_t}\quad &\text{if }d=2,\\
\frac{c}{m^2_t}\quad &\text{if }d=3,
\end{cases}
\\
\|C^2\star C\|_{L^1}
= 
\|C^2\|_{L^1}\|C\|_{L^1}
 \leq \frac{1}{m^2_t}\cdot&\begin{cases}
\frac{c}{m^2_t}\quad &\text{if }d=2,\\
\frac{c}{m_t}\quad &\text{if }d=3.
\end{cases}
\end{align}
It follows that \eqref{eq_bound_Scube_minus_Ccube_term_3d} becomes:
\begin{align}
\big\|C \star \big(S^3 - C^3\big) \star S\|_{L^\infty}
&\leq 
c\|E\|_{L^\infty}\Big(\frac{1}{m^4_t}{\bf 1}_{d=2}  + \frac{1}{m^2_t}{\bf 1}_{d=3}\Big)
+ c\|E\|_{L^\infty}^2\Big(\frac{1}{m^4_t}{\bf 1}_{d=2} 
	+ \frac{1}{m^3_t}{\bf 1}_{d=3}\Big) \nonumber\\
&\qquad 
+ \|E\|_{L^\infty}^3\Big(\frac{4}{m^4_t} + \frac{\|E\|_{L^1}}{m^2_t}\Big).
\end{align}
Similarly, we find for the $\|\cdot\|_{L^1}$ norm:
\begin{align}
&\|C\star (S^3-C^3)\star S\|_{L^1}
\leq 
\|C\star S\|_{L^1} \|E(3C^2+3EC+E^2\|_{L^1}\nonumber\\
&\qquad\leq 
\frac{1}{m^2_t}\Big(\frac{1}{m^2_t}+\|E\|_{L^1}\Big) 
\Big(3\|E\|_{L^\infty}\|C^2\|_{L^1} + 3\|E\|^2_{L^\infty}\|C\|_{L^1} + \|E\|_{L^\infty}^2\|E\|_{L^1}\Big)\nonumber\\
&\qquad\leq 
\frac{1}{m^2_t}\Big(\frac{1}{m^2_t}+\|E\|_{L^1}\Big) 
\Big[c\|E\|_{L^\infty}\Big(\frac{1}{m^2_t}{\bf 1}_{d=2} + \frac{1}{m_t}{\bf 1}_{d=3}\Big)
	+ \|E\|_{L^\infty}^2\Big(\frac{3}{m^2_t} + \|E\|_{L^1}\Big)\Big].\label{eq_L1_norm_first_lambda2_term}
\end{align}
We conclude on the $\|\cdot\|_{L^1\cap L^\infty}$ norm of $C\star (S^3-C^3)\star S$:
\begin{align}
\|C\star (S^3-C^3)\star S\|_{L^1\cap L^\infty}
&\leq 
c\|E\|_{L^1\cap L^\infty}\Big[
\Big(\frac{1}{m^4_t}+ \frac{1}{m^6_t}\Big) {\bf 1}_{d=2}+ \Big(\frac{1}{m^2_t} + \frac{1}{m^5_t}\Big){\bf 1}_{d=3} 
\Big]\nonumber\\
&\quad + c\|E\|_{L^1\cap L^\infty}^2
	\Big(\frac{1}{m^6_t} + \frac{1}{m^4_t}{\bf 1}_{d=2} + \frac{1}{m^3_t}{\bf 1}_{d=3}\Big)\nonumber\\
&\quad +\frac{c\|E\|_{L^1\cap L^\infty}^3}{m^4_t}
 + \frac{c\|E\|^4_{L^1\cap L^\infty}}{m^2_t}.
\end{align}
Another $\lambda^2$ term is $C\star({\bf 1}^\epsilon_0  \|C^3\|_{L^1} - {\bf 1}^\epsilon_0  \|C_\infty^3\|_{L^1})\star S$,
which simply reads:
\begin{align}
C\star({\bf 1}^\epsilon_0  \|C^3\|_{L^1} - {\bf 1}^\epsilon_0  \|C_\infty^3\|_{L^1})\star S 
&= \big(\|C^3\|_{L^1} - \|C^3_\infty\|_{L^1}\big)C\star S\nonumber\\
&=: -\gamma_t C\star S,
\end{align}
with
\begin{equation}
0\leq \gamma_t\leq 
\begin{cases}
\frac{1}{tm^2 m_t^2}\quad &\text{if }d=2,\\
c\log\Big(1+\frac{1}{m^2t}\Big)\quad &\text{if }d=3,
\end{cases}
\end{equation}
and the bound on $\gamma_t$ obtained in Lemma~\ref{lemm_counterterm}. Thus, for the $\|E\|_{L^\infty}$ norm:
\begin{align}
\Big\|C\star({\bf 1}^\epsilon_0  \|C^3\|_{L^1} - {\bf 1}^\epsilon_0  \|C_\infty^3\|_{L^1})\star S\Big\|_{L^\infty}
&\leq \gamma_t \big(\|C\|_{L^1}\|E\|_{L^\infty} + \|C\|_{L^2}^2\big)\nnb
&= \frac{\gamma_t\|E\|_{L^\infty}}{m^2_t} + \gamma_t\|C\|_{L^2}^2\nonumber\\
&\leq \frac{\gamma_t\|E\|_{L^\infty}}{m^2_t} 
+ \gamma_t\cdot 
\begin{cases}
\frac{c}{m_t^2}\quad \text{if }d=2,\\
\frac{c}{m_t}\quad \text{if }d=3.
\end{cases}
\end{align}
Similarly, the $\|\cdot\|_{L^1}$ norm reads:
\begin{align}
\Big\|C\star({\bf 1}^\epsilon_0  \|C^3\|_{L^1} - {\bf 1}^\epsilon_0  \|C_\infty^3\|_{L^1})\star S\Big\|_{L^1}
\leq
\frac{\gamma_t}{m^2_t}\Big(\|E\|_{L^1}+\frac{1}{m^2_t}\Big),\label{eq_L1_norm_second_lambda2_term} 
\end{align}
so that:
\begin{equation}
\Big\|C\star({\bf 1}^\epsilon_0  \|C^3\|_{L^1} - {\bf 1}^\epsilon_0  \|C_\infty^3\|_{L^1})\star S\Big\|_{L^1\cap L^\infty} 
\leq 
\frac{\gamma_t\|E\|_{L^1\cap L^\infty}}{m^2_t}
 + \gamma_t\Big(
 \frac{1}{m^4_t} + \frac{c}{m^2_t}{\bf 1}_{d=2} + \frac{c}{m_t}{\bf 1}_{d=3}
\Big).
\end{equation}
The last $\lambda^2$ term is $C\star \psi\star S$, with: 
\begin{align}
\psi := (C^3- {\bf 1}^\epsilon_0   \|C^3\|_{L^1}).
\end{align}
One then has:
\begin{align}
\big\|C\star \psi \star S\big\|_{L^1\cap L^\infty}
\leq 
\big\|C\star \psi\big\|_{L^1}\|E\|_{L^1\cap L^\infty} 
+ \big\|C\star \psi\star C\big\|_{L^1\cap L^\infty}.
\end{align}
In Lemma~\ref{lemm_psi_term}, 
the norms of the above quantities are estimated.
Indeed, in dimension $d=2$, one has $\|\psi\|_{L^1}\leq c/m^2_t$, giving
\begin{equation}
 \|C\star \psi\|_{L^1} \leq \frac{c}{m_t^4}, \qquad \|C\star\psi\star C\|_{L^1\cap L^\infty}
  \leq \|\psi\|_{L^1} \|C\star C\|_{L^1 \cap L^\infty}
  \leq \frac{c}{m_t^4} + \frac{c}{m_t^6}.
\end{equation}
In dimension $d=3$, Lemma~\ref{lemm_psi_term} shows that
\begin{align}
\|C\star\psi\|_{L^1}
&\leq 
\frac{c}{m_t^{1/2}} + \frac{c}{m_t^{5/2}},\nonumber\\
\|C\star\psi\star C\|_{L^1} 
&\leq 
\|C\|_{L^1}\|C\star\psi\|_{L^1}
\leq \frac{c}{m_t^{5/2}} + \frac{c}{m_t^{9/2}},\nonumber\\
\|C\star\psi\star C\|_{L^\infty} 
&\leq 
\|C\|_{L^2}\|C\star\psi\|_{L^2}
\leq  \frac{c}{m_t}.
\end{align}
As a result, we find:
\begin{align}
\big\|C\star \psi \star S\big\|_{L^1\cap L^\infty}
&\leq c\|E\|_{L^1\cap L^\infty}\Big[\frac{1}{m^4_t}{\bf 1}_{d=2} + \Big(
\frac{1}{m_t^{1/2}}+  \frac{1}{m_t^{5/2}}\Big){\bf 1}_{d=3}\Big] \nonumber\\ 
&\quad + 
c\Big(\frac{1}{m^4_t}+\frac{1}{m_t^{6}}\Big){\bf 1}_{d=2}
+ c\Big(\frac{1}{m_t}+\frac{1}{m_t^{9/2}}\Big){\bf 1}_{d=3}.
\end{align}
For future reference, note also the following better bound on $\|C\star\psi\star S\|_{L^1}$:
\begin{align}
\big\|C\star \psi \star S\big\|_{L^1}
&\leq 
c\Big(
\frac{1}{m^4_t}{\bf 1}_{d=2} + \big(\frac{1}{m_t^{1/2}}+\frac{1}{m_t^{5/2}}\big){\bf 1}_{d=3}
\Big)\Big(
\|E\|_{L^1}+\frac{1}{m^2_t}
\Big).\label{eq_L1_norm_third_lambda2_term}
\end{align}
\subsubsection{Order 3 terms}\label{sec_L1Linfty_bound_T3}
The only remaining term is the $\lambda^3$ term: 
\begin{equation}
T_{\lambda^3} := 54\lambda^3\, C\star Q\star S ,
\quad Q := S \big(S^2\star S^2\big).
\end{equation}
Notice first:
\begin{align}
\|T_{\lambda^3}\|_{L^1\cap L^\infty}
&\leq 
54\lambda^3 \Big(\|C\star Q\star C\|_{L^\infty} + \|C\|_{L^1}\|Q\|_{L^1}\|E\|_{L^\infty} + \|Q\|_{L^1}\|C\star S\|_{L^1}\Big)\nonumber\\
&\leq 
54\lambda^3 \|Q\|_{L^1}\Big(\|C\|^2_{L^2} + \frac{\|E\|_{L^1\cap L^\infty}}{m^2_t} + \frac{1}{m_t^4}\Big).\label{eq_bound_T3_0}
\end{align}
It is therefore enough to estimate $\|Q\|_{L^1}$. 
Writing $S=C+E$, it reads:
\begin{equation}
\|Q\|_{L^1} 
\leq \Big\|(C+E)\Big[\big(C^2+2EC+E^2\big)\star \big(C^2+2EC+E^2\big)\Big]\Big\|_{L^1}.
\end{equation}
Bounding $E^n$ by $|E|\|E\|^{n-1}_{L^\infty}$ for each $n \geq 1$, 
one finds:
\begin{align}
\|Q\|_{L^1}
&\leq \|C(C^2\star C^2)\|_{L^1}
+ \|E\|_{L^\infty}\Big(\|C^2\|^2_{L^1} + 4 \|C(C\star C^2)\|_{L^1}\Big)\nonumber\\
& \quad+ \|E\|^2_{L^\infty} \Big(6\|C^2\|_{L^1}\|C\|_{L^1} + 4\|C (C\star C)\|_{L^1}\Big)\nonumber\\
& \quad+ \|E\|^2_{L^\infty} \Big(2\|E\|_{L^1}\|C^2\|_{L^1} + 8\|E\|_{L^\infty}\|C\|_{L^1}^2\Big)\nonumber\\
& \quad+ 5\|E\|^3_{L^\infty}\|E\|_{L^1}\|C\|_{L^1}\nonumber\\
& \quad+ \|E\|^3_{L^\infty}\|E\|_{L^1}^2.\label{eq_bound_Q_L1}
\end{align}
The first term on the right-hand side of  \eqref{eq_bound_Q_L1} is estimated in Lemma~\ref{lemm_k=0_Q_term} as:
\begin{equation}
\|C(C^2\star C^2)\|_{L^1}
\leq 
\frac{c}{m^4_t}{\bf 1}_{d=2} + \frac{c}{m_t}{\bf 1}_{d=3}.\label{eq_bounds_Q_L1_1} 
\end{equation}
The inequality $C(x)C(y) \leq \frac12 C(x)^2 +\frac12 C(y)^2$
is enough to also obtain:
\begin{equation}
\begin{split}
&\|C(C\star C^2)\|_{L^1}\leq \|C^2\|^2_{L^1}
\leq 
\frac{c}{m_t^4}{\bf 1}_{d=2} + \frac{c}{m^2_t}{\bf 1}_{d=3},
\\
&\|C(C\star C)\|_{L^1} \leq \|C^2\|_{L^1}\|C\|_{L^1}
\leq 
\frac{c}{m_t^4}{\bf 1}_{d=2}+ \frac{c}{m_t^{3}}{\bf 1}_{d=3}.
\end{split} \label{eq_bounds_Q_L1_2}
\end{equation}
Plugging \eqref{eq_bounds_Q_L1_1}--\eqref{eq_bounds_Q_L1_2} into the bound \eqref{eq_bound_Q_L1} on $\|Q\|_{L^1}$ yields:
\begin{align}
\|Q\|_{L^1}
&\leq 
\frac{c}{m^4_t}{\bf 1}_{d=2} + \frac{c}{m_t}{\bf 1}_{d=3}
	+ \|E\|_{L^1\cap L^\infty}\Big(\frac{c}{m_t^4}{\bf 1}_{d=2} + \frac{c}{m^2_t}{\bf 1}_{d=3}\Big)\nonumber\\
& \quad+ 
\|E\|^2_{L^1\cap L^\infty} \Big(\frac{c}{m_t^4}{\bf 1}_{d=2}+ \frac{c}{m_t^{3}}{\bf 1}_{d=3}\Big)\nonumber\\
& \quad+ 
\|E\|^3_{L^1\cap L^\infty} 
\Big(
\frac{c}{m^2_t}{\bf 1}_{d=2} + \frac{c}{m_t}{\bf 1}_{d=3}
	+ \frac{8}{m^4_t}
	\Big)\nonumber\\
& \quad+ \frac{5\|E\|^4_{L^1\cap L^\infty}}{m^2_t} 
	+ \|E\|^5_{L^1\cap L^\infty}.\label{eq_bound_Q_L1_final}
\end{align}
Injecting the bound \eqref{eq_bound_Q_L1_final} into the bound \eqref{eq_bound_T3_0} for $T_{\lambda^3}$, 
one finds that there is a polynomial $P_{3}$ of degree 6, with no constant term, such that:
\begin{align}
\|T_{\lambda^3}\|_{L^1\cap L^\infty}
&\leq
c\lambda^3\Big[\Big(\frac{1}{m^6_t} + \frac{1}{m^8_t}\Big){\bf 1}_{d=2} + \Big(\frac{1}{m^2_t} + \frac{1}{m^5_t}\Big){\bf 1}_{d=3}
	+ P_{3}\big(\|E\|_{L^1\cap L^\infty}\big)
\Big].
\end{align}
Moreover, $P_3$ has coefficients given by sums of negative powers of $m_t$, 
the lowest power in absolute value being $1$ (if $d=3$) or $2$ (if $d=2$). 
\subsubsection{Conclusion on $\|E\|_{L^1\cap L^\infty}$}
Putting together the estimates of $T_{\lambda^i}$ ($0\leq i\leq 3$) obtained in the previous sections, 
one finds for $\|E\|_{L^1\cap L^\infty}$:
\begin{align}
\|E\|_{L^1\cap L^\infty} 
&\leq 
c\big(
\lambda\eta_t + \lambda^2\gamma_t + |\mu-m^2|
\big)
\Big(
\frac{1}{m^2_t}{\bf 1}_{d=2} + \frac{1}{m_t}{\bf 1}_{d=3} + \frac{1}{m^4_t}
\Big)
\nonumber\\
&\quad + 
\frac{c\|E\|_{L^1\cap L^\infty}}{m^2_t}\big(
\lambda\eta_t + \lambda^2\gamma_t + |\mu-m^2|
\big)
+ c\sum_{i=1}^3\lambda^i R^{(i)}_{t}\big(\|E\|_{L^1\cap L^\infty}\big),\label{eq_bound_L_infty_norm_interm} 
\end{align}
where, for $i\in\{1,2,3\}$, $R^{(i)}_{t}$ is a polynomial with positive coefficients given by sums of negative powers of $m_t$ and with no constant term. 
The coefficient of the $R^{(i)}_t$ with the lowest power of $m_t$ in absolute value is proportional to $m_t^{-1/2}$ in dimension $d=3$, and to $m_t^{-2}$ in dimension $d=2$.

The bound \eqref{eq_bound_L_infty_norm_interm} yields the following estimates of $\|E\|_{L^1\cap L^\infty}$.

\begin{lemma}\label{lemm_bound_L1_Linfty_norm}
Let $d\in\{2,3\}$, $\lambda>0$, $\mu\in\R$, and $m^2>0$. 
\begin{itemize}
	\item[(i)] (Small scale). There is $t_0(d,\lambda,\mu,m^2)>0$ (independent of $\epsilon,L$) and a numerical constant $c>0$ such that:
        \begin{equation}
          \forall t\in (0,t_0(d,\lambda,\mu,m^2)],
          \qquad\|E\|_{L^1\cap L^\infty}\leq c\lambda.
        \end{equation}
	\item[(ii)] (All scales, one small coupling). If $\mu>0$ and with the special choice $m^2=\mu$, 
	there are parameters $\lambda(d,\mu),\mu(d,\lambda)>0$ (independent of $\epsilon,L$) with the following property.
	If either $\lambda\in[0,\lambda(d,\mu)]$ or $\mu\geq \mu(d,\lambda)$, 
	then there is $c(\mu)>0$ with: 
	\begin{equation}
          \forall t> 0,\qquad
          \|E\|_{L^1\cap L^\infty}\leq c(\mu)\lambda,
          \qquad 
          \lim_{\mu\rightarrow\infty}c(\mu)=0.
\end{equation}
\end{itemize} 
\end{lemma}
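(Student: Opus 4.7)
The starting point is the master estimate \eqref{eq_bound_L_infty_norm_interm}, which I rewrite schematically as
\begin{equation*}
f \leq \alpha_t + \beta_t f + c\sum_{i=1}^3 \lambda^i R^{(i)}_t(f),
\end{equation*}
where $f := \|E\|_{L^1\cap L^\infty}$, each $R^{(i)}_t$ is a polynomial with no constant term whose coefficients are sums of negative powers of $m_t = (m^2+1/t)^{1/2}$, and the driving terms $\alpha_t, \beta_t$ also involve negative powers of $m_t$ multiplied by $\lambda\eta_t + \lambda^2\gamma_t + |\mu-m^2|$. The quantities $\eta_t$ and $\gamma_t$ are bounded uniformly in $t$ (by $C_\infty(0)$ and $\|C_\infty^3\|_{L^1}$, respectively), with the sharper quantitative bounds of Lemma~\ref{lemm_counterterm} in reserve for part (ii).

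For part (i), I exploit that as $t\to 0$ one has $m_t\to\infty$, so every coefficient on the right-hand side that carries a negative power of $m_t$ vanishes, while $\eta_t,\gamma_t$ stay bounded. Fixing $(d,\lambda,\mu,m^2)$, I choose $t_0$ small enough that $\beta_t\leq 1/2$, $\alpha_t\leq \lambda/2$, and each coefficient of the polynomials $R^{(i)}_t$ is at most a threshold $\delta>0$ to be fixed. A continuity-in-$\lambda$ bootstrap then converts the self-bounding inequality into a genuine bound: since the measure depends smoothly on $\lambda$ and $f=0$ at $\lambda=0$, the map $\lambda\mapsto f$ is continuous, so as long as $f\leq 1$ the master estimate rearranges (after absorbing $\beta_t f$) to $f \leq \lambda + c\delta\,P(f,\lambda)$ for a polynomial $P$ with no constant term. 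Taking $\delta$ small enough (and if necessary shrinking $t_0$), this forces $f\leq c\lambda$ with a numerical constant $c$, and the bootstrap threshold $f\leq 1$ is never saturated.

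For part (ii), I set $m^2=\mu$, which kills the $|\mu-m^2|$ driving term; the goal is now a bound uniform in $t>0$, so I can no longer exploit $m_t\to\infty$. Instead I use $m_t^2\geq m^2=\mu$ together with the uniform bounds on $\eta_t,\gamma_t$, so that every negative power of $m_t$ is bounded by the corresponding negative power of $\mu^{1/2}$. The same bootstrap as in part~(i) then applies whenever the inequality is contracting, and each of the two alternative hypotheses makes this happen: small $\lambda$ kills the $\lambda^i$ factors, while large $\mu$ kills the $\mu^{-\alpha}$ factors. The resulting bound $f\leq c(\mu)\lambda$ satisfies $c(\mu)\to 0$ as $\mu\to\infty$ because every contribution linear in $\lambda$ to $\alpha_t$ carries at least one extra negative power of $m_t\geq \mu^{1/2}$ beyond the $\mu^0$ scale, combined with the decay of $\eta_t$ from Lemma~\ref{lemm_counterterm}. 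The main obstacle is precisely this bookkeeping in~(ii): isolating a single factor of $\lambda$ and verifying that the residual coefficient genuinely vanishes as $\mu\to\infty$ requires treating the $d=2$ and $d=3$ terms separately (different combinations of $m_t^{-1}$ and $m_t^{-2}$ appear) and combining them with the explicit bounds on $\eta_t$ and $\gamma_t$.
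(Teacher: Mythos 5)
There is a genuine gap, and it sits exactly where you wave at Lemma~\ref{lemm_counterterm}: your bounds $\eta_t\leq C_\infty(0)$ and $\gamma_t\leq\|C_\infty^3\|_{L^1}$ are uniform in $t$ but not in $\epsilon$ --- these are precisely the divergent tadpole and sunset values that the counterterms are built to subtract ($C_\infty(0)\sim\epsilon^{-1}$ for $d=3$ and $\sim\log\epsilon^{-1}$ for $d=2$; $\|C_\infty^3\|_{L^1}\sim\log\epsilon^{-1}$ for $d=3$). With those bounds your $t_0$ and your constant $c$ inevitably depend on $\epsilon$, which defeats the point of the lemma (the statement requires $t_0$ independent of $\epsilon,L$ and a numerical $c$). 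You must use the $\epsilon$-uniform estimates of Lemma~\ref{lemm_counterterm} already in part~(i), and then the picture changes qualitatively: $\eta_t$ is \emph{not} bounded as $t\downarrow 0$ (it diverges like $t^{-1/2}$ in $d=3$ and like $\log t^{-1}$ in $d=2$); only the ratio $\eta_t/m_t$ is controlled, and in $d=3$ it tends to a nonzero numerical constant $c_0$ as $t\downarrow 0$ (see \eqref{eq_bound_counterterms}--\eqref{eq_limit_counterterms_small_t}). Consequently your plan ``choose $t_0$ so that $\alpha_t\leq\lambda/2$ and every polynomial coefficient is $\leq\delta$'' is not realisable in $d=3$: one term of the driving part of \eqref{eq_bound_L_infty_norm_interm} converges to $c_0\lambda$ and cannot be made small by shrinking $t_0$. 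The paper's proof isolates exactly that term, $c\lambda\eta_t m_t^{-1}{\bf 1}_{d=3}$, keeps it as an $O(\lambda)$ contribution, and runs the bootstrap at threshold $2c_0\lambda$; the \emph{remaining} coefficients do vanish as $t\downarrow 0$, which is what fixes $t_0$.

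A secondary problem is the anchor of your continuity argument. In part~(i) the parameters $\mu$ and $m^2$ are unrelated, so at $\lambda=0$ the measure is Gaussian with mass $\mu+1/t$ while $C_t$ carries mass $m^2+1/t$; hence $E\neq 0$ at $\lambda=0$ and your premise ``$f=0$ at $\lambda=0$'' is false (indeed the target $f\leq c\lambda$ cannot hold at $\lambda=0$; it only holds for $t\leq t_0(\lambda)$ with $t_0$ shrinking as $\lambda\to 0$, because of the $|\mu-m^2|$ driving term with no factor of $\lambda$). The paper instead bootstraps in the scale variable $t$, using \eqref{eq_properties_E}: for fixed $\lambda,\epsilon,L$ the map $t\mapsto\|E_t\|_{L^1\cap L^\infty}$ is continuous and tends to $0$ as $t\downarrow 0$, so the bound $\|E_t\|_{L^1\cap L^\infty}\leq 2c_0\lambda$ propagates from $t=0^+$ up to $t_0$ by a first-crossing contradiction. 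Your continuity-in-$\lambda$ route is salvageable essentially only when $\mu=m^2$ (the setting of part~(ii)); for part~(i) you should switch to the $t$-bootstrap or modify the anchor accordingly.
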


In the proof of the lemma, we will use the following qualitative properties of $t\mapsto E=E_t$,
which hold for any fixed $\epsilon>0$ and $L<\infty$:
\begin{equation}
  \begin{split}
    &t\in (0,\infty) \mapsto \|E\|_{L^1\cap L^\infty} \text{ is continuous};
  \\
  &\lim_{t\to 0} \|E\|_{L^1\cap L^\infty} =0.\label{eq_properties_E}
\end{split}
\end{equation}
Indeed, since $\epsilon>0$ and $L<\infty$ are fixed, the above continuity follows from the
continuity of $t\mapsto S(x)$ and $t\mapsto C(x)$ for each $x\in\Lambda_{\epsilon,L}$.
Similarly, the $t\to 0$ limit follows from
$\lim_{t\to 0} C(x) = \lim_{t\to 0}S(x)=0$ for each $x\in\Lambda_{\epsilon,L}$.
This is clear for $C(x) = (-\Delta^\epsilon+m^2+1/t)^{-1}(0,x)$,
and for $S(x)$ it follows, for example, from the
Brascamp--Lieb inequality which shows 
$S(x) \leq S(0) \leq
(-\Delta^\epsilon+a^\epsilon(\lambda,m^2)+\mu+1/t)^{-1}(0,0)$
provided $a^\epsilon(\lambda,m^2)+\mu+1/t > 0$ (similarly to \eqref{e:BLappl}).

\begin{proof}
Let $d\in\{2,3\}$, and let $f_{\lambda,\mu,m^2,t}$ be the polynomial
such that \eqref{eq_bound_L_infty_norm_interm} corresponds to:
\begin{equation}
\|E\|_{L^1\cap L^\infty}
\leq \frac{c\lambda\eta_t}{m_t}{\bf 1}_{d=3} + f_{\lambda,\mu,m^2,t}\big(\|E\|_{L^1\cap L^\infty}).\label{eq_def_f_lambda}
\end{equation}
The reason why the first term is taken out is due to \eqref{eq_limit_counterterms_small_t} below.\\

Consider first item (i). The difference of the counterterms $\eta_t,\gamma_t$ are estimated in Lemma~\ref{lemm_counterterm}:
\begin{equation}  \label{eq_counterterm_bis}
\begin{split}
&\eta_t \leq c\log\Big(1+\frac{1}{m^2t}\Big){\bf 1}_{d=2} + c m\Big(\sqrt{1 + \frac{1}{tm^2}} - 1\Big){\bf 1}_{d=3},\\
&
\gamma_t 
\leq 
\frac{c}{m^2(m^2t+1)} {\bf 1}_{d=2} 
+ 
c\log\Big(1+\frac{1}{m^2t}\Big){\bf 1}_{d=3}.
\end{split}
\end{equation}
In particular, they satisfy, for a numerical constant $c_0>0$:
\begin{align}
\sup_{t>0}\frac{c\eta_t}{m_t}\leq {\bf 1}_{d=2} c(m^2) + c_0{\bf 1}_{d=3} \quad \text{with}\quad \lim_{m\rightarrow\infty} c(m^2)=0,\label{eq_bound_counterterms}
\end{align}
and:
\begin{equation}
\limsup_{t\downarrow0}\frac{c\eta_t}{m_t}\leq 0{\bf 1}_{d=2}+ c_0{\bf 1}_{d=3},\qquad
\lim_{t\downarrow0}\frac{\gamma_t}{m_t}=0.\label{eq_limit_counterterms_small_t}
\end{equation}
On the other hand, note the following elementary property:
\begin{equation}
m^2_t := m^2 + \frac{1}{t} 
\quad\Rightarrow\quad 
\lim_{t\downarrow 0}m_t = +\infty.
\end{equation}

It follows from the above that the (positive) coefficients of the polynomial $f_{\lambda,\mu,m^2,t}$ tend to $0$ as $t\downarrow 0$.
In particular, there is a largest value $t_0(d,\lambda,\mu,m^2)>0$ such that:
\begin{equation}
\forall t\in (0,t_0(d,\lambda,\mu,m^2)],
\qquad
f_{\lambda,\mu,m^2,t}(2c_0\lambda)\leq \frac{c_0\lambda}{2},\label{eq_bound_f_lambda_small_t}
\end{equation}
where $c_0$ is the constant from \eqref{eq_bound_counterterms} (and we assume $c(m^2) \leq c_0$ in $d=2$).
Note also that $x\mapsto f_{\lambda,\mu,m^2,t}(x)$ is increasing in $x\geq 0$
(the coefficients of $f_{\lambda,\mu,m^2,t}$ are positive).
Since $t\in (0,\infty)\mapsto \|E_t\|_{L^1\cap L^\infty}$ is continuous
and $\lim_{t\to\infty} \|E_t\|_{L^1\cap L^\infty} =0$, both by \eqref{eq_properties_E}, 
it follows that $\|E\|_{L^1\cap L^\infty}\leq 2c_0\lambda$ on $(0,t_0(d,\lambda,\mu,m^2)]$. 
Indeed, suppose the contrary, 
and take $t_*\in (0,t_0(d,\lambda,\mu,m^2)]$ with $\|E_{t_*}\|_{L^1\cap L^\infty}> 2c_0\lambda$. 
Then \eqref{eq_bound_f_lambda_small_t} and the definition \eqref{eq_def_f_lambda} of $f_{\lambda,\mu,m^2,t}$ yield a contradiction:
\begin{equation}
2c_0\lambda
< \|E_{t_*}\|_{L^1\cap L^\infty}
\leq 
c_0\lambda + f_{\lambda,\mu,m^2,t_*}(2c_0\lambda)
\leq 
\frac{3c_0\lambda}{2}
< 
2c_0\lambda.
\end{equation}
This concludes the proof of item (i).\\

For item (ii), the reasoning is similar, so we only give a sketch.
As $\mu=m^2$, the right-hand side of \eqref{eq_bound_L_infty_norm_interm} only contains linear combinations of the following terms:
\begin{gather}
  c(\lambda\eta_t+\lambda^2\gamma_t)\Big(\frac{1}{m_t^2}{\bf 1}_{d=2} + \frac{1}{m_t}{\bf 1}_{d=3}\Big),
  \qquad
  c(\lambda\eta_t+\lambda^2\gamma_t)m_t^{-a}, \qquad a\geq 2,
  \nnb
  \text{and} \quad 
c\lambda^n m_t^{-a},\qquad a>0,n\in\{1,2,3\}.
\end{gather}
As for item (i), none of these terms depend on $\epsilon$ or $L$, and $c\lambda^nm_t^{-a}$ is bounded by $c\lambda^n m^{-a} = c\lambda^n \mu^{-a/2}$ for each $a,n$.  
To see what happens for large $\mu$ or small $\lambda$, 
recall from \eqref{eq_bound_counterterms} the bound:
\begin{equation}
\sup_{t>0}\frac{c\eta_t}{m_t}
\leq c(\mu){\bf 1}_{d=2} +c_0
\quad\text{with} \quad \lim_{\mu\rightarrow\infty}c(\mu)=0,
\end{equation}
and from \eqref{eq_counterterm_bis} the corresponding bound for $\gamma_t$:
\begin{equation}
\sup_{t>0}\frac{c\gamma_t}{m_t}\leq c'(\mu) \quad \text{with} \quad \lim_{\mu\rightarrow\infty}c'(\mu)=0.
\end{equation}
It follows that the right-hand side of \eqref{eq_bound_L_infty_norm_interm} is bounded uniformly in $t$ by $c(\mu)\lambda$ 
(for a different $c(\mu)$ that nonetheless vanishes when $\mu$ is large),  
provided either $\lambda$ is smaller than some $\lambda(d,\mu)>0$ or $\mu$ larger than some $\mu(d,\lambda)>0$.
Thus one has an analogue of \eqref{eq_bound_f_lambda_small_t} with $t_0=+\infty$ and
repeating the argument of item (i) concludes the proof of item (ii) and of the lemma.
\end{proof}

\subsection{Estimate of the susceptibility}
Fix a dimension $d\in\{2,3\}$, 
let $\lambda>0$, $\mu\in\R$, $m^2>0$, and $t> 0$. 
We now obtain the bound of Proposition \ref{prop_bound_susceptibility} on the norm $\|E\|_{L^1}$.   
The starting point is again the formula \eqref{eq_bound_BFS_3d} with the splitting \eqref{eq_splitting_counterterms}:
\begin{align}
\big|S(x)-C(x)\big| 
&\leq 
3\lambda\big(\eta_t + \|E\|_{L^\infty}\big)(C\star S)(x) + 6\lambda^2\big|\big(C \star \big(S^3 - {\bf 1}^\epsilon_0  \|C_\infty^3\|_{L^1}\big) \star S\big)(x)\big| \nnb
&\quad 
+ 54\lambda^3 (C\star Q\star S)(x) + |\mu-m^2|\big(C\star S\big)(x).\label{eq_starting_point_susceptibility_3d}
\end{align}
The quantity $\|E\|_{L^\infty}$ is estimated by Lemma~\ref{lemm_bound_L1_Linfty_norm}. 
The $\|\cdot\|_{L^1}$ norm of $E$ then reads:
\begin{align}
\|E\|_{L^1}\leq \frac{3\lambda\big(\eta_t + \|E\|_{L^\infty} + |\mu-m^2|\big)}{m^2_t}\Big(\|E\|_{L^1} + \frac{1}{m^2_t}\Big)
+ \|T_{\lambda^2}\|_{L^1}
+ \|T_{\lambda^3}\|_{L^1}.
\end{align}
\begin{lemma}\label{lemm_first_bound_susceptibility}
There are polynomials $P^{(i)}_{t} = P^{(i)}_{d,t,\mu,m^2,\|E\|_{L^\infty}}$ with $i\in\{1,2,3\}$, 
of degree at most $3$ and with no constant term, 
such that:
\begin{align}
\|E\|_{L^1}
&\leq
\frac{3\lambda\big(\eta_t+\|E\|_{L^\infty} + |\mu-m^2|\big)}{m^4_t} \nnb
&\quad
+ \frac{c\lambda^2}{m^4_t}\Big[
\gamma_t + \frac{\|E\|_{L^\infty} + \|E\|^2_{L^\infty}}{m^2_t} 
+ \frac{1}{m^2_t}{\bf 1}_{d=2} 
+ \Big(\frac{1}{m_t} + \frac{1}{m_t^{1/2}}\Big){\bf 1}_{d=3}
\Big]\nonumber\\
&\quad+
\lambda^3\Big[
\frac{c}{m_t^8}
	 + \Big(
	\frac{c}{m_t^{7}}
	+ \frac{c}{m^5_t}
	\Big){\bf 1}_{d=3}
	\Big]
+ \sum_{i=1}^3\lambda^iP^{(i)}_{t}\big(\|E\|_{L^1}\big).
\end{align}
The coefficients of the polynomials $P^{(i)}_t$ are nonnegative and of the following form:
\begin{equation}
c\frac{\eta_t+ |\mu-m^2|}{m^4_t}\quad \text{or}\quad
c\frac{\gamma_t}{m^2_t}\quad \text{or}\quad
c\|E\|_{L^\infty}^nm_t^{-a},\qquad 
c,a>0;\ n\in\{0,1,2,3\}.
\end{equation}
The lowest value of $a$ is $2$ in dimension 2, and $1/2$ in dimension 3.
\end{lemma}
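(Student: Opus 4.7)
The plan is to start from the pointwise bound \eqref{eq_starting_point_susceptibility_3d} and take the $L^1$ norm of each term on the right-hand side, treating $\|E\|_{L^\infty}$ as a known input quantity controlled by Lemma~\ref{lemm_bound_L1_Linfty_norm}. The key observation is that, for convolutions with $C$, Young's inequality in $L^1$ only uses $\|C\|_{L^1} = 1/m_t^2$, which has clean $t$-dependence, in contrast to the $L^\infty$ estimates of Section~4.3 which bring in factors such as $\|C\|_{L^2}^2$ or $\|C\star C\|_{L^\infty}$ that behave worse as $t\to\infty$.

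For the $\lambda^0$ and $\lambda$ contributions, the required bounds have already been computed in \eqref{eq_L1_norm_lambda_term}: writing $S=C+E$ one gets $\|C\star S\|_{L^1}\leq m_t^{-2}(\|E\|_{L^1}+m_t^{-2})$, which combined with the prefactors $|\mu-m^2|$ and $3\lambda(\eta_t+\|E\|_{L^\infty})$ produces exactly the stated linear-in-$\|E\|_{L^1}$ piece and the constant piece $3\lambda(\eta_t+\|E\|_{L^\infty}+|\mu-m^2|)/m_t^4$.

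For the three $\lambda^2$ contributions, I would recycle the $L^1$ bounds that were in fact already derived in Section~4.3, namely \eqref{eq_L1_norm_first_lambda2_term} for $C\star(S^3-C^3)\star S$, \eqref{eq_L1_norm_second_lambda2_term} for the counterterm-difference term (which yields the $\gamma_t/m_t^2$ contribution to the $P^{(2)}$-coefficient), and \eqref{eq_L1_norm_third_lambda2_term} for $C\star\psi\star S$. The last of these supplies the $d$-dependent $m_t^{-2}\mathbf{1}_{d=2}$ and $(m_t^{-1}+m_t^{-1/2})\mathbf{1}_{d=3}$ factors that appear in the constant term of the stated bound, once the ``constant'' part ($\|E\|_{L^1}$ replaced by $0$) and the ``$\|E\|_{L^1}$-linear'' part are separated, with the latter absorbed into $P^{(2)}_t$ with a coefficient of the form $c\|E\|_{L^\infty}^n m_t^{-a}$. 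The $\lambda^2$ terms carrying $\|E\|_{L^\infty}$ explicitly (from the decomposition \eqref{eq_decomp_Scube_minus_C3}) also go into $P^{(2)}_t$.

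The only genuinely new computation is the $\lambda^3$ term $T_{\lambda^3}=54\lambda^3\,C\star Q\star S$. I would use the Young-type inequality $\|C\star Q\star S\|_{L^1}\leq \|C\|_{L^1}\|Q\|_{L^1}\|S\|_{L^1}=m_t^{-2}\|Q\|_{L^1}(\|E\|_{L^1}+m_t^{-2})$ and then inject the already-established estimate \eqref{eq_bound_Q_L1_final} for $\|Q\|_{L^1}$. Separating in that expansion the $E$-independent leading term, which is $m_t^{-4}\mathbf{1}_{d=2}+m_t^{-1}\mathbf{1}_{d=3}$, from the contributions polynomial in $\|E\|_{L^1\cap L^\infty}$, and using that $\|E\|_{L^1\cap L^\infty}$ upper-bounds both $\|E\|_{L^1}$ and $\|E\|_{L^\infty}$, produces the $\lambda^3$ constant term claimed (namely $c/m_t^8$ in both dimensions plus the additional $d=3$ pieces coming from $m_t^{-2}\cdot m_t^{-1}$ and lower-order corrections) and contributes the remaining terms to $P^{(3)}_t$ with coefficients of the permitted form $c\|E\|_{L^\infty}^n m_t^{-a}$.

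The only subtlety, rather than an obstacle, is bookkeeping: one must check that after substituting \eqref{eq_bound_Q_L1_final} the lowest negative power of $m_t$ appearing in any coefficient of $P^{(i)}_t$ is indeed $m_t^{-2}$ in $d=2$ and $m_t^{-1/2}$ in $d=3$ (the latter coming from the $m_t^{-1/2}$ factor in $\|C\star\psi\|_{L^1}$ for $d=3$ that appears in the $\lambda^2$ estimate \eqref{eq_L1_norm_third_lambda2_term}). Collecting all pieces and matching them against the three coefficient forms allowed in the statement then gives the lemma.
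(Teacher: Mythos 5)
Your proposal follows essentially the same route as the paper: take $L^1$ norms in \eqref{eq_starting_point_susceptibility_3d}, recycle the $L^1$ estimates \eqref{eq_L1_norm_lambda_term}, \eqref{eq_L1_norm_first_lambda2_term}, \eqref{eq_L1_norm_second_lambda2_term}, \eqref{eq_L1_norm_third_lambda2_term} for the orders $\lambda^0$ through $\lambda^2$, and treat $T_{\lambda^3}$ via $\|C\star Q\star S\|_{L^1}\leq \|C\|_{L^1}\|S\|_{L^1}\|Q\|_{L^1}$ together with the bound on $\|Q\|_{L^1}$. The one point to correct is the $\lambda^3$ step: you should inject \eqref{eq_bound_Q_L1} (where $\|E\|_{L^1}$ and $\|E\|_{L^\infty}$ are kept separate, so that $\|E\|_{L^1}$ appears to power at most $2$ and $\|E\|_{L^\infty}$ to power at most $3$), not \eqref{eq_bound_Q_L1_final}. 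The latter is phrased in terms of $\|E\|_{L^1\cap L^\infty}$ up to the fifth power, and expanding $\|E\|_{L^1\cap L^\infty}=\|E\|_{L^1}+\|E\|_{L^\infty}$ and multiplying by $(\|E\|_{L^1}+m_t^{-2})$ would produce a polynomial of degree $6$ in $\|E\|_{L^1}$ with coefficients involving $\|E\|_{L^\infty}^n$ for $n$ up to $5$, which violates the stated degree bound and the stated range $n\in\{0,1,2,3\}$ (also, the observation that $\|E\|_{L^1\cap L^\infty}$ dominates both norms points the wrong way for this substitution). This discrepancy is harmless for the downstream use in Proposition~\ref{prop_bound_susceptibility} (which only needs a degree independent of the parameters and the coefficient structure), but to prove the lemma as stated you need the separated bound \eqref{eq_bound_Q_L1}.
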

\begin{proof}
The norms $\|T_{\lambda^2}\|_{L^1}$ and $\|T_{\lambda^3}\|_{L^1}$ have been estimated in Sections \ref{sec_L1Linfty_bound_T2}--\ref{sec_L1Linfty_bound_T3} respectively. 
Precisely, $\|T_{\lambda^2}\|_{L^1}$ is estimated in \eqref{eq_L1_norm_first_lambda2_term}--\eqref{eq_L1_norm_second_lambda2_term}--\eqref{eq_L1_norm_third_lambda2_term} and, 
bounding $E^n$ by $|E| \|E\|_{L^\infty}^{n-1}$, it reads:
\begin{align}
\lambda^{-2}\|T_{\lambda^2}\|_{L^1}
&\leq 
\frac{c\gamma_t}{m^4_t} 
+ \frac{c(\|E\|_{L^\infty}+1)}{m^6_t}{\bf 1}_{d=2} 
+  c\Big(
\frac{\|E\|_{L^\infty}}{m^5_t} + \frac{1}{m_t^{5/2}} + \frac{1}{m_t^{9/2}}
\Big){\bf 1}_{d=3}
+ \frac{\|E\|_{L^\infty}^2}{m^6_t}
\nonumber\\
&\quad +
c\|E\|_{L^1}\Big[
\frac{\gamma_t}{m^2_t} + \frac{\|E\|_{L^\infty}^2}{m^4_t}
+ \frac{(\|E\|_{L^\infty}+1)}{m^4_t}{\bf 1}_{d=2}
+ \Big(
\frac{1}{m^{1/2}_t} + \frac{1}{m_t^{5/2}} + \frac{\|E\|_{L^\infty}}{m^3_t}
\Big){\bf 1}_{d=3}
\Big] 
\nnb
&\quad+
\frac{c\|E\|^2_{L^1}}{m^2_t}.
\end{align}
On the other hand, 
for $\|T_{\lambda^3}\|_{L^1} = 54\lambda^3 \|C\star S\|_{L^1}\|Q\|_1$, one has:
\begin{equation}
\lambda^{-3}\|T_{\lambda^3}\|_{L^1}
\leq 
\frac{54}{m^2_t}\Big(
\|E\|_{L^1} + \frac{1}{m^2_t}
\Big)\|Q\|_{L^1}.
\end{equation}
Recalling the bound \eqref{eq_bound_Q_L1} on $\|Q\|_{L^1}$ concludes the proof of Lemma \ref{lemm_first_bound_susceptibility}:
\begin{align}
\|Q\|_{L^1}
&\leq 
\frac{\|E\|_{L^\infty}^3}{m^4_t}+ \frac{c\big(1+\|E\|_{L^\infty}+\|E\|_{L^\infty}^2\big)}{m_t^4}{\bf 1}_{d=2}
+ c\Big(
\frac{1}{m_t} + \frac{\|E\|_{L^\infty}}{m_t^2} + \frac{\|E\|_{L^\infty}^2}{m^3_t}
\Big){\bf 1}_{d=3}\nonumber\\
& \quad+ 
c\|E\|_{L^1}\Big[
	\frac{\|E\|_{L^\infty}^3}{m^2_t} 
	+ \|E\|_{L^\infty}^2\Big(
	\frac{1}{m^2_t}{\bf 1}_{d=2} + \frac{c}{m_t}{\bf 1}_{d=3}
	\Big)
\Big]
	+ \|E\|_{L^\infty}^3\|E\|_{L^1}^2.
\end{align}
\end{proof}
Lemma \ref{lemm_first_bound_susceptibility} is sufficient to conclude the proof of Propositions \ref{prop_bound_susceptibility}--\ref{prop_bound_susceptibility_long_time}.

\begin{proof}[Proof of Proposition \ref{prop_bound_susceptibility}]
Let $t_0(d,\lambda,\mu,m^2)>0$ be the scale defined in Lemma~\ref{lemm_bound_L1_Linfty_norm}, 
such that, for some numerical constant $c$:
\begin{equation}
\forall t\in(0,t_0(d,\lambda,\mu,m^2)],\qquad
\|E\|_{L^1\cap L^\infty}\leq c\lambda.\label{eq_bound_E_L1_Linfty}
\end{equation}
Henceforth $t\in (0,t_0(d,\lambda,\mu,m^2)]$. 
Let $G_{d,t,\mu,m^2}(\lambda)+\sum_{i=1}^3\lambda^iP^{(i)}_{d,t,\mu,m^2,c\lambda}\big(\|E\|_{L^1}\big)$ denote the right-hand side of the bound in Lemma~\ref{lemm_first_bound_susceptibility} with $\|E\|_{L^\infty}$ replaced by $c\lambda$:
\begin{equation}
\|E\|_{L^1}\leq G_{d,t,\mu,m^2}(\lambda)+\sum_{i=1}^3\lambda^i P^{(i)}_{d,t,\mu,m^2,c\lambda}\big(\|E\|_{L^1}\big).\label{eq_bound_E_L1_interm}
\end{equation}

Using the already established bound $\|E\|_{L^1}\leq c\lambda$ on the right-hand side
would give a bound on $\|E\|_{L^1}$ in which each term
has a factor $m_t^{-a} \sim t^{a/2}$ as $t\to 0$.
However, the smallest powers obtained in this way (for example, $a=1/2$ in dimension $3$)
are not integrable with respect to $t^{-2}\, dt$ near $t=0$.
On the other hand,
the term in $G_{d,t,\mu,m^2}(\lambda)$ with the most singular $t$-dependence as $t\to 0$
is $\eta_tm_t^{-4}$ which is integrable with respect to $t^{-2} \, dt$.
(Indeed, recall
$m_t^{-4} \sim t^2$, that \eqref{eq_counterterm_bis} shows that $\eta_t$ diverges as $\log t^{-1}$ in $d=2$
and as $t^{-1/2}$ in $d=3$, and that $\gamma_t$ is bounded in $d=2$ whereas $\gamma_t$ diverges as $\log t^{-1}$ in $d=3$). In addition:
\begin{equation}
\sup_{t>0}G_{d,t,\mu,m^2}(\lambda)\leq c(d,\lambda,\mu,m^2).
\end{equation}
Therefore, since the $P^{(i)}$ do not have constant terms,
by iterating the bound \eqref{eq_bound_E_L1_interm} a finite number of times, 
and using the bound $\|E\|_{L^1}\leq c\lambda$ from \eqref{eq_bound_E_L1_Linfty} at the last iteration, it follows that
\begin{equation}
\|E\|_{L^1} \leq p_{d,t,\mu,m^2}(\lambda),\label{eq_def_polynomial_p_t_lambda}
\end{equation}
with the right-hand side above a polynomial in $\lambda$
of degree independent of the parameters $\lambda,\mu,m^2,\epsilon,L$,
and coefficients functions of $\eta_t,\gamma_t, |\mu-m^2|$ times powers of $m_t^{-1}$ such that
\begin{equation}
t\mapsto \frac{p_{d,t,\mu,m^2}(\lambda)}{t^2}
\qquad \text{is integrable around $t=0$.}
\end{equation}
This concludes the proof of Proposition \ref{prop_bound_susceptibility}.
\end{proof}

\begin{proof}[Proof of Proposition \ref{prop_bound_susceptibility_long_time}]
Fix $\mu=m^2$, 
and consider either $\lambda\in[0,\lambda(d,\mu)]$, 
or $\mu\geq \mu(d,\lambda)$, 
where these quantities are defined in item (ii) of Lemma \ref{lemm_bound_L1_Linfty_norm}. 
Then:
\begin{equation}
\forall t>0,\qquad
\|E\|_{L^1\cap L^\infty}\leq c(\mu)\lambda,\qquad
\lim_{\mu\rightarrow\infty}c(\mu) =0.
\end{equation}
The argument is then identical to the above proof of Proposition \ref{prop_bound_susceptibility}, 
replacing $\|E\|_{L^\infty}$ by $c(\mu)\lambda$ for each $t>0$ and writing the right-hand side of the bound in Lemma~\ref{lemm_first_bound_susceptibility} as:
\begin{equation}
\|E\|_{L^1} \leq \tilde G_{d,t,\mu}(\lambda)+\sum_{i=1}^3\lambda^i P^{(i)}_{d,t,\mu,\mu,c(\mu)\lambda}\big(\|E\|_{L^1}\big).
\end{equation}
Then, again by iteration, for each $t>0$:
\begin{equation}
\|E\|_{L^1}\leq \tilde p_{d,t,\mu}(\lambda),
\end{equation}
with $t\mapsto t^{-2}\tilde p_{d,t,\mu}(\lambda)$ integrable around $0$. 
Using the bound of Lemma \ref{lemm_first_bound_susceptibility} together with the estimates on the differences of counterterms from Lemma~\ref{lemm_counterterm}, 
it also follows that $\tilde p_{d,t,\mu}(\lambda)$ has coefficients bounded uniformly on $t$.
It is therefore integrable at $\infty$ with respect to $t^{-2}\, dt$, and
thus on the whole of $(0,\infty)$. 
Moreover, since each term appearing in $\tilde p_{d,t,\mu}(\lambda)$ vanishes when $\lambda\to 0$ or $\mu\to\infty$,
one has, 
up to decreasing $\lambda(d,\mu)$ or increasing $\mu(d,\lambda)$:
\begin{equation}
\sup_{\lambda>0}\sup_{\mu\geq \mu(d,\lambda)}\int_0^\infty\frac{\tilde p_{d,t,\mu}(\lambda)}{t^2}\, dt <\infty,
\qquad 
\sup_{\mu>0}\sup_{\lambda\in [0,\lambda(d,\mu)]}\int_0^\infty\frac{\tilde p_{d,t,\mu}(\lambda)}{t^2}\, dt <\infty.
\end{equation}
The final claim $\limsup_{t\to\infty} \tilde p_{d,t,\mu}(\lambda) = o(\mu^{-1})$ is also clear from the above 
(recall that the differences $\eta_t,\gamma_t$ of the counterterms at time $t$ and $\infty$ vanish by definition in the limit $t\to\infty$,
so only the inverse powers of $m_t$ in $\tilde p_{d,t,\mu}$ matter in the limit).
\end{proof}

\section{Proof of Theorem~\ref{thm:lsiphi4d}}
\label{sec:proofs}

\begin{proof}[Proof of Theorem~\ref{thm:lsiphi4d} (i)]
Fix $m^2=1$ (though the value of $m^2$ is irrelevant).
Let $\bar\chi\in (0,+\infty)$ and assume that $\epsilon$ and $L$ are such that
\begin{equation}
\chi^{\epsilon,L}_{\infty} := \epsilon^d\sum_{x\in\Lambda_{\epsilon,L}}\big<\varphi_0\varphi_x\big>^{\epsilon,L}_{\lambda,\mu} \leq  \bar\chi.\label{eq_hyp_susceptibility_bounded}
\end{equation}
By Theorem~\ref{thm:criterion_continuum}, the log-Sobolev constant $\gamma = \gamma^{\epsilon,L}(\lambda,\mu)$ of the continuum $\varphi^4_d$ measure \eqref{eq_def_continuum_phi4_measure} is bounded by
\begin{equation}
  \frac{1}{\gamma}\leq \int_0^\infty\exp\Big[-2\int_0^t\dot\kappa^{\epsilon,L}_s \, ds\Big]\, dt,
  \qquad
    \dot\kappa^{\epsilon,L}_t := \frac{1}{t} - \frac{\chi_t^{\epsilon,L}}{t^2}.
\end{equation}
To bound $\gamma$, we estimate the susceptibility for small scale $t$ and large scale $t$ in a different way.

For $t$ smaller than the $t_0:= t_0(d,\lambda,\mu,m^2=1)>0$ of Proposition \ref{prop_bound_susceptibility}, one has:
\begin{equation}
\chi^{\epsilon,L}_t \leq \frac{1}{1+\frac{1}{t}} + p_{d,t,\mu,m^2=1}(\lambda),
\qquad
\int_{0}^{t_0}\frac{p_{d,t,\mu,m^2=1}(\lambda)}{t^2}\, dt \leq c_d(\lambda,\mu,m^2=1),
\end{equation}
where the $(1+1/t)^{-1}$ term is simply the Gaussian susceptibility $\|C_t\|_{L^1}$ with mass $m^2=1$. 
This gives:
\begin{equation}
\forall t\in(0,t_0],\qquad  
\dot\kappa^{\epsilon,L}_t
  \geq  \frac{1}{t}-\frac{1}{t^2}\pa{\frac{1}{1+1/t} + p_{d,t,\mu,m^2=1}(\lambda)}
  = \frac{1}{t+1} - \frac{p_{d,t,\mu,m^2=1}(\lambda)}{t^2}.
\end{equation}
and
\begin{equation}
  \int_0^t \dot \kappa^{\epsilon,L}_s \, ds \geq \log(t+1) - c_d(\lambda,\mu,m^2=1).\label{eq_bound_kappa_small_time}
\end{equation}

On the other hand, the second Griffiths inequality implies that $\chi_t^{\epsilon,L}$ is decreasing in $1/t$, i.e., increasing in $t$. 
For large scales $t>t_0$, one thus has $\chi_t^{\epsilon,L} \leq \chi^{\epsilon,L}_\infty \leq \bar\chi$ by \eqref{eq_hyp_susceptibility_bounded}, so that:
\begin{equation}
\forall t>t_0,\qquad  
\dot\kappa_t^{\epsilon,L} \geq \frac{1}{t} - \frac{\bar\chi}{t^2}.
\end{equation}
This bound implies:
\begin{equation}
 \forall t>t_0,\qquad 
 \int_{t_0}^t \dot\kappa^{\epsilon,L}_s \, ds \geq \log t - \log t_0 - \bar\chi/t_0 = \log (t/t_0) - C(\lambda,\mu,\bar\chi).\label{eq_bound_kappa_large_time}
\end{equation}

Combined, \eqref{eq_bound_kappa_small_time} and \eqref{eq_bound_kappa_large_time} give for each $t>0$:
\begin{align}
  \exp\qa{-2\int_0^t \kappa_s^{\epsilon,L}\, ds}
  &\leq
  \exp\qa{-2\int_0^{t_0\wedge t} \kappa_s^{\epsilon,L}\, ds}
    \exp\qa{-2\int_{t_0\wedge t}^t \kappa_s^{\epsilon,L}\, ds}
    \nnb
  &\leq (t\wedge t_0+1)^{-2} (1\vee t/t_0)^{-2} e^{2c_d(\lambda,\mu,m^2=1)+2C(\lambda,\mu,\bar\chi)}
\end{align}
which is integrable on $(0,\infty)$ with a bound that depends only on $(\lambda,\mu,\bar\chi)$.
\end{proof}

\begin{proof}[Proof of Theorem~\ref{thm:lsiphi4d} (ii)]
 Under the assumptions of (ii) ($m^2>0$, $\mu=m^2$ and either $\lambda\in[0,\lambda(d,\mu)]$ or $\mu\geq \mu(d,\lambda)$), we can directly apply Proposition~\ref{prop_bound_susceptibility_long_time} for all $t>0$.
  As in the proof of (i), this gives
  \begin{equation}
    \dot\kappa_t^{\epsilon,L} \geq  \frac{m^2}{m^2t+1}-\frac{\tilde p_{d,t,m^2}(\lambda)}{t^2}
  \end{equation}
  and
  \begin{equation}
    \kappa_t^{\epsilon,L} \geq \log(m^2t+1) - c,\qquad c:= \int_0^\infty\frac{\tilde p_{d,t,m^2}(\lambda)}{t^2}.
  \end{equation}
  Recall from Proposition~\ref{prop_bound_susceptibility_long_time} that $c$ is independent of $\lambda,\mu$ under the assumptions of (ii).
  The log-Sobolev constant $\gamma = \gamma^{\epsilon,L}(\lambda,\mu)$ thus satisfies:
  \begin{equation}
   \frac{1}{\gamma}
  \leq
  \int_0^\infty \exp\qa{-2\int_0^t \kappa_s^{\epsilon,L}\, ds}\ dt
  \leq
  e^{c}\int_0^\infty \frac{1}{(m^2t +1)^2}\, dt
  = \frac{e^{c}}{m^2}.
\end{equation}
  Thus $\gamma\geq e^{-c}m^2 = e^{-c}\mu$.
  Since $\chi^{\epsilon,L}(\lambda,\mu)= \|S_\infty\|_{L^1}$ and $\|S_\infty\|_{L^1} = \mu^{-1} + o(\mu^{-1})$
  by the last item of Proposition~\ref{prop_bound_susceptibility_long_time},
  the bound on $\gamma$ in the other direction was established just below the statement of Theorem~\ref{thm:lsiphi4d}.
  This concludes the proof of item (ii) of Theorem~\ref{thm:lsiphi4d}.
\end{proof}

\appendix
\section{Bounds on diagrams}\label{app_bounds_on_diagram}
\subsection{Moments of Gaussian covariance}
In this section, $m^2>0$ is fixed and we consider:
\begin{equation}
C_{(m)} := \big(-\Delta^{\epsilon} + m^2\big)^{-1},\label{eq_Gaussian_covariance_appendix}
\end{equation}
with the matrix elements of the inverse normalised with respect to the inner product $(u,v)_\epsilon = \epsilon^d \sum_{x\in\Lambda_{\epsilon,L}} u(x)v(x)$. 
In other words, $C_{(m)}$ is given by:
\begin{equation}
\forall x\in\Lambda_{\epsilon,L},\qquad 
C_{(m)}(x) 
:= 
C_{(m)}(0,x)
= 
\frac{1}{L^d}\sum_{k\in\Lambda^*}e^{i k\cdot x}\frac{1}{m^2+\theta(k)}.
\end{equation}
Above, $\Lambda^*$ is the set:
\begin{equation}
  \Lambda^* := \Big\{k\in\frac{2\pi}{ L}\Z^d : \forall i\in\{1,\dots,d\},\ -\frac{\pi}{\epsilon}<k_i\leq  \frac{\pi}{\epsilon}\Big\}\label{eq_def_Lambda*}
  ,
\end{equation}
and $\Delta^\epsilon$ is the Laplacian on the torus $\Lambda_{\epsilon,L}$ defined in \eqref{eq_def_continuous_Laplacian}, 
with eigenvalues $\theta$ given by:
\begin{equation}
\theta(k) 
:= \sum_{k=1}^3 \theta(k_i),
\qquad
\theta(k_i) 
:= \frac{4}{\epsilon^2}\sin^2\Big(\frac{k_i\epsilon}{2}\Big),
\qquad k\in\Lambda^*.\label{eq_def_theta}
\end{equation}
For later use, recall that convexity of the sine function on $[0,\pi/2]$ yields, for each $0\leq u \leq \pi/2$:
\begin{equation}
\frac{2u}{\pi}
\leq 
\sin(u)
\leq u
\quad \Rightarrow\quad 
\frac{4u^2}{\pi^2}
\leq 
\sin^2(\pm u)
\leq u^2.
\end{equation}
In particular:
\begin{equation}
\forall k\in\Lambda^*,
\qquad
\frac{1}{m^2+\theta(k)}
\leq 
\frac{1}{m^2+\frac{4\|k\|_2^2}{\pi^2}}.\label{eq_bound_theta}
\end{equation}
\begin{lemma}\label{lemm_psi_term}
Let $\psi$ be given by:
\begin{equation}
\psi = C_{(m)}^3 - {\bf 1}^\epsilon_0\|C_{(m)}^3\|_{L^1},
\end{equation}
where $C_{(m)}$ is the Gaussian covariance \eqref{eq_Gaussian_covariance_appendix}, and ${\bf 1}^\epsilon_0 := \epsilon^{-d}{\bf 1}_0$. 
In dimension $d=2$, one has: 
\begin{equation}
  \|\psi\|_{L^1}\leq \frac{c}{m^2}.
\end{equation}
When instead $d=3$, for each $k\in\Lambda^*$, 
the Fourier transform $\hat\psi(k)$ of $\psi$ satisfies:
\begin{equation}
  |\hat\psi(k)|\leq c\log\Big(1+\frac{2\|k\|_2}{m\pi}\Big),
\end{equation}
and
\begin{equation}
  \|C_{(m)}\star\psi\|_{L^1}
  \leq \frac{c}{m^{1/2}} + \frac{c}{m^{5/2}},\qquad
  \|C_{(m)}\star\psi\|_{L^2}\leq \frac{c}{m^{1/2}}.
\label{eq_bound_psi}
\end{equation}
\end{lemma}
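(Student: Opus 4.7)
}

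The starting observation is that the subtraction in the definition of $\psi$ is designed so that $\hat\psi$ vanishes at $k=0$: since $\widehat{{\bf 1}^\epsilon_0}(k) = 1$ for every $k\in\Lambda^*$ with our normalisation, one has
\begin{equation}
\hat\psi(k) = \widehat{C_{(m)}^3}(k) - \|C_{(m)}^3\|_{L^1} = \widehat{C_{(m)}^3}(k) - \widehat{C_{(m)}^3}(0).
\end{equation}
In the same spirit, the convolution $C_{(m)} \star \psi$ admits the key rewriting
\begin{equation}\label{eq:plan-key}
(C_{(m)} \star \psi)(x) = \epsilon^d \sum_{y\in\Lambda_{\epsilon,L}} \bigl[C_{(m)}(x-y) - C_{(m)}(x)\bigr]\, C_{(m)}(y)^3,
\end{equation}
obtained by combining $-C_{(m)}(x)\|C_{(m)}^3\|_{L^1} = -C_{(m)}(x)\,\epsilon^d\sum_y C_{(m)}(y)^3$ with the usual convolution sum. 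The whole point of \eqref{eq:plan-key} is that the UV-divergent mass $\|C_{(m)}^3\|_{L^1}$ never appears on the right-hand side; all UV divergence is absorbed into the (finite) increment $C_{(m)}(x-y)-C_{(m)}(x)$, which is small for small $|y|$. This is the lemma's workhorse identity.

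\emph{Dimension $d=2$.} Since $\psi$ differs from $C_{(m)}^3$ only at the origin and both functions are nonnegative when viewed at $x\neq 0$, the triangle inequality gives $\|\psi\|_{L^1}\leq 2\|C_{(m)}^3\|_{L^1}$. I would then bound $\|C_{(m)}^3\|_{L^1}$ directly in Fourier by
\begin{equation}
\|C_{(m)}^3\|_{L^1} = \frac{1}{L^{2d}}\sum_{p,q\in\Lambda^*} \frac{1}{(m^2+\theta(p))(m^2+\theta(q))(m^2+\theta(p+q))} \leq \frac{c}{m^2},
\end{equation}
using the lower bound \eqref{eq_bound_theta} and the fact that the corresponding $d=2$ continuum integral is UV-convergent and scales as $m^{-2}$ by Schwinger parameters or rescaling $p,q\mapsto mp,mq$.

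\emph{Dimension $d=3$: Fourier bound on $\hat\psi$.} Here the sunset diagram $\widehat{C_{(m)}^3}(0)$ diverges logarithmically in the UV, so the subtraction is essential. I would compute
\begin{equation}
\hat\psi(k) = \frac{1}{L^d}\sum_{r\in\Lambda^*} \widehat{C_{(m)}^2}(r)\,\bigl[\hat C_{(m)}(k-r) - \hat C_{(m)}(-r)\bigr],
\end{equation}
use $|\hat C_{(m)}(k-r) - \hat C_{(m)}(-r)| \leq \min(2\hat C_{(m)}(r),\,|k|\cdot\|\nabla\hat C_{(m)}\|_\infty\text{-type})$, and split the $r$-sum into $|r|\leq|k|$ and $|r|>|k|$. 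The bubble $\widehat{C_{(m)}^2}(r)$ is bounded by $c/(m+|r|)$ in $d=3$; combining this with the above increment bound gives the claimed $\log(1+2\|k\|_2/(m\pi))$, with the logarithm arising from the marginal UV regime $|r|\in[|k|,1/\epsilon]$ where both factors are only borderline integrable.

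\emph{Dimension $d=3$: $L^2$ and $L^1$ bounds.} The $L^2$ bound is the cleaner of the two: by Plancherel,
\begin{equation}
\|C_{(m)}\star\psi\|_{L^2}^2 = \frac{1}{L^d}\sum_k |\hat C_{(m)}(k)|^2 |\hat\psi(k)|^2 \leq \frac{c}{L^d}\sum_k \frac{\log^2(1+\|k\|_2/m)}{(m^2+\|k\|_2^2)^2},
\end{equation}
and a standard radial integral in $d=3$ bounds the right-hand side by $c/m$, giving $\|C_{(m)}\star\psi\|_{L^2}\leq c/m^{1/2}$. For $\|C_{(m)}\star\psi\|_{L^1}$, I would use the identity \eqref{eq:plan-key}, interchange summation order, and bound
\begin{equation}
\|C_{(m)}\star\psi\|_{L^1} \leq \epsilon^d\sum_{y} C_{(m)}(y)^3\, \|C_{(m)}(\cdot - y) - C_{(m)}\|_{L^1},
\end{equation}
together with $\|C_{(m)}(\cdot-y) - C_{(m)}\|_{L^1} \leq \min(c|y|/m,\, 2/m^2)$. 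Interpolating between these two regimes and integrating against $C_{(m)}^3$, where the short-distance part $|y|\lesssim 1/m$ contributes the $c/m^{5/2}$ term (from $C^3\sim |y|^{-3}$) and the ultralocal $y=0$ contribution together with the far-away exponential tail contribute $c/m^{1/2}$, yields the claim.

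\emph{Main obstacle.} The delicate step is the $d=3$ Fourier bound on $\hat\psi$: the logarithmic subtraction has to be implemented on the sunset diagram in a way that survives both the IR regime (where $m$ cuts things off) and the marginal UV regime (where the unsubtracted diagram diverges). All subsequent bounds on $C_{(m)}\star\psi$ reduce, via Plancherel or the identity \eqref{eq:plan-key}, to this Fourier estimate or to standard heat-kernel/propagator estimates controlled by Lemma \ref{lemm_bound_first_two_moments_C}.
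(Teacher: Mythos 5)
Your proposal is correct in substance and, for the $d=2$ bound, the $d=3$ Fourier estimate on $\hat\psi$, and the $L^2$ bound, it follows essentially the paper's route: the paper also bounds $\|\psi\|_{L^1}\le 2\|C_{(m)}^3\|_{L^1}$ and rescales the sunset integral, implements the subtraction by writing the difference of outer propagators as $\int_0^1\partial_\alpha(\cdots)\,d\alpha$ (your ``gradient of $\hat C$ times bubble'' bound, with the bubble estimate $c/(m+\|r\|_2)$ appearing there as $I(\alpha k)\le c/(1+\|\alpha k\|_2)$ via Schwinger parameters), and gets the $L^2$ bound by Plancherel. The genuine divergence is in the $d=3$ bound on $\|C_{(m)}\star\psi\|_{L^1}$: you work in position space via the identity $(C\star\psi)(x)=\epsilon^d\sum_y[C(x-y)-C(x)]C(y)^3$ and the increment bound $\|C(\cdot-y)-C\|_{L^1}\le\min(c|y|\,\|\nabla C\|_{L^1},\,2/m^2)$, whereas the paper stays in Fourier space, writing $\|F\|_{L^1}\le\|U^{-1}\|_{L^2}\|UF\|_{L^2}$ with the weight $U(x)=L^2(d-\sum_j\cos(2\pi x_j/L))$, so that multiplication by $U$ becomes two discrete derivatives of $\hat C_{(m)}\hat\psi$ in the Fourier variable. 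Your route is arguably more transparent and in fact yields the stronger bound $c/m^2$ (which implies the stated $c/m^{1/2}+c/m^{5/2}$; note your attribution of $c/m^{5/2}$ to the region $|y|\lesssim 1/m$ is off — that region contributes $c/m^2$), but it requires pointwise lattice Green's function inputs uniform in $\epsilon$ and $L$, namely $C(y)\le c(|y|+\epsilon)^{-1}$ with exponential mass decay and $\|\nabla C\|_{L^1}\le c/m$ (standard via the heat-kernel representation $C=\int_0^\infty e^{-m^2t}e^{t\Delta^\epsilon}\,dt$ and $\|\nabla e^{t\Delta^\epsilon}\|_{L^1\to L^1}\le c t^{-1/2}$, but not proved in the paper, which deliberately avoids all position-space estimates); the paper's route instead pays by having to control two derivatives of $\hat\psi$, i.e.\ $|\partial^\alpha\hat\psi(k)|\le c_\alpha m^{-|\alpha|}(1+4\|k\|_2^2/(m^2\pi^2))^{-|\alpha|/2}$. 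Either way the argument closes; just be aware that your sketch leaves those heat-kernel/Green's-function facts as unproved standard input.
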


\begin{proof}
Let us start with the bound on $\|\psi\|_{L^1}$ in dimension $d=2$, which is easiest. 
Indeed, one has in that case:
\begin{equation}
|\psi| 
\leq 
C^3_{(m)} + {\bf 1}^\epsilon_0\|C^3_{(m)}\|_{L^1}
\quad \Rightarrow\quad 
\|\psi\|_{L^1}\leq 2\|C^3_{(m)}\|_{L^1}.
\end{equation}
One has then:
\begin{align}
\|C^3_{(m)}\|_{L^1} = \frac{1}{L^{2d}}\sum_{k,p\in\Lambda^*}\frac{1}{m^2 + \theta(p)}
\cdot \frac{1}{m^2 + \theta(q)}
\cdot\frac{1}{m^2+\theta(p+q)} .
\end{align}
Using convexity to bound the sines in the definition \eqref{eq_def_theta} of $\theta$ and comparing sums to integrals yields:
\begin{align}
\|C^3\|_{L^1}
&\leq 
\int_{[-\frac{\pi}{\epsilon}-1,\frac{\pi}{\epsilon}]^4}dp\,dq\,  \frac{1}{m^2 + \frac{4\|p\|_2^2}{\pi^2}}
\cdot \frac{1}{m^2 + \frac{4\|q\|_2^2}{\pi^2}}
\cdot \frac{1}{m^2 + \frac{4\|p + q\|_2^2}{\pi^2}}\nnb
&\leq 
\frac{\pi^4}{16m^2}\int_{\R^4}dp\,dq\,  \frac{1}{1 + \|p\|_2^2}
\cdot \frac{1}{1 + \|q\|_2^2}
\cdot \frac{1}{1 + \|p + q\|_2^2}
\leq \frac{c}{m^2},
\label{eq_bound_psi_L1_d2}
\end{align}
where we used that the integral on the last line is finite which
follows, e.g., from the Cauchy-Schwarz inequality.
This concludes the estimate of $\|\psi\|_{L^1}$ in the $d=2$ case.\\

Consider now $d=3$. 
The Fourier transform $\hat\psi(k)$ reads, for each $k\in \Lambda^*$ (defined in \eqref{eq_def_Lambda*}):
\begin{align}
\hat \psi(k) &= \frac{1}{L^{6}}\sum_{p,q\in\Lambda^*} \frac{1}{m^2 + \theta(p)}
\cdot \frac{1}{m^2 + \theta(q)}
\cdot\Big[\frac{1}{m^2+\theta(k-p-q)} 
- \frac{1}{m^2 + \theta(-p-q)}\Big]\nonumber\\
&= 
-\int_0^1 d\alpha\, \frac{1}{L^{6}}\sum_{p,q\in\Lambda^*} \frac{1}{m^2 + \theta(p)}
\cdot \frac{1}{m^2 + \theta(q)}\nonumber\\
&\hspace{2.5cm}
\cdot\frac{4\pi \epsilon^{-1} \sum_{i=1}^3 k_i\sin \Big(\frac{\pi\epsilon(\alpha k_i - p_i -q_i)}{2}\Big)\cos \Big(\frac{\pi\epsilon(\alpha k_i - p_i -q_i)}{2}\Big)}{\big[m^2 + \theta(\alpha k - p - q)\big]^2}. 
\label{eq_hatpsidef}
\end{align}
In the following it is convenient to extend the above definition to $k\in\R^3$.
Using convexity to bound the sines and comparing sums to integrals, 
then bounding these integrals by integrals on $\R^d$ and changing variables $p,q\to 2p/m\pi,2q/m\pi$ as before, 
we find that $\hat\psi(k)$ satisfies:
\begin{equation}
\forall k\in\R^3,\qquad
|\hat\psi(k)|
\leq 
c\hat f(2k/m\pi),
\end{equation}
with:
\begin{equation}
\hat f(k) 
:= 
\|k\|_2\int_0^1 d\alpha\int_{\R^{6}}dp\,dq\,  \frac{1}{1+\|p\|_2^2}
\cdot\frac{1}{1+\|q\|_2^2}
\cdot \frac{\|\alpha k - p - q\|_2}{\big[1+\|\alpha k - p - q\|_2^2\big]^2}.
\end{equation}
Let us estimate $\hat f$. 
We claim that, for each $\tilde k\in\R^3$,
\begin{equation}
\int_{\R^3} \frac{1}{1+\|p\|^2}\frac{\|\tilde k - p\|_2}{\big[1+\|\tilde k-p\|_2^2\big]^2}\, dp
= 
\int_{\R^3} \frac{1}{1+\|\tilde k+ p\|^2}\frac{\|p\|_2}{\big[1+\|p\|_2^2\big]^2}\, dp 
\leq 
\frac{c}{1+\|\tilde k\|_2^2}.\label{eq_bound_derivative_varphi_term}
\end{equation}
Indeed, to see this, split the integral between the regions $\{\|p\|_2\leq \|\tilde k\|_2/2\}$, $\{\|p\|_2\geq 3\|\tilde k\|_2/2\}$, and the complement of these. 
In the first two regions, $\|\tilde k+p\|_2\geq \|\tilde k\|_2/2$. 
In the complement, bound $\|p\|_2$ by $\|\tilde k\|_2$ to find:
\begin{align}
\int_{\|\tilde k\|_2/2\leq\|p\|_2\leq 3\|\tilde k\|_2/2}\frac{1}{1+\|\tilde k+ p\|^2}\frac{\|p\|_2}{\big[1+\|p\|_2^2\big]^2} \, dp
&\leq 
\frac{c\|\tilde k\|_2^3}{\big[1+\|\tilde k\|_2^2\big]^2}\int_{1/2}^{3/2} \frac{1}{1+\|\tilde k \|_2^2 u^2}\, du \nonumber\\
&\leq \frac{c}{1+\|\tilde k\|_2^2}.
\end{align}
Using the above bound with $\tilde k =\alpha k - q$ yields:
\begin{align}
\hat f(k)
&\leq 
c\|k\|_2\int_0^1d\alpha\int_{\R^3}\frac{1}{1+\|q\|_2^2}\cdot \frac{1}{1+\|\alpha k -q\|_2^2}\, dq\nonumber\\
& =:\|k\|_2  \int_0^1 d\alpha\,  I(\alpha k). \label{eq_def_I_appendix}
\end{align}
For each $\alpha\in[0,1]$, $I(\alpha k)$ reads:
\begin{align}
I(\alpha k) 
&= 
c\int_{0}^\infty\int_{0}^\infty ds_1\, ds_2 \int_{\R^3} \exp\Big[-s_1\big(1+\|q\|_2^2\big) - s_2\big(1+\|\alpha k -q\|_2^2\big)\Big]\, dq\nonumber\\
& \leq 
c \int_{0}^\infty\int_{0}^\infty ds_1\, ds_2\,  e^{-s_1-s_2}\int_{\R^3}\exp\Big[-(s_1+s_2)\Big\|q-\frac{s_2 \alpha k}{s_1+s_2}\Big\|^2_2 - \frac{s_1s_2 \|\alpha k\|^2_2}{s_1+s_2}\Big]\, dq\nonumber\\
&  \leq 
c \int_{0}^\infty\int_{0}^\infty ds_1\, ds_2\,  \frac{1}{(s_1+s_2)^{3/2}}e^{-s_1-s_2}\exp\Big[- \frac{s_1s_2 \|\alpha k\|^2_2}{s_1+s_2}\Big].
\label{eq_bound_I_appendix}
\end{align}
Changing variable from $s_1$ to $s=s_1/s_2$ yields:
\begin{align}
I(\alpha k) 
&\leq
c \int_0^\infty\int_0^\infty \exp\Big[-s_2\Big(\frac{s\|\alpha k\|_2^2}{s+1}+s+1\Big)\Big]\frac{1}{\sqrt{s_2}(s+1)^{3/2}}\, ds\, ds_2\nnb
& \leq 
c \int_0^\infty \frac{ds}{(s+1) \sqrt{(s+1)^2 + s\|\alpha k\|_2^2}}\nonumber\\
& \leq 
\frac{c}{1+\|\alpha k\|_2}
\label{eq_bound_I_appendix2}
\end{align}
using
\begin{equation}
\sqrt{(s+1)^2+s\|\alpha k\|_2^2}\geq s+1 + \sqrt{s}\|\alpha k\|_2\geq \sqrt{s}(1+\|\alpha k\|_2).
\end{equation}
Thus:
\begin{equation}
\hat f(k)\leq c \log (1+\|k\|_2)\quad \Rightarrow\quad |\hat\psi(k)|\leq c \log\Big(1+\frac{2\|k\|_2}{m\pi}\Big).
\end{equation}
We now prove the bound on $\|C_{(m)}\star\psi\|_{L^1\cap L^2}$. 
One has, first for the $L^2$ norm:
\begin{equation}
\|C_{(m)}\star\psi\|_{L^2}
\leq 
c \bigg(\int_{\R^3}\frac{1}{\big(m^2+\frac{4\|k\|_2^2}{\pi^2}\big)^2}\log\Big(1+\frac{2\|k\|_2}{m\pi}\Big)^2dk\bigg)^{1/2}
\leq 
\frac{c}{m^{1/2}}.
\end{equation}
By similar computations as in the bound on $\hat\psi$, one can prove $|\partial^\alpha\hat \psi(k)| \leq c_\alpha m^{-|\alpha|} (1+\frac{4\|k\|_2^2}{m^2\pi^2})^{-|\alpha|/2}$ for any multi-index $\alpha\in\N^{3}$ and any $k\in\R^3$. 
In addition, for each $j\in\{1,\dots,d\}$ and each $F\in\R^{\Lambda_{\epsilon,L}}$,
\begin{align}
L^2\Big[1-\cos\Big(\frac{2\pi x_j}{L}\Big)\Big]F(x) 
&=
\frac{L^2}{2L^d}\sum_{k\in\Lambda^*}e^{ik\cdot x}\hat F(k)\Big[2-e^{\frac{2i\pi }{L} {\bf e}_j\cdot x} - e^{-\frac{2i\pi }{L} {\bf e}_j\cdot x}\Big]\nnb
&= 
-\frac{1}{2L^d}\sum_{k\in\Lambda^*}e^{ik\cdot x}L^2\Big[\hat F\Big(k+\frac{2\pi}{L}{\bf e}_j\Big) + \hat F\Big(k-\frac{2\pi}{L}{\bf e}_j\Big) - 2\hat F(k)\Big].\nnb
&= 
-\frac{1}{L^d}\sum_{k\in\Lambda^*}e^{ik\cdot x} 
\int_{0}^1 (1-\alpha)\partial^2_j \hat F\Big(k+\frac{2\pi\alpha}{L}{\bf e}_j\Big)\, d\alpha. 
\end{align}
where, for the last line above, the Fourier transform $\hat F$ is extended from $\Lambda^*$ to $\R^3$ as in \eqref{eq_hatpsidef} for $\hat\psi$.
Letting $U(x):= L^2\big(d-\sum_{j=1}^d\cos(2\pi x_j/L)\big)$ for $x\in\Lambda_{\epsilon,L}$, 
noticing that $\|U^{-1}\|_{L^2}\leq c$ and using the Cauchy-Schwarz inequality, this implies
\begin{align}
  \|C_{(m)}\star\psi\|_{L^1}
  &\leq \|U^{-1}\|_{L^2} \|U\cdot C_{(m)}\star \psi\|_{L^2}
    \nnb
  &\leq \|U^{-1}\|_{L^2} 
    \bigg[
    \bigg(\frac{1}{L^{d}}\sum_{k\in\Lambda^*}\int_0^1 \Delta \big(\hat C_{(m)}\hat\psi\big)\Big(k+\frac{2\pi \alpha{\bf e}_j}{L}\Big)\, d\alpha\bigg)^2\,
    \bigg]^{1/2}
        \nnb  
  &\leq c
      \bigg[
      \int_{\R^3}\Big(\frac{1}{\big(m^2+\frac{4\|k\|_2^2}{\pi^2}\big)^2}+ \frac{1}{\big(m^2+\frac{4\|k\|_2^2}{\pi^2}\big)^3}\Big)\Big(1+\log\Big(1+\frac{2\|k\|_2}{m\pi}\Big)^2\Big)\, dk \nnb
  &\hspace{4cm}+
    \int_{\R^3}\Big(\frac{1}{\big(m^2+\frac{4\|k\|_2^2}{\pi^2}\big)^3}+ \frac{1}{\big(m^2+\frac{4\|k\|_2^2}{\pi^2}\big)^4}\Big)\, dk\bigg]^{1/2}\nonumber\\
  &\leq \frac{c}{m^{1/2}} + \frac{c}{m^{3/2}} + \frac{c}{m^{5/2}} 
    \leq 
    \frac{c}{m^{1/2}} + \frac{c}{m^{5/2}},
\end{align}
where the Laplace operator $\Delta$ on the second line above acts on the Fourier variable.
\end{proof}
\begin{lemma}\label{lemm_k=0_Q_term}
The term $\|C_{(m)}(C_{(m)}^2\star C_{(m)}^2)\|_{L^1}$ satisfies:
\begin{equation}
\|C_{(m)}(C_{(m)}^2\star C_{(m)}^2)\|_{L^1}
\leq 
\frac{c}{m^4}{\bf 1}_{d=2} + \frac{c}{m}{\bf 1}_{d=3}.
\end{equation}
\begin{proof}
The proof is similar to that of Lemma~\ref{lemm_psi_term}, 
so we only give a sketch. Using the Fourier representation, one has:
\begin{equation}
\|C_{(m)}(C_{(m)}^2\star C_{(m)}^2)\|_{L^1} 
= 
\frac{1}{L^{d}}\sum_{k\in\Lambda^*} \frac{1}{m^2+\theta(k)}\Big(\frac{1}{L^{d}}\sum_{p\in\Lambda^*}\frac{1}{m^2+\theta(p)}\frac{1}{m^2+\theta(k-p)}\Big)^2.
\end{equation}
The idea is again to compare sums to integrals, and then change variables to factor out the mass dependence: 
\begin{equation}
\|C_{(m)}(C_{(m)}^2\star C_{(m)}^2)\|_{L^1} 
\leq 
\frac{cm^{3d}}{m^{10}} \int_{\R^d} \frac{dk}{1+\|k\|_2^2}
\Big(
\int_{\R^d}dp\,  \frac{1}{1+ \|p\|_2^2}\frac{1}{1+\|k-p\|^2_2}
\Big)^2.
\end{equation}
It remains to check that the last integral is finite.
Notice however that the integral in the square above is exactly $I(k)$, 
as defined in \eqref{eq_def_I_appendix} in the three-dimensional case. 
It follows from \eqref{eq_bound_I_appendix2} that $I(k)\leq C/(1+\|k\|_2)$ in dimension $3$. 
The same arguments also work in dimension $2$, which concludes the proof.
\end{proof}
\end{lemma}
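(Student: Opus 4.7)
The plan is to reduce the lattice bound to a scale-free Euclidean integral via Fourier analysis and a rescaling argument. First I would write, using Parseval with the normalisation $(u,v)_\epsilon = \epsilon^d\sum_x u(x) v(x)$ and the Fourier representation $\widehat{C_{(m)}}(k) = (m^2+\theta(k))^{-1}$ on $\Lambda^*$,
\begin{equation}
\|C_{(m)}(C_{(m)}^2\star C_{(m)}^2)\|_{L^1}
= \frac{1}{L^d}\sum_{k\in\Lambda^*}\frac{1}{m^2+\theta(k)} \Big(\frac{1}{L^d}\sum_{p\in\Lambda^*}\frac{1}{(m^2+\theta(p))(m^2+\theta(k-p))}\Big)^{\!2},
\end{equation}
since the integrand on the left is pointwise nonnegative and so its $L^1$ norm coincides with the inner product of $C_{(m)}$ with $C_{(m)}^2\star C_{(m)}^2$.

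Next, the elementary bound $\theta(k)\geq 4\|k\|_2^2/\pi^2$ from \eqref{eq_bound_theta} combined with the nonnegativity of the summands allows me to replace the Riemann sums over the Brillouin zone by integrals over all of $\R^d$, producing
\begin{equation}
\|C_{(m)}(C_{(m)}^2\star C_{(m)}^2)\|_{L^1}
\leq c\int_{\R^d}\frac{dk}{m^2+\|k\|_2^2} \Big(\int_{\R^d}\frac{dp}{(m^2+\|p\|_2^2)(m^2+\|k-p\|_2^2)}\Big)^{\!2}.
\end{equation}
The change of variables $k\mapsto mk$, $p\mapsto mp$ yields a Jacobian $m^{3d}$ and pulls a factor $m^{-10}$ out of the denominators, producing the overall prefactor $m^{3d-10}$, which is precisely $m^{-4}$ when $d=2$ and $m^{-1}$ when $d=3$, as required.

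It remains to verify that the dimensionless integral $\int_{\R^d}(1+\|k\|_2^2)^{-1} I(k)^2\, dk$ is finite, where $I(k) := \int_{\R^d}(1+\|p\|_2^2)^{-1}(1+\|k-p\|_2^2)^{-1}\, dp$ is exactly the function introduced in \eqref{eq_def_I_appendix} in the proof of Lemma~\ref{lemm_psi_term}. The estimate $I(k)\leq c/(1+\|k\|_2)$ in $d=3$ was established in \eqref{eq_bound_I_appendix2} by a Feynman-parameter and Gaussian computation, and the same scheme gives $I(k)\leq c\log(2+\|k\|_2)/(1+\|k\|_2^2)$ in $d=2$. In either dimension the integrand decays at least like $(1+\|k\|_2)^{-4}$ (with an extra logarithm in $d=2$), which is integrable. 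The only substantive step is the decay estimate on $I(k)$, which is already available from the previous lemma, so no genuine obstacle remains beyond confirming that the sum-to-integral comparison is uniform in $\epsilon$ and $L$, a point handled by \eqref{eq_bound_theta}.
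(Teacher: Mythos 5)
Your proposal is correct and follows essentially the same route as the paper: Parseval/Fourier representation, sum-to-integral comparison via \eqref{eq_bound_theta}, the rescaling producing the prefactor $m^{3d-10}$, and the identification of the inner integral with $I(k)$ from Lemma~\ref{lemm_psi_term}. Your explicit bound $I(k)\leq c\log(2+\|k\|_2)/(1+\|k\|_2^2)$ in $d=2$ is a correct instance of what the paper summarises as ``the same arguments also work in dimension $2$'', and it does supply the decay needed there (the trivial bound $I(k)\leq I(0)$ would not suffice against the merely quadratic decay of the outer factor in two dimensions).
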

\subsection{Differences of the counterterms}
In this section (and as in Section \ref{sec:suscept-phi4d}), 
$C_t$ is the covariance $C_t = (-\Delta^\epsilon + m^2+1/t)^{-1}$ for $t>0$.
\begin{lemma}\label{lemm_counterterm}
In dimension $d\in\{2,3\}$, the difference of the counterterms at parameters $t>0$ and $+\infty$ satisfies:
\begin{equation}
\begin{split}
&0
\leq 
C_\infty(0) - C_t(0) =:\eta_t \leq c\log\Big(1+\frac{1}{m^2t}\Big){\bf 1}_{d=2} + c m\Big(\sqrt{1 + \frac{1}{tm^2}} - 1\Big){\bf 1}_{d=3},\\
&0
\leq 
\|C^3_\infty\|_{L^1} - \|C^3_t\|_{L^1} := \gamma_t 
\leq 
\frac{c}{m^2(m^2t+1)} {\bf 1}_{d=2} 
+ 
c\log\Big(1+\frac{1}{m^2t}\Big){\bf 1}_{d=3}.
\end{split}
\end{equation}
\end{lemma}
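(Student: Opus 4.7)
I would express both $\eta_t$ and $\gamma_t$ as integrals of explicit $s$-derivatives over the scale $s\in[t,+\infty)$, which simultaneously yields the asserted positivity (since $m_s^2 = m^2+1/s$ is decreasing in $s$, so $s\mapsto C_s$ is monotone increasing as an operator) and reduces everything to bounds on $\|C_s^2\|_{L^1}$ already proved in Lemma~\ref{lemm_bound_first_two_moments_C}.

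Writing $A_s(k) := m^2+1/s+\theta(k)$ and using the Fourier representation $C_s(0) = L^{-d}\sum_{k\in\Lambda^*} A_s(k)^{-1}$, Parseval gives
\begin{equation}
  \eta_t = \int_t^\infty \partial_s C_s(0) \, ds = \int_t^\infty \frac{\|C_s^2\|_{L^1}}{s^2}\, ds.
\end{equation}
For $\gamma_t$, the analogous identity $\|C_s^3\|_{L^1} = L^{-2d}\sum_{p,q}(A_s(p)A_s(q)A_s(p+q))^{-1}$, combined with the three-fold symmetry between $p$, $q$ and $p+q$ in this sum under $(p,q)\mapsto(p+q,-q)$, yields
\begin{equation}
  \partial_s \|C_s^3\|_{L^1} = \frac{3}{s^2L^d}\sum_p \frac{\widehat{C_s^2}(p)}{A_s(p)^2} \leq \frac{3\|C_s^2\|_{L^1}^2}{s^2},
\end{equation}
using $|\widehat{C_s^2}(p)|\leq \widehat{C_s^2}(0) = \|C_s^2\|_{L^1}$ (valid because $C_s^2\geq 0$) and $L^{-d}\sum_p A_s(p)^{-2} = \|C_s^2\|_{L^1}$ (Parseval once more).

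The substitution $u=1/s$ transforms the two resulting integrals into $\int_0^{1/t}\|C_{1/u}^2\|_{L^1}^\alpha\, du$ with $\alpha\in\{1,2\}$. Inserting the uniform bounds $\|C_s^2\|_{L^1}\leq c\,m_s^{-2}$ for $d=2$ and $\|C_s^2\|_{L^1}\leq c\,m_s^{-1}$ for $d=3$ from Lemma~\ref{lemm_bound_first_two_moments_C} reduces each of the four cases to an elementary one-dimensional integral; for instance, $\gamma_t$ in $d=2$ becomes $c\int_0^{1/t}(m^2+u)^{-2}\, du = c/[m^2(m^2 t+1)]$, and the other three cases are strictly analogous and produce exactly the claimed right-hand sides. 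There is no serious obstacle: the only point to monitor is that the cited bounds on $\|C_s^2\|_{L^1}$ are uniform in $\epsilon$ and $L$, which immediately propagates to uniformity of the bounds on $\eta_t$ and $\gamma_t$.
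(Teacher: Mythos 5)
Your proposal is correct, and its backbone is the same as the paper's: both proofs write the difference of counterterms as an integral of a mass-derivative (your integral over $s\in[t,\infty)$ is the paper's interpolation $\alpha\mapsto m^2+\alpha/t$ after the substitution $u=1/s$), reduce to one-loop quantities, and finish with the same elementary one-dimensional integrals, which you have evaluated correctly in all four cases. The one genuine point of departure is the treatment of $\gamma_t$: the paper keeps the full two-loop sum, compares it to the continuum integral $\int_{\R^{2d}}\frac{1}{1+\|p\|_2^2}\frac{1}{(1+\|q\|_2^2)^2}\frac{1}{1+\|p+q\|_2^2}\,dp\,dq$ and invokes its finiteness before extracting the factor $(m^2+\alpha/t)^{-(4-d)}$, whereas you factorise the two-loop sum via $|\widehat{C_s^2}(p)|\le\widehat{C_s^2}(0)=\|C_s^2\|_{L^1}$ (using pointwise positivity of $C_s$) together with Parseval, so that everything is controlled by $\|C_s^2\|_{L^1}^2$ and hence by Lemma~\ref{lemm_bound_first_two_moments_C} alone. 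Your route buys a small economy -- no separate two-loop integral needs to be estimated, and the dependence on the earlier moment bounds is made explicit -- at the cost of the Fourier-side manipulations ($\partial_s\|C_s^3\|_{L^1}$ via the threefold symmetry, the H\"older step); both yield the same powers of $m_s$ and therefore the identical right-hand sides. One cosmetic remark: the positivity of $\gamma_t$ is cleanest from the pointwise monotonicity $\partial_s C_s(x)=s^{-2}(C_s\star C_s)(x)\ge 0$ rather than from operator monotonicity, since monotonicity in the operator order does not by itself control $\epsilon^d\sum_x C_s(x)^3$; your derivative formula gives this at once.
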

\begin{proof}
The proof is similar to and uses notations of the proof of Lemma \ref{lemm_psi_term}. 
The key argument consists in passing from discrete sums to integrals, using the relation $f(1)-f(0)=\int_0^1 f'(t)\, dt$. 
For $C_\infty(0) - C_t(0)$, one has:
\begin{align}
C_\infty(0) - C_t(0) 
&= 
\frac{1}{L^d}\sum_{k\in \Lambda^*} \Big[\frac{1}{m^2 + \theta(k)} - \frac{1}{m^2 + \frac{1}{t}+ \theta(k)}\Big]\nonumber\\
&= 
\frac{1}{t}\int_0^1 d\alpha\, \frac{1}{L^d}\sum_{k\in\Lambda^*} \frac{1}{\big(m^2 + \frac{\alpha}{t} + \theta(k)\big)^2}\nonumber\\
&\leq 
\frac{c}{t}\int_0^1d\alpha \int_{\R^d}\frac{dk}{\big(m^2+\frac{\alpha}{t} + \frac{4\|k\|_2^2}{\pi^2}\big)^2}.
\end{align}
Changing variables from $k$ to $p=2k(\pi \sqrt{m^2+\alpha/t})^{-1}$ yields:
\begin{align}
C_\infty(0) - C_t(0) 
&\leq 
\int_0^1 \frac{1}{t\big(m^2 + \frac{\alpha}{t}\big)^{2-\frac{d}{2}}}\,  d\alpha 
\times \int_{\R^d} \frac{1}{(1+\|p\|_2^2)^2}\, dp\nonumber\\
&\leq 
c \int_0^1 \frac{1}{t\big(m^2 + \frac{\alpha}{t}\big)^{2-\frac{d}{2}}} \, d\alpha\nonumber \\
&= 
c\log\Big(1+\frac{1}{m^2t}\Big){\bf 1}_{d=2} + c m\Big(\sqrt{1 + \frac{1}{tm^2}} - 1\Big){\bf 1}_{d=3}.
\end{align}
For the $C^3$ counterterm $\gamma_t$, one has similarly:
\begin{align}
\|C^3_\infty\|_{L^1} - \|C^3_t\|_{L^1} 
&= 
\frac{1}{L^{2d}}\sum_{p,q \in \Lambda^*}\bigg[\frac{1}{m^2 + \theta(p)}\frac{1}{m^2 + \theta(q)}\frac{1}{m^2 + \theta(p+q)}\nonumber\\ 
&\hspace{3cm}- \frac{1}{m^2+\frac{1}{t}+ \theta(p)}\frac{1}{m^2+\frac{1}{t}+ \theta(q)}\frac{1}{m^2+\frac{1}{t}+ \theta(p+q)} \bigg]\nonumber\\
&= 
\frac{3}{t}\int_0^1 d\alpha\, \frac{1}{L^{2d}}\sum_{p,q \in \Lambda^*}\frac{1}{m^2+\frac{\alpha}{t}+ \theta(p)}\frac{1}{\big(m^2+\frac{\alpha}{t}+ \theta(q)\big)^2}\frac{1}{m^2+\frac{\alpha}{t}+ \theta(p+q)}.
\end{align}
Bounding $\theta$ by convexity and comparing discrete sums to integrals, then changing variables
from $p,q$ to $2p(\pi \sqrt{m^2+\alpha/t})^{-1}$ and $ 2q(\pi \sqrt{m^2+\alpha/t})^{-1}$ yields:
\begin{align}
\|C^3_\infty\|_{L^1} - \|C^3_t\|_{L^1} 
&\leq 
\frac{3c}{t}\int_0^1d\alpha \int_{\R^{2d}}\frac{1}{m^2+\frac{\alpha}{t}+ \frac{4\|p\|_2^2}{\pi^2}}\frac{1}{\big(m^2+\frac{\alpha}{t}+ \frac{4\|q\|_2^2}{\pi^2}\big)^2}\frac{1}{m^2+\frac{\alpha}{t}+\frac{4\|p+q\|_2^2}{\pi^2}}\, dp\,dq\nonumber\\
&= 
\int_0^1 d\alpha\, \frac{3c\pi^{2d}}{t\big(m^2 +\frac{\alpha}{t}\big)^{4-d}} \int_{\R^6} \frac{1}{1+\|p\|_2^2}\frac{1}{\big(1+\|q\|_2^2\big)^2}\frac{1}{1+\|p+q\|_2^2}\, dp\,dq.
\end{align}
Since the integral over $p,q$ is bounded, therefore:
\begin{align}
\|C^3_\infty\|_{L^1} - \|C^3_t\|_{L^1} 
&\leq 
c\int_0^1 \frac{1}{t\big(m^2 +\frac{\alpha}{t}\big)^{4-d}}\, d\alpha \nonumber\\
&\leq \frac{c}{m^2(m^2t+1)} {\bf 1}_{d=2} 
+ 
c\log\Big(1+\frac{1}{m^2t}\Big){\bf 1}_{d=3}.
\end{align}
\end{proof}

\section*{Acknowledgements}

We thank T.\ Bodineau and J.\ Ding for helpful discussions, as well as G.\ Ferrando for help with the Fourier computations.
This work was supported by the European Research Council under the European Union's Horizon 2020 research and innovation programme
(grant agreement No.~851682 SPINRG). R.B.\ also acknowledges the hospitality of the Department of Mathematics at McGill University
where part of this work was carried out.

\bibliography{all}
\bibliographystyle{plain}
\end{document}